\documentclass{article}
\usepackage{graphicx} % Required for inserting images
\usepackage{algorithm}
\usepackage{algorithmic}
\usepackage[utf8]{inputenc}
\usepackage{natbib}
\usepackage{amsmath,amssymb}
\usepackage{bbm}
\DeclareMathOperator*{\argmax}{arg\,max}

%\usepackage[]{algorithm2e}
%\RestyleAlgo{boxruled}
 
%\usepackage[usenames,dvipsnames]{xcolor}

%\usepackage{tcolorbox}
%\tcbuselibrary{skins,breakable}
%\tcbset{enhanced jigsaw}

%\colorlet{YellowOrange}{RawSienna}

\usepackage{hyperref}
\usepackage{cleveref}

\usepackage{amsthm}
\newtheorem{lemma}{Lemma}

\newtheorem{definition}{Definition}

\newtheorem{theorem}{Theorem}

\usepackage{fullpage}

\usepackage{graphicx}

\usepackage{comment}

\usepackage[shortlabels]{enumitem}

\newcommand{\E}{\mathbb{E}}

\usepackage[shortlabels]{enumitem}

\newcommand{\ignore}[1]{}

\newcommand{\hide}[1]{}

%New Commands from Edited Weights

\usepackage[shortlabels]{enumitem}

\newcommand{\coloneqq}{:=}

\usepackage[shortlabels]{enumitem}

\newcommand{\A}{\mathcal{A}}

\usepackage{caption}
\usepackage{xcolor}
\include{bib}
\usepackage{amsmath}
\usepackage{xcolor}
\usepackage{natbib}
\usepackage{mathtools}
\usepackage{authblk}

\newtheorem{assumption}{Assumption}
\newtheorem{remark}{Remark}

\newcommand{\cA}{\mathcal{A}}
\newcommand{\cD}{\mathcal{D}}
\newcommand{\cP}{\mathcal{P}}
\newcommand{\cU}{\mathcal{U}}
\newcommand{\cY}{\mathcal{Y}}
\newcommand{\cR}{\mathcal{R}}
\newcommand{\cT}{\mathcal{T}}

\makeatletter
\newenvironment{game}[1][htb]{%
    \renewcommand{\ALG@name}{Game}% Update algorithm name
   \begin{algorithm}[#1]%
  }{\end{algorithm}}
\makeatother

\title{Repeated Contracting with Multiple Non-Myopic Agents:\\ Policy Regret and Limited Liability}
\author{Natalie Collina} 
\author{Varun Gupta} 
\author{Aaron Roth}
\affil{Department of Computer and Information Sciences, University of Pennsylvania}
\begin{document}
\begin{titlepage}
\maketitle

\begin{abstract}
We study a repeated contracting setting in which a Principal adaptively chooses amongst $k$ Agents at each of $T$ rounds. The Agents are non-myopic, and so a mechanism for the Principal induces a $T$-round extensive form game amongst the Agents. We give several results aimed at understanding an under-explored aspect of contract theory --- the game induced when choosing an Agent to contract with. First, we show that this game admits a pure-strategy \emph{non-responsive} equilibrium amongst the Agents --- informally an equilibrium in which the Agent's actions depend on the history of realized states of nature, but not on the history of each other's actions, and so avoids the complexities of collusion and threats. Next, we show that if the Principal selects Agents using a \emph{monotone} bandit algorithm, then for any concave contract, in any such equilibrium, the Principal obtains no regret to contracting with the best Agent in hindsight --- not just given their realized actions, but also to the counterfactual world in which they had offered a guaranteed $T$-round contract to the best Agent in hindsight, which would have induced a different sequence of actions. Finally, we show that if the Principal selects Agents using a monotone bandit algorithm which guarantees no swap-regret, then the Principal can additionally offer only limited liability contracts (in which the Agent never needs to pay the Principal) while getting no-regret to the counterfactual world in which she offered a linear contract to the best Agent in hindsight --- despite the fact that linear contracts are not limited liability. We instantiate this theorem by demonstrating the existence of a monotone no swap-regret bandit algorithm, which to our knowledge has not previously appeared in the literature. 

\end{abstract}
\setcounter{page}{0}
 \thispagestyle{empty}
 \clearpage
\tableofcontents
 \thispagestyle{empty} \setcounter{page}{0}
 \clearpage

\end{titlepage}

\section{Introduction}

Principal-Agent problems arise when a Principal seeks to hire a third party (an Agent) to perform some costly task on their behalf. Typically the action of the Agent (e.g. the effort that they put in) is unobservable to the Principal --- all that the Principal observes is the resulting outcome, which is a function both of the Agent's action and some unknown \emph{state of nature}. The design problem for the Principal is to find a contract (a mapping from outcomes to payments to the Agent) that maximizes their expected utility less the payment, when the Agent best responds so as to maximize their expected payment, less their cost. Classical contract theory concerns itself with the design of the optimal contract for the Principal, in a static setting with a single Agent, in which the Principal and the Agent share a common prior belief about the distribution over states of nature. This standard setting is well understood, and \emph{linear contracts} (which pay the Agent a constant fraction of the Principal's realized value) turn out to be robustly optimal in several different senses \citep{carroll2015robustness,dutting19,dutting2022combinatorial,castiglioni2021bayesian}. 

Of course, each of the standard assumptions can be subject to challenge. For example, how does the Principal choose the Agent to hire? In many markets there is an aspect of competition between Agents that is unmodelled in a standard Principal/Agent setting. What if the distribution over states of nature deviates from the expectations of the Principal and Agent---i.e. what if their beliefs are mis-specified? In this case, what guarantees can be given to the Principal?  And finally, many contracts (like linear contracts) can require \emph{negative payments} to the Agent in certain realizations --- i.e. they can obligate the Agent to pay the Principal, which is generally unimplementable in practice. Contracts that never obligate the Agent to pay the Principal are said to satisfy the \emph{limited liability} condition, and are more realistic, but complicate the design of optimal contracts.  These are the problems that we focus on in this paper.

As a running example, consider a college endowment that wants to hire a fund manager to invest it's capital. The goal of the endowment is to maximize the return on its investment (minus the fees it pays).  There are $k$ fund managers that the endowment can choose between, each of which must be paid a fee, as well as a costless outside option (say, investing the money in an index fund). A linear contract in this setting pays the Agent a fixed percentage (say 10\%) of its excess performance above that of the index fund. Since this might be negative (if the Agent does not outperform the index fund), a linear contract is not limited liability. 

In our running example, the problem that the Principal needs to solve in the most complex setting considered in our paper is to choose a fund manager amongst the $k$ candidates, or the outside option in each of $T$ sequential rounds. In each round, the Principal offers the chosen fund manager a contract, which must be limited liability. In fixing a method for doing this, the Principal is inducing a $T$ round extensive form game amongst the $k$ fund managers, who are not myopic, and who will play in an equilibrium of this game, given their beliefs about the distribution on the sequence of states of nature to be realized.  We wish to guarantee to the Principal that in the equilibrium of this induced game, their cumulative utility is at least what it would have been had they selected the best of the fund managers in hindsight and offered them a (non-limited-liability) linear contract, guaranteed for all $T$ rounds, no matter what the sequence of states of nature turn out to be (i.e. even if the beliefs of the fund managers turn out to be incorrect). Note that what we are aiming for is a form of \emph{policy regret}, in that in the actual play-out, the fund managers are playing an equilibrium strategy in a $k$ player extensive form game, but in the counterfactual that we use as our benchmark, the fund managers are guaranteed a fixed contract over all $T$ rounds, and so are simply best responding given their beliefs.

We now lay out our results in a bit more detail.

We study a repeated Principal/Agent problem over $T$ rounds. We model the action space for the Agent as being the set of real valued \emph{effort levels} $a \in [0,1]$, associated with a convex increasing cost function $c(a)$. In each round $t$:

\begin{enumerate}
    \item The Principal chooses amongst a set of $k$ Agents as well as a costless \emph{outside option}. If the Principal chooses one of the $k$ Agents, he offers them a concave \emph{contract} $v^t:\mathbb{R}\rightarrow \mathbb{R}$ (a mapping from Principal payoffs to Agent payments) for that round. 
    \item The chosen Agent then chooses an action $a^t$. Then, a state of nature $y^t$ is realized according to an arbitrary (possibly adversarial) process, which results in payoff $r(a^t,y^t)$ for the Principal and a payment $v^t(r(a^t,y^t))$ for the Agent. In that round, the Principal gets utility $r(a^t,y^t) - v^t(r(a^t,y^t))$ and the Agent gets utility $v^t(r(a^t,y^t)) - c(a^t)$. We assume throughout that for each state of nature $y$, the Principal's reward $r(a,y)$ is a non-decreasing linear function of the Agent's effort level $a$ (and can depend on the state of nature $y$ in arbitrary ways).
\end{enumerate}
At the end of each round, the Principal as well as the Agents observe $y^t$ and $r(a^t,y^t)$, which can be used to inform the Principal's choice of Agents in the following rounds. counterfactuals of the sort ``what \emph{would} the reward have been had the Principal chosen a different Agent'' are not observed. A strategy for an Agent in this game is a mapping from transcripts (which records both the realized states of nature and actions of the other players at all previous rounds) and current-round-contracts to effort levels. The Agents each have (possibly mis-specified) beliefs over the distribution of the state of nature at each round (which can also be arbitrary functions of the transcript). An algorithm for the Principal maps their observed history to a (distribution over) Agents to choose at each round, as well as a contract to offer. Once the Principal fixes an algorithm, this induces a $T$-round extensive form game amongst the Agents. We first show the existence of a particularly simple kind of equilibrium in this game.

\paragraph{Non-Responsive Pure Strategy  Equilibrium Existence} Extensive form games can have many equilibria that support a wide range of behavior through threats between players, which limits the predictive power of Nash equilibria on their own. We say that an equilibrium is \emph{non-responsive} if each player's strategy at each round depends on the realized sequence of states of nature, but \emph{not} on the history of actions of the other players. In Section \ref{sec:existence}, we prove a structural result: the game we study always admits a non-responsive pure-strategy Nash equilibrium. With this existence result in hand, in all of our subsequent results, we will give guarantees to the Principal under the assumption that the Agents are playing a non-responsive pure-strategy  equilibrium of the extensive form game induced by the Principal's algorithm. 

\paragraph{Regret to the Best Fixed Agent}
For our first algorithmic result (in Section \ref{sec:external}), we fix any contract $v(r)$ which is concave in $r$ (i.e. offers the Agent payment that is linear or has diminishing marginal returns of the Principal's reward), and give an extremely simple algorithm for the Principal. At each round $t$, the Principal selects an Agent using a monotone bandit algorithm with an external regret guarantee and offers contract $v$. The regret guarantee of the bandit algorithm immediately implies that the Principal has diminishing regret to the \emph{realized rewards} of the best Agent in hindsight. However, these can differ from the counterfactual rewards of giving the best Agent in hindsight a fixed contract, because the realized rewards are generated by actions the Agent played in a strategic game. What we show is that in the game induced by this mechanism, the rewards realized by each Agent are only higher than what they would have been counterfactually, had that Agent been given a fixed contract $v$ for all $T$ rounds. Thus the Principal has no regret to our counterfactual benchmark as well. This result hinges on the  \emph{monotonicity} of the bandit algorithm, which means  that unilaterally increasing the reward of a single Agent in a single round only increases the probability that this Agent is chosen in future rounds. Intuitively, the result is that in the competitive game induced by the mechanism, Agents are incentivized to spend more effort  than they would have otherwise, because the payoff they get is not just their one-round payoff, but also their continuation payoff over future rounds, which is increasing with current-round effort level. Monotone bandit algorithms  with external regret guarantees are already known to exist \cite{babaioff2009characterizing}.

\paragraph{Swap Regret and Limited Liability Contracts}
Our first result gives a mechanism that uses the same contract $v$ that its regret benchmark is defined with respect to. Thus if $v$ is not a limited liability contract, our mechanism also is not limited liability. This is problematic, because \emph{linear contracts} which are known to be robustly optimal in a variety of settings \citep{carroll2015robustness,dutting19,dutting2022combinatorial,castiglioni2021bayesian} and so would make ideal benchmarks, are not limited liability. Our second result (Section \ref{sec:swap}) gives an incentive-preserving reduction to limited liability contracts for any linear contract $v$. We use a construction similar to the one \cite{chassang2013calibrated} used to give a limited-liability mechanism for a single Agent. Informally, each Agent can open a ``tab'' with our Principal, which records the cumulative payment owed from the Principal to the agent after each round. The Principal will not pay Agents throughout, and instead pays each Agent in a lump sum at the end of the game according to their tab --- except that if the final tab indicates a negative payment from the Principal to that Agent. In this case, the Agent's debt is forgiven and the Principal pays them zero. A property of this setup is that if at the end of $T$ rounds, the linear contract $v$ would have resulted in a net transfer from the Principal to the Agent, the cumulative payments made under the tab are identical to what they would have been under the linear contract; if on the other hand the linear contract $v$ would have resulted in a net transfer from the Agent to the Principal, the cumulative payments made under the tab deviate from that of the linear contract. The tab enforces limited liability, but serves to limit the downside of the Agent, and so can be incentive-distorting. We show that if the Principal chooses Agents using an algorithm that has diminishing \emph{swap regret} (to the realized rewards), then the cumulative payment to the Agent under the tab is guaranteed to be non-negative (up to diminishing regret terms). This means that (again, up to low order regret terms) the utility for the Agent of playing any strategy is the same as it would have been had the mechanism been offering the linear contract $v$ rather than its limited liability proxy. We show that if the bandit algorithm used by the Principal obtains no swap-regret and is monotone (once again in the sense that unilaterally increasing the reward of a single Agent in a single round only increases the probability of choosing that Agent in future rounds), then in equilibrium, the Agents are incentivized to expend more effort than they would have had they been offered a guaranteed linear contract for all $T$ rounds, and hence the no (swap) regret guarantee of the Principal's mechanism over the realized rewards extends to a no (swap) regret guarantee to the counterfactual rewards that would be been obtained had the Principal picked the best fixed Agent in hindsight. We show that standard no swap-regret algorithms like \cite{blum2007external} are \emph{not} monotone in this sense --- but that the recent no swap-regret algorithms given by \cite{dagan2023external} and \cite{peng2023fast} are. These algorithms operate in the full information setting, but we give a monotonicity preserving reduction to the bandit setting. Thus we can use these algorithms in our reduction to promise the Principal no counterfactual (swap) regret to choosing the best Agent in hindsight (and offering them a fixed linear contract), while needing to offer only linear liability contracts.

\subsection{Related Work}

The origins of contract theory date to \cite{holmstrom1979moral} and \cite{grossman1992analysis}. This literature is too large to survey --- we refer the reader to \cite{bolton2004contract} for a textbook introduction.

Our work fits into a recent literature studying learning problems in contract theory. \cite{ho2014adaptive} study online contract design by approaching it as a bandit problem in which an unknown distribution over myopic Agents arrive and respond to an offered  contract by optimizing their expected utility with respect to a common prior. \cite{cohen2022learning} extend this to the case in which the Agent has bounded risk aversion. 
\cite{zhu2022sample} revisit this problem and characterize the sample complexity of online contract design in general (with nearly matching upper and lower bounds) and for the special case of linear contracts (with exactly matching upper and lower bounds). \cite{camara2020mechanisms} and \cite{collina2023efficient} study the problem of repeatedly contracting with a single non-myopic Agent in a setting in which there are no common prior assumptions, and the single Agent is assumed to follow various learning-theoretic behavioural assumptions (that their empirical play history satisfies various strengthenings of swap regret conditions). We distinguish ourselves from this line of work by studying the game induced by competition amongst multiple non-myopic Agents.

\cite{chassang2013calibrated} studies a repeated interaction between a Principal and a long-lived Agent, with a focus on the \emph{limited liability} problem. In the context of a single Agent and an outside option (and in a fully observable setting), \cite{chassang2013calibrated} gives a similar ``debt scheme''  to simulate the incentives of a linear contract using limited liability contracts. Our work extends \cite{chassang2013calibrated} in several ways: we study a competitive setting amongst many Agents, operate in a partially observable setting (we do not observe the outcome of Agents we did not select), and give counterfactual policy regret bounds (rather than regret bounds to the realized rewards). 

Our limited liability results hold for linear contracts, which have become focal in the contract theory literature because of their strong robustness properties.  \cite{carroll2015robustness} shows that linear contracts are minimax optimal for a Principal who knows \emph{some} but not \emph{all} of the Agent's actions.  \cite{dutting19} shows that if the Principal only knows the costs and expected rewards for each Agent action, then linear contracts are minimax optimal over the set of all reward distributions with the given expectation. \cite{dutting2022combinatorial} extends this robustness result to a combinatorial setting. \cite{dutting19} also show linear contracts are bounded approximations to optimal contracts, where the approximation factor can be bounded in terms of various quantities (e.g. the number of Agent actions, or the ratio of the largest to smallest reward, or the ratio of the largest to smallest cost, etc). 
\cite{castiglioni2021bayesian} studies linear contracts in Bayesian settings (when the Principal knows a distribution over types from which the Agent's type is drawn) and studies how well linear contracts can approximate optimal contracts.

Our results leverage the fact that in our mechanisms, the Principal is using a \emph{monotone} selection algorithm. The same kind of monotonicity was used by \cite{babaioff2009characterizing} in the context of using bandit algorithms to give truthful auctions. \cite{babaioff2009characterizing} used a monotone algorithm for obtaining no \emph{external} regret. We give a monotone algorithm for obtaining no \emph{swap-regret} by giving a  monotonicity preserving reduction to the full information swap regret algorithms recently given by \cite{dagan2023external,peng2023fast}, which we note are monotone. Prior swap regret algorithms (e.g. \cite{blum2007external}) even in the full information setting are (perhaps surprisingly) \emph{not} monotone, as we show.

\section{Preliminaries}
We study a multi-round, multi-Agent contracting problem. To formalize the model, we first focus on what happens in a single round.

\subsection{Single Round Contracting Model}
In each round, the Principal would would like to incentivize some Agent to complete a task. The Principal will offer a fixed contract to one of $k$ Agents, or pick the ``outside option" of not contracting at all. The task has multiple possible outcomes, each of which corresponds to a different value for the Principal. 

\begin{definition}[Outcome $o$ and Return $r(o)$]\label{def:outcome}
There are $m$ possible outcomes $o_{1}...o_{m}$, each of which is associated with a return for the Principal $r(o) \in [-1,1]$.
\end{definition}

\begin{definition}[Contract] \label{def:contract}
    A contract is a payment rule $v: \mathbb{R} \rightarrow \mathbb{R}$ mapping the Principal's return to a payment made to an Agent. 
\end{definition}
Throughout this paper, we will make the following natural assumption on contracts --- that higher reward to the Principal corresponds to higher payment to the Agent, and that payments exhibit (weakly) decreasing marginal returns.
\begin{assumption}
Contracts are concave and increasing functions.
\end{assumption}

If selected by the Principal, an Agent responds to the contract with an action $a$ in action set $\cA$. 

\begin{definition}[Agent Action and Cost Function] 
The action space for each Agent, $\cA = [0, 1]$, is all ``effort levels" that they can exert towards the Principal's task. Each effort level has an associated cost, possibly distinct for each Agent, given by a cost function $c_i: \cA \to \mathbb{R}$. %One of the actions available to the Agents is to exert zero effort, which has cost zero and gives outcome value zero. 
\end{definition}

Throughout the paper, we will assume that Agent cost increases in effort and exhibits (weakly) increasing marginal cost to effort. Note that increasing marginal cost is consistent with an assumption of decreasing marginal \emph{return} to effort. 

\begin{assumption}
    Agent cost functions $c_i(\cdot)$ are increasing and convex in effort level.
\end{assumption}

There is also some hidden state of nature $y \in \mathcal{Y}$ in each round. The probability distribution over outcomes for an Agent is dictated by their chosen action and the state of nature. 

\begin{definition}[Outcome Distribution $\cD_{i, y, a}$] \label{def:outcome-dist}
There are $k \cdot m$ functions $p_{i,o}: \cA \times \cY \rightarrow [0,1]$, each associated with a particular Agent $i$ and outcome $o$, such that for all $y \in \cY$, $p_{i,o}(\cdot, y)$ is affine in its first argument. 
We will represent these functions as $p_{i, o} (a, y) = m_{i, o, y} \cdot a + b_{i, o, y}$.
Furthermore, for all Agents $i$, effort levels $a$, and all states of nature $y$, we have that \[ \sum_{j=1}^{m}p_{i,o_{j}}(a, y) = 1. \]

We assume that the probability that outcome $o$ occurs for Agent $i$ given effort level $a$ and state of nature $y$ is $p_{i,o}(a, y)$. 
By the constraints given above on the functions $p$, this is always a valid probability distribution.
We will refer to this distribution as $\cD_{i, y, a}$. Additionally, we will refer to the randomness used to sample from this distribution as $\cR_i$, so that fixing $a, y$, and $\cR_i$ also fixes the sampled outcome $o$.
% Given some effort level $a$ and state of nature $y$, let $\bar{o}(a,y)$ be a $m$-dimensional vector where $\bar{o}(a,y)[i] = p_{i,o_{i},y}(a)$. Furthermore, let $r(\bar{o}(a,y))$ denote the expected reward given this distribution over outcomes, and let $v(r(\bar{o}(a,y)))$ denote the expected payment to the Agent given this distribution over outcomes.
Let $\bar{r}(i, a, y) = \E_{o \sim \cD_{i, y, a}} [r(o)]$ be the expected reward under the distribution of outcomes $\cD_{i, y, a}$. Let $\bar{v}(i, a, y) = \E_{o \sim \cD_{i, y, a}} [ v(r(o)) ]$ be the expected payment to Agent $i$ over the same distribution of outcomes.
\end{definition}

\begin{remark}
If the returns $r(\cdot)$ and the outcome distribution functions $p_{i, o}(\cdot)$ are such that effort level zero leads to zero return with probability one and the cost associated with zero effort is also zero, then we can model the ability of each Agent to leave the interaction at any time by playing effort level zero for all remaining time steps. This also ensures that Agents always have non-negative expected utility (no matter what their beliefs are) under their best response.
\end{remark}

After a selected Agent takes an action which induces an outcome $o$ and corresponding Principal reward $r(o)$, the Principal pays them according to the contract $v(r(o))$. 

\begin{definition}[Expected Single Round Agent Utility $u_{i}$]\label{def:single_round_utility_Agent}
The expected single round utility of Agent $i$ given that they are selected is $u_{i}(a,y) = \bar{v}(i, a, y) - c_i(a)$.
\end{definition}

The Principal's total utility that round is their reward minus this payment.

\begin{definition}[Expected Single Round Principal Utility $u_{p}$]\label{def:single_round_utility_Principal}
The expected single round utility of the Principal given that they select Agent $i$ playing action $a$ is $u_{p}(i,a,y) = \bar{r}(i, a,y) - \bar{v}(i, a,y)$. The single round utility if they select the outside option is $u_{p}(\emptyset, a, y) = 0$ for all $a \in \cA, y \in \cY$.
\end{definition}

Finally, we will make the natural assumption that if the Agent expends more effort, this only improves their expected return.

\begin{assumption}[Monotone Relationship between Effort and Expected Return] \label{assumption:monotone-effort-return}
For any Agent $i$, any $y \in \cY$ and any two effort levels $a_1, a_2$ such that $a_1 \geq a_2,$ the outcome values satisfy $\E_{o \sim \cD{i, y, a_1}}[r(o)] \geq \E_{o \sim \cD_{i, y, a_2}}[r(o)]$.
% $r(\bar{o}(a_1,y))] \geq r(\bar{o}(a_2,y))$
\end{assumption}

%After the selected Agent takes their action, all Agents observe which Agent was selected, what action the Agent took, and what the state of nature was that round.  After selection, the Principal also publishes the result of the coin flips for any new randomness they utilized that round. 

\subsection{Repeated Contracting Game}
In the repeated version of this game, the Principal can select which Agent to contract with each round based on the history of play. However, the Principal cannot observe the history of actions $\A$ or the states of nature $y$. The Principal only observes the history of realizations of the returns of the selected Agents\footnote{This is a crucial aspect of contract theory --- that the Principal must contract on the returns, and not directly on the actions of the Agent.}. 
Thus, we can think of the state of the Principal at round $t$ as being determined by the history of his selections and of returns realized from rounds $1$ to $t-1$.

\begin{definition}[Principal Transcript]
We call the history of the selected Agents in each round and their realized returns to the Principal the Principal transcript, denoted $\pi^{p, 1:t} = [(i^{t'}, r(a^{t'}, y^{t'}))]_{t' \in [t]}$.
The space of all possible Principal transcripts is denoted $\Pi^p$.
\end{definition}

Using only this transcript, the Principal must decide which Agent to contract with in each round. It will be useful to explicitly refer to the randomness used by the Principal in their selection mechanism when choosing an Agent in each round $t$. We refer to this randomness as $\cR^t$, and view the randomized selection of the Principal as a deterministic selection mechanism that takes $\cR^t$ as input.

\begin{definition}[Principal Selection Mechanism $f$] \label{def:Principal-selection-mech}
In each round $t$, the Principal uses a deterministic selection mechanism $f^t: \Pi^p \times \cR \to [k] \cup \emptyset$ to select an Agent (or the outside option). We will refer to the entire mechanism, consisting of all $T$ components, as $f$. The mechanism $f$ is known to the Agents. 
\end{definition}

In order to model Agent decisions, it is not enough that they know the Principal selection mechanism --- they must also have beliefs over the future realizations of the states of nature.  We allow each Agent to have a (potentially incorrect) belief about the future states of nature at each round, which may be a function of the previously realized states of nature. While Agent beliefs may differ, we assume that all Agents are aware of the beliefs of all other Agents at that round. The Principal does not have any knowledge of these beliefs.

%While we provide constraints on the way in which the true state of nature impacts the return at each round, we do not assume that these states of nature come from a distribution, and allow for adversarial sequences. Furthermore, we make very few assumptions on the beliefs that each Agent has about the state of nature; each Agent's prior over future states of nature can be any function of past nature sequences. These functions can be completely different for each Agent, and the Principal does not have access to any of them. 

\begin{definition}[Belief functions $B_i(\cdot)$] \label{def:belief-func}
Each Agent has a function $B^{t}_i: \cY^{t-1} \to \Delta(\cY^{T-t})$ for any $t \in [T]$  which is a mapping from any $t$-length prefix of states of nature to a distribution over the remaining $T-t$ states of nature. This represents Agent $i$'s belief at the start of round $t$ of the remaining $T-t$ states of nature given the $t-1$-length prefix of states of nature. Let $B^1_i = B^1_i(\emptyset)$ be the initial belief of Agent $i$ of the possible length $T$ sequence of natures states.  
Additionally, for any $t_2 \geq t_1 \geq t$, we let $B_i^{t, t_1:t_2}(\cdot)$ represent Agent $i$'s belief in round $t$ of the states of nature in between rounds $t_1$ and $t_2$, given some input of previous states of nature. Furthermore, $B^{t'}(y^{1:t})$ represents the set of beliefs of all Agents at round $t'$ given $y^{1:t}$. Agents are aware of each others' belief functions. 
\end{definition}

\begin{assumption}[Full Support Beliefs]
    We assume that the beliefs of each agent have full support over all possible sequences of states of nature, that is $\forall i \in [k], B^{1}_{i}$ has full support. 
    We make no assumption on the magnitude of the minimum probability, simply that it is non-zero on each sequence.
\end{assumption}

The selection function $f$ and set of Agent belief functions $B$ together define a $k$-player extensive form game amongst the Agents, in which each Agent aims to maximize their utility over the entire time horizon of the interaction. 

The Agents are more informed than the Principal, so they \emph{do} have access to the history of states of nature and the actions of other Agents. Therefore, their state at each round is more complex, and determined by the following transcript. 

\begin{definition}[Agent Transcript]
We call the history of the realized states of nature $y^{1}...y^{t-1}$, the history of the random bits used to sample the selected Agent's outcome $\cR^1_{i^1}, \ldots, \cR^{t-1}_{i^{t-1}}$, the history random bits used by the Principal $\cR^1, \ldots, \cR^{t-1}$, and the history of actions of the selected Agents $a^1, \ldots, a^{t-1}$ an Agent transcript,
denoted $\pi^{1:t} = [(y^{t'}, \cR^{t'}_{i^{t'}}, \cR^{t'}, a^{t'})]_{t' \in [t]}$. 
We refer to a single element $t'$ of an Agent transcript as $\pi^{t'}$.
The space of all possible Agent transcripts is denoted $\Pi$. 
\end{definition}

Note that the Agent transcript contains strictly more information than the Principal transcript in any round. 
The action space of each Agent $i$ is a policy mapping from Agent transcripts to actions in each round. 

\begin{definition}[Agent Policy]
An Agent policy $p_i: \Pi \to \Delta(\cA)$ is a mapping from any $t$-length transcript $\pi^t$ to a distribution over actions.
We will write $p^{t}_{i}(\pi^{t-1})$ as the distribution over actions output by Agent $i$ at round $t$. 
Let $\cP$ be the set of all Agent policies. 
\end{definition}

\begin{game} 
\caption{Agent Selection Game}
\label{setting:Agent-selection}
    \begin{algorithmic}
        \STATE {\bf Input} fixed contract $v$, time horizon $T$
        \STATE Agent cost functions $c_i(\cdot)$ and belief functions $B_i(\cdot)$ are common knowledge amongst all Agents. 
        \STATE Principal fixes their selection mechanism $f$, which is common knowledge to all Agents.
        \STATE Agent transcript is initialized to transcript $\pi = \{\}$
        \STATE Principal transcript is initialized to $\pi^p = \{\}$
        \FOR{$t \in [T]$}
            \STATE Principal selects Agent $f^t(\pi^{p, t-1}, \cR^t) = i^t$ to whom to award contract
            \STATE Agent $i^t$ takes action $a^t$, observed by all Agents, at a cost of $c_{i^t}(a^t)$
            \STATE Agents observe the state of nature $y^t$
            \STATE Agents observe the outcome $o_{i^t}^t \sim \cD_{i^t, y^t, a^t}$ and the randomness $\cR^t_i$ used to sample it 
            \STATE Principal receives the return from realized outcome $r(o_{i^t}^t)$
            \STATE Agent $i^t$ receives payment $v(r(o_{i^t}^t))$
            \STATE Principal publishes randomness $\cR^t$ used in round $t$ 
            \STATE Principal transcript is updated with $(i^t, r(o_{i^t}^t))$
            \STATE Agent transcript is updated with $(y^t, \cR^t_{i^t}, \cR^t, a^t_{i^t})$
        \ENDFOR
    \end{algorithmic}
\end{game}

Taken together, a Principal selection mechanism $f$, a set of Agent policies $p$, and a distribution over states of nature $\hat{y}$ fully define a distribution over transcripts of the extensive form game. 
When useful, we will consider the function which generates a distribution of transcripts given these inputs. This function will generate the distribution of \emph{all} possible transcripts that may occur based on a distribution over the future states of nature which is taken as an input. Namely, the additional randomness over which it is generating a distribution is the randomness in the other components of the transcript, which the Agent beliefs do not concern themselves with - the randomness used by the Principal in their selection mechanism and the randomness used to sample the outcome from the selected Agent's outcome distribution in each round. 

\begin{definition}[Transcript Generating Function] \label{def:transcript-generating-func}
Given a Principal with selection mechanism $f$ and Agents policies $p$,
the function $g_{f, p}: \Delta(\cY)^T \to \Delta(\Pi)$,
% the function $g_{f, p}: \cY^t \times \cY^{T-t} \to \Delta(\Pi),$ for any $t \in [T],$
% takes as input a sequence of realized states of nature $\bar{y}$ and a sequence of possible future states of nature $\tilde{y}$, 
takes as input a distribution over $T$ states of nature $\cY$
and outputs a distribution over transcripts.
Given some prefix of the transcript of length $t$, we can also write $g^{1:t'}_{f,p}: \Pi^{1:t} \times \Delta(\cY)^{t'-t} \to \Delta(\Pi^{1:t'})$ for any $t' \geq t$
to denote the distribution over transcripts until round $t'$. When useful, we will also allow this function to take as input a deterministic sequence of nature states.
\end{definition}

Now, we are ready to describe the utility of each Agent $i$ in the game defined by $f$ and belief function $B_i(\cdot)$. We will describe this for every \emph{subgame} of the game, defined by an Agent transcript prefix. While the equilibrium guarantees of our mechanisms do not require subgame perfection, it will be useful for us in our analysis to think about the payoffs in particular subgames.

\begin{definition}[Agent Subgame Utility] \label{def:utility}
    At round $t$, given a previously realized transcript $\bar{\pi}^{1:t-1}$ with states of nature of $\bar{y}^{1:t-1}$, Agent $i$'s payoff function for policy $p_{i}$ is defined as
    \begin{align*}
        \mathcal{U}_{i}(t, \bar{\pi}^{1:t-1}, p_{j \neq i},p_{i}) = 
       \mathbb{E}_{\pi^{1:T} \sim g^{1:T}_{f, p}(\bar{\pi}^{1:t-1},B^{t, t:T}_i (\bar{y}^{1:t-1}))} \left[\sum_{t' = t}^{T} \mathbb{I}[(f^t(\pi^{p, 1:t'-1}, \cR^t) = i)] \cdot u_i(p^{t'}_{i}(\pi^{1:t'-1}), y^{t'}) \right].
    \end{align*}
\end{definition}

In some cases, it will also be useful to refer to Agent $i$'s realized utility over a transcript prefix $\pi^{1:t}$, which includes information of all relevant elements the Agent's utility -- namely, the Principal selection in each round and the outcome of the selected Agent action -- as $\mathcal{U}_{i}(\pi^{1:t})$.

\begin{definition}[Principal Utility] \label{def:Principal-utility}
    The Principal's payoff function for selection mechanism $f$, Agent policies $p^*$, and a distribution over state of nature sequences $y^{1:T}$ is defined as 
    \begin{align*}
        \mathcal{U}_{p}(p^{*},f,y^{1:T}) = 
       \mathbb{E}_{\pi^{1:T} \sim g^{1:T}_{f,p^*}(y^{1:T})} \left[\sum_{t' = 1}^{T} \mathbb{I}[(f^t(\pi^{p,1:t'-1}, \cR^t) = i)] \cdot u_p(i,p^*_{i}(\pi^{1:t'-1}), y^{t'}) \right].
    \end{align*}

    We also let $\mathcal{U}_{p}(\pi^{p})$ denote the utility of the Principal over some Principal transcript $\pi^{p}$.
\end{definition}

% Later, it will be useful to separate an Agent's full payoff in any subgame into the expected payoff they receive in the current round, their \emph{immediate payoff}, and their expected payoff from all future rounds, their \emph{continuation payoff}: 

% \begin{align*}
%     \mathcal{U}_{i}(t, \bar{\pi}^{1:t-1}, p_{j \neq i},p_{i}) &= 
%     \underbrace{\mathbb{E}_{y^{t} \sim B^{t, t}_i(\bar{y}^{1:t-1})} 
%     % \left[ \mathbb{E}_{\pi^{t:T} \sim g_{f, p}(\bar{y}^{1:t-1},y^{t:T})}
%     \left[\mathbb{I}[(f(\pi^{p, 1:t-1}) = i)] \cdot u_i(p^{t}_{i}(\pi^{1:t-1}), y^{t}) 
%     % \right] 
%     \right]}_{\text{immediate payoff}} \\
%     &+ \underbrace{\mathbb{E}_{y^{t+1:T} \sim B^{t, t+1:t}_i(\bar{y}^{1:t-1})} \left[ \mathbb{E}_{\pi^{t+1:T} \sim g_{f, p}(\bar{y}^{1:t-1},y^{t:T})} \left[\sum_{t' = t+1}^{T} \mathbb{I}[(f(\pi^{p, 1:t'-1}) = i)] \cdot u_i(p^{t'}_{i}(\pi^{1:t'-1}), y^{t'}) \right] \right]}_{\text{continuation payoff}}.
% \end{align*}

% While we utilize the notation of subgames to describe different payoff profiles, we define our notions of best response and Nash equilibrium over the base game.

\begin{definition}[Agent Best Response]
Given a set of belief functions $B$ and other Agent policies $p_{j \neq i}$, Agent $i$'s policy  $p^{*}_{i}$ is a best response if
$$p^{*}_{i} \in \argmax_{p_{i} \in \mathcal{P}}(\mathcal{U}_{i}(1, \emptyset, p_{j \neq i},p_{i})).$$
\end{definition}

\begin{definition}[Agent Nash Equilibrium]
Given a set of belief functions $B$ and a Principal selection mechanism $f$, Agent policies $p^{*}_{1...k}$ are a Nash equilibrium if, for each Agent $i$, $p^{*}_{i}$ is a best response to $p^{*}_{j \neq i}$.
\end{definition}

\subsection{Non-Responsive Policies}

Since effort levels are continuous, the set of policies $\mathcal{P}$ is an infinite set, and in fact each $p \in \cP$ is also an infinite-dimensional object. Thus, it is not immediately obvious that the game we study (Game \ref{setting:Agent-selection}) even admits a Nash equilibrium, and if so what properties it might have. Additionally, Nash equilibria in extensive form games can lack predictive power and support a wide range of behaviors through threats and collusion. For these reasons we will want to prove the existence not only of Nash equilibria, but of \emph{non-responsive} Nash equilibria, in which Agent policies do not in fact depend on the history of actions of the other players, or the history of choices of the mechanism (which themselves depend on the history of actions)---but rather only on the history of nature states. 

\begin{definition}[Non-responsive Equilibrium]
    Given a set of belief functions $B$ and a Principal selection function $f$, a set of Agent policies   
    $p^{*}_{1...k}$ is a non-responsive equilibrium if it is an equilibrium and all policies in the set are non-responsive.
\end{definition}

\section{Existence of Non-Responsive Equilibrium} \label{sec:Agent-selection}

In this section we show that the game induced by the Agent selection interaction described in Game \ref{setting:Agent-selection} admits a pure strategy non-responsive Nash equilibrium amongst the Agents. We will refer to Game \ref{setting:Agent-selection} as the ``general game," to distinguish it from a related game we define in this section. 

\begin{theorem} \label{thm:pure-nonresponsive}
There exists a pure non-responsive Nash equilibrium of the Agent Selection Game (Game \ref{setting:Agent-selection}).
\end{theorem}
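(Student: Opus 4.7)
I would construct the equilibrium by backward induction over rounds $t = T, T-1, \ldots, 1$, combined with a Kakutani-style fixed-point argument at each round. At round $T$, for each agent $i$ and each nature-state history $y^{1:T-1}$, define $a_i^T(y^{1:T-1})$ to be any maximizer on $[0,1]$ of the concave one-shot expected utility $a \mapsto \mathbb{E}_{y^T \sim B_i^T(y^{1:T-1})}[\bar{v}(i, a, y^T)] - c_i(a)$. A maximizer exists by continuity and compactness, and concavity follows because $p_{i,o}(\cdot, y^T)$ is affine in $a$ (so $\bar{v}$ is linear in $a$) while $c_i$ is convex. Since round $T$ has no continuation and the payoff given selection depends only on the own action and $y^T$, this action is a best response against \emph{any} deviation, responsive or not.

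For the inductive step at round $t < T$, assume non-responsive policies have been fixed for rounds $t+1, \ldots, T$. For each $y^{1:t-1}$, set up a $k$-player simultaneous-move game in which each agent $i$ chooses $a_i \in [0,1]$ and receives as payoff their expected total utility from round $t$ onward, conditional on $y^{1:t-1}$, with expectations taken over nature states drawn from $B_i^t$, principal randomness, and outcome sampling, using the fixed non-responsive future play. The key structural fact is that this payoff is \emph{concave} in $a_i$: the round-$t$ direct utility is concave (linear payment minus convex cost), and the continuation value is \emph{linear} in $a_i$ because the outcome distribution $p_{i,o}(a_i, y^t) = m_{i,o,y^t}\, a_i + b_{i,o,y^t}$ is affine in $a_i$ (so the expected continuation, a linear combination of these probabilities weighted by next-round values, is affine in $a_i$). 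Concavity in each agent's own action makes the best-response correspondence convex-valued and upper-hemicontinuous on the compact, convex product $[0,1]^k$, so Kakutani's theorem produces a pure Nash of the one-shot game; take $a_i^t(y^{1:t-1})$ to be the resulting Nash actions.

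The main obstacle, and the subtle step of the proof, is verifying that the resulting non-responsive profile is a Nash in the \emph{full} responsive policy space, not merely among non-responsive deviations. Naively, at round $t$ an agent could hope to improve by conditioning on the realized principal transcript $\pi^{p,1:t-1}$, since the slope (in $a_i$) of the continuation value generally varies with $\pi^{p,1:t-1}$ and a Jensen-type inequality would favor ex-post optimization. The proof must show that under the specific constructed profile no such responsive deviation is profitable; I would handle this by carefully exploiting the affine-in-action structure of outcome distributions together with the inductive hypothesis on the form of the continuation value under non-responsive future play, arguing that the first-order condition at the ex-ante optimal action coincides with the ex-post first-order condition for every positive-probability realization of $\pi^{p,1:t-1}$. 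This collapse of the Jensen gap is where I expect the bulk of the technical effort to lie.
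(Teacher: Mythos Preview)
Your backward-induction construction is clean for producing a profile, but the lifting step --- the part you flag as ``the bulk of the technical effort'' --- contains a genuine gap, and the claimed resolution is not correct.

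You build policies that condition only on the nature-state history $y^{1:t-1}$. Against such opponents, a deviating agent $i$ at round $t$ can still condition on the realized Principal and outcome randomness $\cR^{1:t-1}$, $\cR^{1:t-1}_{i^{1:t-1}}$, and hence on the realized Principal transcript $\pi^{p,1:t-1}$. As you yourself note, the slope (in $a_i$) of the continuation value is $\sum_{y^t}\Pr[y^t]\sum_o m_{i,o,y^t}\,V_i(\pi^{p,1:t-1},(i,r(o)),y^{1:t})$, and this \emph{does} vary with $\pi^{p,1:t-1}$ whenever the Principal's future selections depend nontrivially on history --- which is exactly the case for any interesting selection mechanism. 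There is no structural reason for the ex-ante and ex-post first-order conditions to coincide, and in general they do not; your ``collapse of the Jensen gap'' assertion is unsubstantiated. So the profile you construct need not be a Nash equilibrium in the full responsive policy space.

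The paper sidesteps this issue entirely by working with a broader notion of non-responsive: strategies may depend on the \emph{restricted transcript} $\pi^{r,1:t-1} = [(y^{t'},\cR^{t'}_{i^{t'}},\cR^{t'})]_{t'<t}$, which already contains the Principal and outcome randomness. It then proves existence of a pure Nash in this restricted game (Glicksberg for a mixed equilibrium, then purification via the affine outcome structure), and lifts it to the full game by a simple determinacy argument: when opponents play \emph{deterministic} policies on the restricted transcript, the entire full transcript $\pi^{1:t-1}$ is computable from $\pi^{r,1:t-1}$, so any responsive deviation is implementable as a restricted one. Your narrower conditioning on $y^{1:t-1}$ alone does not admit this reconstruction, which is precisely why your lift fails. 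If you want to salvage the backward-induction route, you would need to run it over restricted-transcript prefixes rather than over nature-state histories, and then invoke the paper's reconstruction lemma for the lift.
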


The remainder of this section is devoted to the sequence of arguments that result in the proof of Theorem \ref{thm:pure-nonresponsive}. The outline of the proof is as follows:
\begin{itemize}
    \item First, we define the \emph{restricted game}, in which Agents have a restricted strategy space and are  only able to employ \emph{non-responsive} strategies --- i.e. strategies that are independent of the actions of other Agents. 
    \item Next, we apply the result of \citet{glicksberg} to show that the restricted game admits a mixed strategy Nash equilibrium (Lemma \ref{lem:glicksberg-application}).
    \item We show that we can purify any mixed strategy Nash equilibrium of the restricted game while maintaining its equilibrium properties; this establishes that the restricted game admits a pure strategy Nash equilibrium (Lemma \ref{lem:pure}).
    \item Finally we show that any pure strategy Nash equilibrium of the restricted game must also be a pure strategy Nash equilibrium of the general game. This establishes that the general game admits a pure strategy Nash equilibrium \emph{in which all of the players are employing non-responsive strategies} (Theorem \ref{thm:general-eq}).
\end{itemize}

\subsection{Restricted Contracting Game}

To execute the first step of the outline, we define the restricted contracting game in which Agents are restricted to playing non-responsive strategies. Unlike the general game, the strategy space in the restricted contracting game is finite-dimensional, and so it is easier to argue equilibrium existence; and of course, equilibria in the restricted contracting game are non-responsive by definition. Once we establish the existence of pure-strategy Nash equilibria in the restricted game, we will show how to ``lift'' them to pure strategy non-responsive equilibria of the general game. Agent policies in the restricted game will be functions of \emph{restricted transcripts}. Restricted transcripts contain a history of the states of nature and randomness used in the playout, but crucially do \emph{not} contain records of Agent actions or functions of Agent actions (like outcomes or Principal selections). The result is that policies that are defined as functions of restricted transcripts will be independent of the actions of other players.

\begin{definition}[Restricted Transcript] 
We call the collection of the history of the realized states of nature $y^1, \ldots, y^{t-1}$, the history of the random bits used to sample the selected Agent's outcome in each round $\cR^t_{i^t}$, and the history of the random bits used by Principal $\cR^1, \ldots, \cR^{t-1}$ a restricted transcript, denoted $\pi^{r,1:t} = [(y^{t'}, \cR^{t'}_{i^{t'}}, \cR^{t'})]_{t' \in [t]}.$ 
We refer to a single element $t'$ of a restricted transcript as $\pi^{r, t'}$.
The space of all possible restricted transcripts is denoted $\Pi^r$.
\end{definition}

We will refer to Agent $i$'s utility over the restricted transcript prefix $\pi^{r, 1:t}$ by $\mathcal{U}_{i}(\pi^{r, 1:t})$. 

\begin{definition}[Restricted Agent Policy] \label{def:restricted-policy}
A restricted Agent policy $p_i: \Pi^r \to \Delta(\cA)$ for an Agent $i$ is a mapping from any $t$-length restricted transcript to a distribution over actions $\Delta(\cA)$.
We will write $p^{t}_{i}(\pi^{r, t-1})$ as the distribution over actions output by Agent $i$ at round $t$. 
We will call the set of all restricted policies $\cP^r$.
\end{definition}

Note that all restricted Agent policies are \emph{non-responsive} by construction - they do not depend on other Agents' actions. 
In the general game, we refer to any policy that can be written as a function of the restricted transcript  a ``non-responsive'' strategy.

\begin{definition}[Restricted Transcript Generating Function]
Given a Principal selection mechanism $f$, the function $g^r_f: \Delta(\cY^T) \to \Delta(\Pi^r)$, takes as input a distribution over a sequence of $T$ states of nature $\cY$ and outputs a distribution over restricted transcripts.
Given some prefix of the restricted transcript of length $t$, we can also write $g^{r, 1:t'}_{f}: \Pi^{1:t} \times \Delta(\cY)^{t'-t} \to \Delta(\Pi^{1:t'})$ for any $t' \geq t$ to denote the distribution over transcripts through round $t'$. 
\end{definition}

\subsection{Equilibrium Existence} \label{sec:existence}

We begin by applying the following result to the restricted game. 

\begin{theorem}[\cite{glicksberg}] \label{thm:glicksberg}
    Every continuous and compact game has a mixed strategy Nash equilibrium. 
\end{theorem}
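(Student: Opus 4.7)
The plan is to prove Glicksberg's theorem via the Kakutani--Fan--Glicksberg fixed point theorem, which is the standard way to generalize Nash's finite-game argument to continuous strategy spaces. I will argue that the best-response correspondence on the space of mixed-strategy profiles satisfies the hypotheses of a suitable infinite-dimensional fixed point theorem, and then identify any fixed point as a mixed-strategy Nash equilibrium.

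First I would set up the topology. For each player $i$, let $S_i$ denote the compact (Hausdorff) pure-strategy set, and let $\Delta(S_i)$ be the set of Borel probability measures on $S_i$, equipped with the weak-$*$ topology inherited from the dual of $C(S_i)$. By the Banach--Alaoglu theorem, $\Delta(S_i)$ is compact, and it is clearly convex, sitting inside a locally convex topological vector space. The profile space $\prod_i \Delta(S_i)$ is then also compact and convex.

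Next I would define the best-response correspondence
\[
BR(\sigma) = \prod_i \argmax_{\sigma_i' \in \Delta(S_i)} U_i(\sigma_i', \sigma_{-i}),
\]
where $U_i$ denotes expected utility extended multilinearly from the continuous payoff $u_i$. Three conditions must be verified. \emph{Nonemptiness}: continuity of $u_i$ on the compact product $\prod_j S_j$ yields continuity of $\sigma_i' \mapsto U_i(\sigma_i', \sigma_{-i})$, so the Weierstrass extreme value theorem on the compact set $\Delta(S_i)$ gives attainment. \emph{Convexity}: $U_i$ is linear in $\sigma_i'$, so the argmax set is a (convex) face of $\Delta(S_i)$. \emph{Closed graph}: this requires $(\sigma_i', \sigma_{-i}) \mapsto U_i(\sigma_i', \sigma_{-i})$ to be jointly continuous, after which Berge's maximum theorem delivers upper hemi-continuity with closed values. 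With these three properties in hand, the Kakutani--Fan--Glicksberg fixed point theorem produces $\sigma^* \in BR(\sigma^*)$, which is precisely a Nash equilibrium.

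The hard part is the joint continuity step underlying upper hemi-continuity. One must show that if $\sigma_i'^{(n)} \to \sigma_i'$ and $\sigma_{-i}^{(n)} \to \sigma_{-i}$ in the weak-$*$ topology, then $\int u_i \, d(\sigma_i'^{(n)} \otimes \sigma_{-i}^{(n)}) \to \int u_i \, d(\sigma_i' \otimes \sigma_{-i})$. For compact metric action spaces this reduces to a standard Portmanteau-style argument, but in Glicksberg's full generality (compact Hausdorff action sets) one needs a Fubini-plus-weak-$*$ argument together with the fact that the product of regular Borel probability measures on compact Hausdorff spaces behaves well under weak-$*$ approximation of each factor by finitely-supported measures. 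Once this continuity is established, the remainder of the proof is essentially a verbatim transcription of Nash's original fixed-point argument into the infinite-dimensional setting.
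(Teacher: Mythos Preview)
Your proof sketch is essentially the standard argument for Glicksberg's theorem and is correct at the level of detail given. However, you should note that the paper does not prove this theorem at all: it is stated with a citation to \cite{glicksberg} and used as a black box in the proof of Lemma~\ref{lem:glicksberg-application}. There is nothing to compare against in the paper itself; the authors simply invoke the result to conclude that the restricted game (which they verify is continuous and compact) has a mixed Nash equilibrium.

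If you intend to include a self-contained proof, your outline via the Kakutani--Fan--Glicksberg fixed point theorem is the canonical one and matches Glicksberg's original approach. The one place to be careful, as you already flag, is the joint weak-$*$ continuity of $(\sigma_i', \sigma_{-i}) \mapsto U_i(\sigma_i', \sigma_{-i})$; in the paper's application the action spaces are compact subsets of Euclidean space, so the metric-space version suffices and the full compact-Hausdorff machinery is unnecessary.
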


To do this, we will first show a result that will be useful throughout this paper for relating the local changes in an Agent's policy to their global utility in the restricted game.

\begin{lemma} \label{lem:subgame-equivalence}
Fix a selection function $f$ and a strategy profile $p$ of Agents in the restricted game. Let $\hat{p}_i$ be a policy for Agent $i$ that is identical to $p_i$ except for the action in round $t$ for some transcript prefix $\pi^{r, 1:t-1}:$ so $p^t_i(\pi^{r, 1:t-1}) \neq \hat{p}^t_i(\pi^{r, 1:t-1})$.
Then, the difference in Agent $i$'s utilities between these two policies can be written as:
% In the restricted game, fix a selection function $f$ and policies of other Agents $p_{j \neq i}$. Consider two policies for Agent $i$, denoted $\hat{p}_{i}$ and $p_{i}$, which only differ in round $t$ for a given transcript prefix $\pi^{r,1:t-1}$ 
\begin{align*} \mathcal{U}_{i}(1,\emptyset,p_{j \neq i},\hat{p}_{i}) &- \mathcal{U}_{i}(1,\emptyset,p_{j \neq i},p_{i}) \\
& \quad \quad \quad =\Pr[g^{r, 1:t-1}_{f}(B^1_i) = \pi^{r,1:t-1}] \cdot \left(\mathcal{U}_{i}(t, \pi^{r,1:t-1}, p_{j \neq i},\hat{p}_{i}) - \mathcal{U}_{i}(t, \pi^{r,1:t-1}, p_{j \neq i},p_{i}) \right).
\end{align*}
\end{lemma}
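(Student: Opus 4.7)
My approach rests on two structural observations about the restricted game. First, since restricted policies and the restricted transcript itself record only nature states and exogenous randomness (and not actions or Principal selections), the distribution over restricted transcripts is policy-independent; in particular the probability $\Pr[g^{r,1:t-1}_f(B^1_i) = \pi^{r,1:t-1}]$ is a single well-defined number, the same whether Agent $i$ plays $p_i$ or $\hat{p}_i$. Second, since $p_i$ and $\hat{p}_i$ coincide at every (restricted-prefix, round) pair except at $(\pi^{r,1:t-1}, t)$, on any realized play-out whose restricted prefix at time $t-1$ differs from $\pi^{r,1:t-1}$, the two policies are indistinguishable on that play-out, and hence the Agent earns the same subgame payoff under either of them.

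The plan is then to decompose Agent $i$'s global utility by conditioning on the realized restricted transcript at time $t-1$. The contribution from rounds $1, \ldots, t-1$ depends only on the restricted prefix and on the policies restricted to prefixes of length strictly less than $t$, where $p_i$ and $\hat{p}_i$ agree; so this portion cancels in the difference $\mathcal{U}_i(1,\emptyset,p_{j\neq i},\hat{p}_i) - \mathcal{U}_i(1,\emptyset,p_{j\neq i},p_i)$. For the continuation from round $t$ onward, the second observation above says that the two policies yield identical continuation utilities on every realized prefix $\tilde{\pi}^{r, 1:t-1} \neq \pi^{r,1:t-1}$, so those terms cancel as well. Only the single event $\{g^{r,1:t-1}_f(B^1_i) = \pi^{r,1:t-1}\}$ contributes, and on that event the continuation utilities under $p_i$ and $\hat{p}_i$ are respectively $\mathcal{U}_i(t, \pi^{r,1:t-1}, p_{j\neq i}, p_i)$ and $\mathcal{U}_i(t, \pi^{r,1:t-1}, p_{j\neq i}, \hat{p}_i)$ by definition. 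Pulling the shared probability out in front yields the claimed identity.

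The step requiring the most care is the first observation---that the restricted transcript distribution is genuinely policy-independent. Although the selected Agent index $i^{t'}$ depends on past actions (and hence on policies), the recorded component $\cR^{t'}_{i^{t'}}$ can be handled by a coupling argument: imagine the per-Agent randomness $(\cR^{t'}_1, \ldots, \cR^{t'}_k)$ drawn i.i.d.\ at the start of each round, so that the marginal distribution of the $i^{t'}$-th coordinate is the same no matter which Agent is selected, and the joint law of the restricted transcript factors through only the nature states and Principal randomness. With that coupling in hand, everything else is a routine bookkeeping step that telescopes the utility decomposition, and no additional subtlety should arise.
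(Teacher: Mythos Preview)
Your proposal is correct and follows essentially the same decomposition as the paper: split the global expected utility by conditioning on the realized restricted prefix at time $t-1$, observe that the rounds before $t$ and all prefixes other than $\pi^{r,1:t-1}$ contribute identically under $p_i$ and $\hat p_i$, and cancel. The one place you go beyond the paper is your coupling remark about $\cR^{t'}_{i^{t'}}$; the paper treats policy-independence of the restricted-transcript law as definitional (the restricted transcript generating function $g^r_f$ is defined without any policy argument), whereas you make the justification explicit.
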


The proof of Lemma~\ref{lem:subgame-equivalence} is deferred to Appendix~\ref{app:subgame-equivalence}.

\begin{lemma} \label{lem:glicksberg-application}
The restricted game has a mixed strategy Nash equilibrium.
\end{lemma}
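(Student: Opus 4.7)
The plan is to invoke Glicksberg's theorem (Theorem~\ref{thm:glicksberg}) directly on the restricted game. Two things must be verified: that each Agent's pure restricted strategy space is nonempty compact Hausdorff, and that each Agent's expected payoff $\mathcal{U}_i$ is continuous in the joint pure profile.

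For the first ingredient, I will identify each Agent $i$'s pure restricted policy with a deterministic map $\Pi^r \to \cA = [0,1]$, i.e.\ with a point in the product $\cA^{\Pi^r}$ equipped with the product topology. Since $\cA = [0,1]$ is compact Hausdorff, Tychonoff's theorem gives that $\cA^{\Pi^r}$ is compact Hausdorff; nonemptiness follows from the existence of the all-zero policy. Critically, by the definition of restricted transcripts, $\Pi^r$ does not record Agent actions or functions thereof, so this space is well-defined independent of the other Agents' strategies, and the joint pure profile space $\prod_i \cA^{\Pi^r}$ is again compact Hausdorff.

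For the second ingredient, recall that $\mathcal{U}_i(1, \emptyset, p_{-i}, p_i)$ (Definition~\ref{def:utility}) is an expectation over the exogenous randomness --- the nature-state draw from $B_i^1$, the Principal's bits $\cR^{1:T}$, and the outcome-sampling bits $\cR_\cdot^{1:T}$ --- of a sum of at most $T$ bounded per-round terms. For each realization of this exogenous randomness, the sequence of restricted transcript prefixes $\pi^{r,1:t}$ is fully determined and does not depend on the profile $p$; each round's effort is then simply $p_{i^t}^t$ evaluated at the corresponding prefix. Integrating out the outcome-sampling randomness reduces the per-round contribution to the form $\bar{v}(i,a,y) - c_i(a)$, which is continuous in $a$ by the affine structure of $p_{i,o}(\cdot,y)$, continuity of the concave increasing contract $v$, and continuity of $c_i$. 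Since per-round payoffs are uniformly bounded and there are only $T$ rounds, bounded convergence promotes this pointwise continuity in the product topology to continuity of the full expectation $\mathcal{U}_i$ in the joint profile.

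The main obstacle I anticipate is the apparent discontinuity of the Principal's selection function $f^t$, which maps a Principal transcript to a discrete Agent index and in turn depends on sampled outcomes that are themselves discontinuous in actions for fixed sampling randomness. The fix is exactly the smoothing provided by integrating over the outcome-sampling bits $\cR_i^t$: for any fixed action $a$ and nature state $y$, the distribution of sampled outcomes is affine in $a$, so the induced distribution of Principal transcripts --- and hence the distribution of selections and thus the expected payoff --- varies continuously in $a$. Once continuity is secured across all components, the hypotheses of Theorem~\ref{thm:glicksberg} are satisfied and the restricted game admits a mixed strategy Nash equilibrium.
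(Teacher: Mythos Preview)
Your approach is correct and follows the same overall strategy as the paper: both invoke Glicksberg (Theorem~\ref{thm:glicksberg}) after verifying compactness of the restricted strategy space and continuity of payoffs in the joint profile, and both rely on the affine dependence of outcome probabilities on effort to smooth over the discontinuity of the Principal's selection map. Two differences in execution are worth noting. First, the paper explicitly assumes in its proof that the Principal's coin flips, the outcome-sampling randomness, and the nature-state space are all finite, so that $\Pi^r$ is finite and the strategy space is simply $[0,1]^d$; your Tychonoff step is more general than needed given that assumption, but not wrong. Second, for continuity the paper localizes to a single coordinate of Agent $i$'s policy via Lemma~\ref{lem:subgame-equivalence} and then decomposes into immediate and continuation payoffs at a fixed transcript prefix, whereas you argue joint continuity directly through bounded convergence over the exogenous randomness. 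Your route is slightly cleaner conceptually; the paper's is more explicit about where the action-to-return continuity enters in both the immediate and continuation terms.
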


\begin{proof}
We show that the restricted game is continuous and compact and then apply Theorem \ref{thm:glicksberg}.

Each player's strategy space is all mappings from restricted transcripts to distributions over effort levels. 
As we assume that the coin flips used by the Principal, the randomness used to sample an outcome, and space of states of nature are finite, there are a finite number, say $d$, of restricted transcripts, so this is a finite, $d$-dimensional game. Each Agent's strategy in the restricted game can be represented as a real-valued $d$-dimensional vector lying in a compact space $[0, 1]^d$, as effort levels are defined to lie in $[0, 1]$. 

Next, we show that the utility functions are continuous in Agent $i$'s policy $p_i$. Consider a fixed restricted transcript prefix $\pi^{r, 1:t-1}$. We show that the change in Agent $i$'s expected payoff is continuous in $p^t_i(\pi^{r, 1:t-1})$. Consider perturbing the action played by $p^t_i(\pi^{r, 1:t-1})$ to some new action $a'$. Let $\hat{p}_{i}$ represent this perturbed policy. This corresponds to changing a single coordinate in the $d$-dimensional vector representing Agent $i$'s policy. As this change is confined to a single subgame, by Lemma~\ref{lem:subgame-equivalence}, the only change to the utility is in the expression $\mathcal{U}_{i}(t, \pi^{r,1:t-1}, p_{j \neq i},\hat{p}_{i})$. So, it is sufficient to show that the change in this expression is continuous.

Recall that an Agent's utility in this subgame $\bar{\pi}^{r,1:t-1}$ can be written as
    \begin{align*}
        \mathcal{U}_{i}(t, \bar{\pi}^{r, 1:t-1}, p_{j \neq i},&p_{i}) = 
       \mathbb{E}_{\pi^{r, 1:T} \sim g^{r, 1:T}_{f}(\bar{\pi}^{r, 1:t-1},B^t_{i}(\bar{y}^{1:t-1}))} \left[\sum_{t' = t}^{T} \mathbb{I}[(f^t(\pi^{p, 1:t'-1}, \cR^t) = i)] \cdot u_i(p^{t'}_{i}(\pi^{r, 1:t'-1}), y^{t'}) \right]  \\
        & =\underbrace{\mathbb{E}_{\pi^{r, 1:t} \sim g^{r, 1:t}_f(\bar{\pi}^{r, 1:t-1}, B^{t, t}_{i}(\bar{y}^{1:t-1}))} 
        \left[\mathbb{I}[(f^t(\pi^{p, 1:t-1}, \cR^t) = i)] \cdot u_i(p^{t}_{i}(\pi^{r, 1:t-1}), y^{t}) 
        \right]}_{\text{immediate payoff}} \\
        & \quad + \underbrace{\mathbb{E}_{\pi^{r, t+1:T} \sim g^{r, 1:T}_{f}(\bar{\pi}^{r, 1:t-1},B^t_{i}(\bar{y}^{1:t-1}))} \left[\sum_{t' = t+1}^{T} \mathbb{I}[(f^t(\pi^{p, 1:t'-1}, \cR^t) = i)] \cdot u_i(p^{t'}_{i}(\pi^{r, 1:t'-1}), y^{t'}) \right]}_{\text{continuation payoff}}.
    \end{align*}
 
The immediate payoff changes as a result of this perturbation only in the single round utility the Agent sees: $u_i(p^t_i(\pi^{r, 1:t-1}), y^t) = \mathbb{E}_{a \sim p^t_i(\pi^{r, 1:t-1})} [ \bar{v}(i, a, y^t) - c_i(a) ]$. The cost function is, by assumption, convex on a closed set, and thus continuous. It remains to be shown that the expected payment to the Agent also continuous in the effort level. The expected payment is
\begin{align*}
    \mathbb{E}_{a \sim p^t_i(\pi^{r, 1:t-1})} [ \bar{v}(i, a, y^t)] = \mathbb{E}_{a \sim p^t_i(\pi^{r, 1:t-1})} [ \E_{o \sim \cD_{i, y, a}} [ v(r(o)) ] ].
\end{align*}
This is continuous in the effort level because the expected outcome is continuous in the effort level and payment is a post-processing of the outcome via the return $r(\cdot)$ and the contract $v(\cdot)$.
% , which is assumed to be concave on a closed set, and thus continuous.  
The expected outcome is continuous in the effort level because the probability of each outcome, for a fixed state of nature $y$, is affine in the effort level:
\begin{align*}
    \E_{o \sim \cD_{i, y, a}} [o] = \sum_{j \in [m]} p_{i, o_m}(a, y). \cdot o_m.
\end{align*}
Therefore, the immediate payoff for Agent $i$ is continuous in their action. 

Next, observe that the continuation payoff also changes only as a function of how the perturbation affects the return Agent $i$ realizes for the Principal. This is because the effect in the continuation payoff of an Agent's action in round $t$ is simply via the probability that they are selected by the Principal's selection mechanism in future rounds, which takes as input the information of previously realized returns. Thus, by the same argument above, we also have that the continuation payoff is continuous in an Agent's action.
Therefore, we have established that utility functions are continuous in Agent policies. 

As this game is continuous and compact, by Theorem \ref{thm:glicksberg}, it must have a mixed strategy Nash equilibrium.
\end{proof}

Next, we show that without loss of generality, any mixed strategy Nash equilibrium in the restricted game can be converted into a pure strategy Nash equilibrium. Towards that end, the next lemma shows that if we change any Agent's mixed strategy at any single round into a pure strategy by having them deterministically play the expected action under their mixed strategy, then the distribution over restricted transcripts is unchanged. 

\begin{lemma} \label{lem:p_tran_dist} 
% Consider the distribution over Principal transcripts produced by any Principal selection algorithm $f$, a set of non-responsive (but possibly nondeterministic) Agent policies $p^{*}$, and a distribution over states of nature $y$. This distribution is equivalent to that produced by the selection algorithm $f$ and states of nature $y$ against strategy profile $p'$, where $p^*_j = p'_j$ for all $j \neq i$, for some Agent $i$ who plays in $p'_i$ the mean of their effort level distribution of $p^*_i$ in some round $\hat{t}$. 

Fix a Principal selection algorithm $f$ and a set of non-responsive, possibly mixed, Agent strategies $p^*$.
Let $i$ be an Agent who is playing a mixed strategy under $p^*_i$ in some round $t'$ for a restricted Agent transcript $\pi^{r, 1:t'-1}$. Let $p'$ be the strategy profile that is identical to $p^*$ for all Agents $j \neq i$ and differs for Agent $i$ only for transcript $\pi^{r, 1:t'-1}$ where $p'^{, t'}_i ( \pi^{r, t'-1}) = \E_{a \sim p^{*, t'}_i (\pi^{r, t'-1})} [a]$.
Then, the distribution over future Principal transcripts $\pi^{p, t':T}$ induced by strategy profile $p^*$ is  identical to the distribution over future Principal transcripts $\pi^{p, t':T}$ induced by strategy profile $p'$. 
\end{lemma}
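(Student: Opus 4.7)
The plan is to prove the claim by induction on the round index $t \ge t'$, showing that the distribution over Principal-transcript prefixes $\pi^{p,1:t}$ is identical under $p^*$ and $p'$. The base case $t = t'-1$ is immediate: since $p^*$ and $p'$ agree on every (round, restricted-transcript) pair other than the single one at round $t'$ with prefix $\pi^{r,1:t'-1}$, the two profiles induce exactly the same distribution over $\pi^{p,1:t'-1}$. For the inductive step I would condition on a particular prefix $\pi^{p,1:t-1}$ (identical in distribution by hypothesis), observe that the Principal's next selection $i^t = f^t(\pi^{p,1:t-1}, \cR^t)$ has identical conditional distribution because $\cR^t$ is exogenous, and then argue that the new return $r(o^t)$ has identical conditional distribution as well.

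The key structural ingredient for the return is the affineness in the effort level of the outcome probabilities $p_{i,o}(a,y) = m_{i,o,y}\,a + b_{i,o,y}$ from Definition~\ref{def:outcome-dist}, which gives the identity
\[
\E_{a \sim \mu}\bigl[p_{i,o}(a,y)\bigr] \;=\; p_{i,o}\bigl(\E_{a \sim \mu}[a],\,y\bigr) \qquad \text{for every } \mu \in \Delta(\cA).
\]
Summing over all outcomes yielding any fixed return value $r^\star$ gives the same identity for $\Pr[r(o)=r^\star\mid a,y]$. Applied at the single critical case — round $t'$ with restricted prefix $\pi^{r,1:t'-1}$ and selected Agent $i^t = i$ — this forces the conditional distribution of $r(o^{t'})$ under Agent $i$'s mixed action $p^{*,t'}_i(\pi^{r,1:t'-1})$ to coincide with the distribution under the deterministic action $\E_{a \sim p^{*,t'}_i(\pi^{r,1:t'-1})}[a]$. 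At every other $(t, i^t, \pi^{r,1:t-1})$ the two profiles prescribe identical (possibly mixed) actions for the selected Agent, so the conditional outcome distributions agree trivially. This closes the inductive step.

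The main technical point where I expect to spend the most care is that non-responsive restricted policies are permitted to depend on the outcome-sampling randomness $\cR^{t'}_{i^{t'}}$ recorded in the restricted transcript, and under $p^*$ versus $p'$ the joint distribution of the pair $\bigl(\cR^{t'}_{i^{t'}},\,r(o^{t'})\bigr)$ is in general not preserved by the switch from the mixed action $p^{*,t'}_i$ to the deterministic action $\E[a]$ — only the marginal of $r(o^{t'})$ is. To handle this cleanly I would rewrite the probability of any target Principal-transcript prefix as a single integral over all exogenous randomness (nature states $y^{1:T}$, every Principal coin $\cR^{s}$, every outcome coin $\cR^s_j$, and Agent $i$'s mixing coin $\xi$), note that the integrand factors into a piece that depends only on the exogenous randomness plus the selected Agent's action and a piece involving Agent $i$'s round-$t'$ action, and invoke the affineness identity under the integral sign to collapse $\E_{a \sim p^{*,t'}_i}[\,\cdot\,]$ to the single value $\E_{a \sim p^{*,t'}_i}[a]$. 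Iterating this collapse through each inductive step yields the claimed equality of distributions on $\pi^{p,t':T}$.
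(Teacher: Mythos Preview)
Your outline matches the paper's argument: establish that Agent $i$'s round-$t'$ outcome distribution is unchanged via the affine identity $\E_{a\sim\mu}[p_{i,o}(a,y)]=p_{i,o}(\E_\mu[a],y)$, and conclude that the Principal transcript is unchanged because every other ingredient (the non-responsive actions of Agents $j\neq i$, the Principal's coins) is identical across the two worlds.

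The subtlety you flag in your last paragraph is real, but your proposed resolution does not close it. Once you fix all exogenous randomness including the outcome-sampling bits $\cR^{t'}_i$, the outcome $o^{t'}$ becomes a \emph{step} function of $a$, so the indicator $\mathbb{I}[\pi^p=\pi^{p,\star}]$ is not affine in $a$ and your ``collapse $\E_a[\,\cdot\,]$ to the value at $\E_a[a]$'' step fails. Affineness gives you only that $\Pr_{\cR^{t'}_i}[o^{t'}=o\mid a]=p_{i,o}(a,y^{t'})$ is affine in $a$; when another Agent's later restricted-policy action depends directly on $\cR^{t'}_i$, the quantity you actually need to control has the form $\E_{\cR^{t'}_i}\bigl[h(\cR^{t'}_i)\,\mathbb{I}[o^{t'}=o]\bigr]$ for a non-constant $h$, and this is \emph{not} affine in $a$ in general. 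Concretely: take two rounds, Agent $i$ selected at $t'=1$ with mixed action $\tfrac12\delta_0+\tfrac12\delta_1$ and sampling rule ``$o_1$ iff $\cR^{1}_i<a$'', and let Agent $j$ (selected at round $2$) play effort $\mathbb{I}[\cR^{1}_i<\tfrac12]$; under $p^*$ the two Principal-transcript entries are independent, while under $p'_i=\delta_{1/2}$ they are perfectly correlated. The paper's own proof does not engage with this point either --- it argues only that each marginal is preserved. Both arguments go through cleanly under the additional hypothesis that the policies in $p^*$ do not condition on the outcome-sampling bits $\cR^s_{i^s}$ (only on states of nature and Principal coins), in which case the Principal transcript past round $t'$ depends on $\cR^{t'}_i$ solely through $o^{t'}$ and your integrand really is affine in $a$ after integrating out $\cR^{t'}_i$.
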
 
\begin{proof}
% The Agent transcript $\pi^{1:T}$ is uniquely determined by the joint distribution over the sequence of states of nature $y^{1:T}$ and the sequence of outcomes $o^{1:T}$. We will show that the sequence of outcomes has the same distribution under some fixed sequence of states of nature $y^{1:T}$ and $p'$ as under $y^{1:T}$ and $p^*$, implying that the joint distribution of $y^{1:T}$ and $o^{1:T}$ remain the same under either policy for Agent $i$. 
Observe that before time $t'$, there is no difference in the game, and thus the distributions of transcripts until $t'$ will be identical. 
Since all Agents are non-responsive, their actions are not a function of other Agent policies. Thus, the distributions of actions of all Agents except Agent $i$ will remain the same in both strategy profiles $p^*$ and $p'$. This also means that for all Agents $j \neq i$, the outcomes distributions they induce are the same in both worlds. Furthermore, since the randomness used by the Principal is fixed between both worlds, we know the Agent selections are the same.  

It remains to be shown that the outcome distributions for Agent $i$ under both strategy profiles $p^*_i$ and $p'_i$ are the same. 
At time $t$, under strategy $p^*_i$ for Agent $i$, we have outcome distribution $\bar{o}(p^*_i(\pi^{r, 1:t-1}),y^t)$.

We will show that for Agent $i$, the probability for any outcome is identical under strategy $p^*_i$ and $p'_i$. Recall that per Definition $\ref{def:outcome-dist}$, the probability that an outcome $o$ occurs for an Agent $i$ given an effort level $a$ and state of nature $y$ is an affine function $p_{i, o} ( a, y )$, which we represent as $p_{i, o} (a, y) = m_{i, o, y} \cdot a + b_{i, o, y}$.

Let $P^{*}(a_{j}) = Pr_{a \sim p^{*}_i(\pi^{r, 1:t-1})}[a_{j}]$ and let $P'(a_{j}) = Pr_{a \sim p'_i(\pi^{r, 1:t-1})}[a_{j}]$.

Now, we can write the probability of an outcome $o_s$ for Agent $i$ occurring under policy $p^*_i$ as
\begin{align*}
\int_{a=0}^{1} & p_{o_{s},y^{t}}(a) \ dP^{*}(a) \\
& = \int_{a=0}^{1} (m_{i, o_s, y^t} \cdot a + b_{i, o_s, y^t} ) \ dP^{*}(a) \\
& = \int_{a=0}^{1} (m_{i, o_s, y^t} \cdot a ) \ dP^{*}(a) + \int_{a=0}^{1} b_{i, o_s, y^t}  \ dP^{*}(a) \\
& = m_{i, o_s, y^t} \cdot \int_{a=0}^{1} a  \ dP^{*}(a) + b_{i, o_s, y^t} \\
& = m_{i, o_s, y^t} \cdot \mathbb{E}_{a \sim p^{*}_i(\pi^{r, 1:t-1})}[a] + b_{i, o_s, y^t} \\
& = m_{i, o_s, y^t} \cdot \mathbb{E}_{a \sim p'_i(\pi^{r, 1:t-1})}[a] + b_{i, o_s, y^t} \tag{by the fact that the expected value of $a$ is the same under $p^{*}$ and $p'$} \\
& = \int_{a=0}^{1} p_{o_{s},y^{t}}(a) \ dP'(a) \tag{by reversing the above steps with $P'$}
\end{align*}

Thus, any $p'_i(\pi^{r, 1:t-1})$ with the same expected effort level will give an equal distribution over outcomes. In particular, this is true for $p'_i$ that deterministically plays the expected effort level of $p^*_i$.
Therefore, we have shown that under both strategy profiles $p^*$ and $p'$, the distribution over Principal transcripts is identical.
\end{proof}

The next lemma shows that if we purify a single Agent's equilibrium strategy by having them play the expectation of their mixed strategy, then their opponents are still playing best responses.

\begin{lemma} \label{lem:purify-br}
Let $p^*$ be a strategy profile of an equilibrium of the restricted game. Let $p^*_i$ be the policy for an Agent $i$ who is playing a mixed strategy at some round $t$ for transcript prefix $\pi^{r, 1:t-1}$. Let $p'_i$ be an a policy for Agent $i$ that is almost identical to $p^*_i$, except plays a pure strategy at round $t$, specifically the mean effort level of the distribution of $p^*_i(\pi^{r, 1:t-1})$: that is, $p'^{, t'}_i ( \pi^{r, t'-1}) = \E_{a \sim p^{*, t'}_i (\pi^{r, t'-1})} [a]$. 
Then, for all Agents $j \neq i$, it is the case that $p^*_j$ is a best response to $p^{*}_{:-i,-j},p'_{i}$:
\begin{align*}
    p^{*}_{j} \in \argmax_{p_{j} \in \mathcal{P}^r}(\mathcal{U}_{j}(1, \emptyset, \{p^{*}_{:-i,-j},p'_{i}\}, p_j).
\end{align*}
\end{lemma}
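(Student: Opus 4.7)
The plan is to reduce the best-response claim to the equilibrium property of $p^*$ via an invariance argument. Fix any alternative strategy $q_j \in \cP^r$ for Agent $j$; I want to show
\[ \mathcal{U}_{j}(1, \emptyset, \{p^{*}_{-i,-j},p'_{i}\}, p^*_j) \;\geq\; \mathcal{U}_{j}(1, \emptyset, \{p^{*}_{-i,-j},p'_{i}\}, q_j). \]
Since $p^*$ is an equilibrium of the restricted game, I already have
\[ \mathcal{U}_{j}(1, \emptyset, \{p^{*}_{-i,-j},p^*_{i}\}, p^*_j) \;\geq\; \mathcal{U}_{j}(1, \emptyset, \{p^{*}_{-i,-j},p^*_{i}\}, q_j). \]
So it suffices to prove the invariance claim that for every $r_j \in \cP^r$,
\[ \mathcal{U}_{j}(1, \emptyset, \{p^{*}_{-i,-j},p^*_{i}\}, r_j) \;=\; \mathcal{U}_{j}(1, \emptyset, \{p^{*}_{-i,-j},p'_{i}\}, r_j), \]
and then to instantiate it with $r_j = p^*_j$ and $r_j = q_j$ and chain the inequalities.

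To prove the invariance claim, I will apply Lemma~\ref{lem:p_tran_dist} to the profile $\{p^*_{-i,-j}, p^*_i, r_j\}$, viewing the purification as the replacement of $p^*_i$ by $p'_i$. All strategies in that profile are non-responsive (we are in the restricted game) and $p^*_i$ plays a mixed action at the relevant subgame by hypothesis, so the lemma's hypotheses are met. Its conclusion is that the distribution over Principal transcripts induced by $\{p^*_{-i,-j}, p^*_i, r_j\}$ equals the distribution induced by $\{p^*_{-i,-j}, p'_i, r_j\}$.

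The remaining step is to lift Principal-transcript invariance to invariance of Agent $j$'s total expected utility. Recall that $\mathcal{U}_j$ is an expectation of $\sum_t \mathbb{I}[i^t = j] \cdot u_j(a_j^t, y^t)$, where $a_j^t = r_j(\pi^{r,1:t-1})$ by non-responsiveness of $r_j$. The indicator $\mathbb{I}[i^t = j]$ is a deterministic function of $\pi^{p,1:t-1}$ together with $\cR^t$; the sequence $y^{1:T}$ is drawn from the fixed joint law $B^1_j$ independently of the strategy profile; and each remaining component of the restricted transcript, namely $\cR^{t''}$ and $\cR^{t''}_{i^{t''}}$, is drawn from a fixed distribution, independently of the actions of any Agent. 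The proof of Lemma~\ref{lem:p_tran_dist} already shows that Agent $i$'s outcome distribution depends on her policy only through its mean effort level; combining this with the exogeneity of the other components and inducting on the round index (mirroring that proof) yields that the joint distribution of $(\pi^{r,1:T}, \pi^{p,1:T})$ is preserved. Hence $\mathcal{U}_j$ is preserved, as required.

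The main obstacle I anticipate is precisely this final upgrade from Principal-transcript invariance to joint-distribution invariance over everything that Agent $j$'s utility depends on, especially the entries $\cR^{t}_{i^{t}}$ of the restricted transcript that refer to whichever Agent was selected in round $t$. The bookkeeping to carry the independence of per-round, per-Agent randomness through the inductive step is the delicate part, though it is essentially a routine extension of the argument already carried out for Lemma~\ref{lem:p_tran_dist}.
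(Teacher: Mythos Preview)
Your proposal is correct and follows essentially the same route as the paper: establish that Agent $j$'s utility is invariant under the swap $p^*_i \to p'_i$ (via Lemma~\ref{lem:p_tran_dist}), then chain with the equilibrium inequality for $p^*$. The obstacle you flag is lighter than you suggest, because the restricted-transcript generating function $g^r_f$ does not take Agent policies as input at all---so the distribution over $\pi^{r,1:T}$ is trivially unchanged, and the Principal-transcript invariance delivered by Lemma~\ref{lem:p_tran_dist} (which, per its proof, holds conditionally on each fixed restricted transcript) is already enough to make the two utility expressions coincide term by term.
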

\begin{proof}

Consider some Agent $j$, where $j \neq i$. Fix some prefix $\bar{\pi}^{r, 1:t-1}$ of the restricted game including states of nature $\bar{y}^{1:t-1}$.
% , and let the distribution over future transcripts $\pi^{r, t:T} \sim g^r_{f}(\pi^{r, 1:t-1},B^t_{j}(\bar{y}^{1:t-1}))$ be denoted shorthand as $\pi^{r, t:T}_{j}$. 
Finally,  let $\tilde{\pi}^{p,1:t}$ denote the Principal transcript given an arbitrary restricted Agent transcript $\pi^{r, t}$ and the Agent policies $(p^*_{:-i,-j},p_j,p'_i)$, and define $\pi^{*,p,1:t}$ as the Principal transcript given an arbitrary restricted Agent transcript $\pi^{r, 1:t}$ and the Agent policies $(p^*_{:-i,-j},p_j,p^{*}_i)$.

Then, the expected payoff of the Agent $j$ over their own belief given any policy $p_{j}$ against the policies $p^*_{:-i,-j},p_j,p'_i$ is:
\begin{align*}
\mathbb{E}_{\pi \sim g^{r, 1:T}_{f}(\bar{\pi}^{r, 1:t-1},B^t_{j}(\bar{y}^{1:t-1}))} \left[\sum_{t' = t}^{T} \mathbb{I}[f^t(\tilde{\pi}^{p, 1:t'-1}, \cR^t) = j] \cdot u_j(p^{t'}_{j}(\pi^{r, 1:t'-1}), y^{t'}) \right].
\end{align*}

While their payoff using $p_{j}$ against $p^*$ is:

\begin{align*}
    \mathbb{E}_{\pi \sim g^{r, 1:T}_{f}(\bar{\pi}^{r, 1:t-1},B^t_{j}(\bar{y}^{1:t-1}))} \left[\sum_{t' = t}^{T} \mathbb{I}[f^{t}(\pi^{*, p, 1:t'-1}, \cR^t) = j] \cdot u_j(p^{t'}_{j}(\pi^{r, 1:t'-1}), y^{t'}) \right].
\end{align*}
% \begin{align*}
% & \mathbb{E}_{y_{t:T} \sim B^{i}(\bar{y}^{1:t-1})} \left[ \mathbb{E}_{\pi^* \sim \pi_{1:T}(p^*)} \left[\sum_{t' = t}^{T} \mathbb{I}[(f(m^p(\pi^*_{1:t'})) = j)] \cdot u_i(p^{t'}_{j}(m^r(\pi^*_{1:t'})), y_{t'}) \right] \right]\\
% \end{align*}

Recall that the restricted transcripts exclude information about the Agent actions, and therefore both these expressions are evaluated over the exact same restricted transcript.

Also note that, by Lemma~\ref{lem:p_tran_dist}, it is the case that $\tilde{\pi}^{p, 1:t'}$ and $\pi^{*, p, 1:t'}$ are identically distributed. Therefore, the two payoff expressions above are the same, and the utility of Agent $j$ is the same.

% Note that, as the restricted transcript depends only on the states of nature, the randomness used to sample the selected Agent's outcome in each round, and the randomness used by Principal in $f$, $m^r(\pi^*)$ and $m^r(\pi')$ are equally distributed. Furthermore, by Lemma~\ref{lem:p_tran_dist}, $m^p(\pi'_{1:t'})$ and $m^p(\pi*_{1:t'})$ are equally distributed. Therefore, these two expressions are the same, and the utility of Agent $j$ is the same in all subgames for the same policy.

Now, assume for contradiction that there is some policy $p_{j}$ which strictly outperforms $p^{*}_{j}$ against $p^*_{:-i,-j},p'_i$. Given the equivalency of the utility functions, this implies that $p_j$ strictly outperforms $p^*_j$ against $p^*_{:-j}$, leading to a contradiction.
\end{proof}

By iteratively applying this kind of strategy purification, we can establish the existence of a pure strategy Nash equilibrium in the restricted game.

\begin{lemma}\label{lem:pure}
There exists a pure Nash equilibrium of the restricted game.
\end{lemma}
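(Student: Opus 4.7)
The plan is to start with a mixed strategy Nash equilibrium $p^*$ of the restricted game (guaranteed by Lemma~\ref{lem:glicksberg-application}) and iteratively purify it one node at a time, at each step replacing a mixed action at some (Agent, restricted transcript prefix) pair with a point mass on the expected effort level under that mixture. Because $\cR$, the per-Agent outcome-sampling randomness, and the space of nature states $\cY$ are finite and the horizon $T$ is finite, the set of restricted transcripts is finite (as already observed in the proof of Lemma~\ref{lem:glicksberg-application}); across the $k$ Agents there are therefore only finitely many nodes at which mixing can occur, so the procedure terminates in finitely many steps.

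Concretely, given a Nash equilibrium $p^{(k)}$ (with $p^{(0)} := p^*$), I pick some Agent $i$ and some restricted transcript prefix $\pi^{r, 1:t-1}$ at which $p^{(k), t}_i(\pi^{r, 1:t-1})$ is non-degenerate, and define $p^{(k+1)}$ to agree with $p^{(k)}$ everywhere except at this node, where it plays the point mass on $\bar{a} := \mathbb{E}_{a \sim p^{(k), t}_i(\pi^{r, 1:t-1})}[a]$. Lemma~\ref{lem:purify-br} already guarantees that every Agent $j \neq i$ continues to best respond under $p^{(k+1)}$, so the only remaining obligation is to verify that Agent $i$ herself still best responds.

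The key observation, which I expect to be the main technical step, is that Agent $i$'s expected single-round utility $u_i(a, y^t) = \bar{v}(i, a, y^t) - c_i(a)$ is concave in $a$: by Definition~\ref{def:outcome-dist} each $p_{i, o}(\cdot, y^t)$ is affine in $a$, so $\bar{v}(i, a, y^t) = \sum_o p_{i, o}(a, y^t) v(r(o))$ is affine in $a$, while $c_i$ is convex by assumption. Hence Jensen's inequality gives that Agent $i$'s expected immediate payoff at the purified node (conditional on being selected at round $t$, which has a probability determined entirely by $\pi^{r, 1:t-1}$ and $\cR^t$ and is therefore unaffected) is weakly higher under the pure strategy than under the mixture. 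For the continuation payoff, Lemma~\ref{lem:p_tran_dist} yields that the distribution over future Principal transcripts is identical under $p^{(k)}$ and $p^{(k+1)}$; since nature states are drawn according to $B_i$ independently of Agent actions and the outcome-sampling randomness $\cR^{t'}_{i^{t'}}$ is determined exogenously, the distribution over future restricted transcripts is also identical. Since Agent $i$'s strategy at every other restricted transcript is unchanged, her expected continuation payoff is unchanged.

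Combining the two observations, Agent $i$'s expected subgame utility at $\pi^{r, 1:t-1}$ is weakly higher under $p^{(k+1)}$ than under $p^{(k)}$; applying Lemma~\ref{lem:subgame-equivalence} with the nonnegative scaling factor $\Pr[g^{r, 1:t-1}_f(B^1_i) = \pi^{r, 1:t-1}]$, her overall utility is weakly higher as well. Because $p^{(k)}_i$ was a best response to $p^{(k)}_{-i} = p^{(k+1)}_{-i}$ by the inductive hypothesis, and $p^{(k+1)}_i$ achieves at least that utility, $p^{(k+1)}_i$ is also a best response. Therefore $p^{(k+1)}$ is a Nash equilibrium of the restricted game. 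Since each iteration strictly decreases the number of nodes at which mixing occurs, after finitely many steps we obtain a pure strategy Nash equilibrium, completing the proof.
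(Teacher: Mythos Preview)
Your proposal is correct and follows essentially the same approach as the paper: start from the mixed equilibrium of Lemma~\ref{lem:glicksberg-application}, purify one node at a time by replacing the mixture with its mean effort level, use Jensen (via concavity of $u_i(\cdot,y)$) for the immediate payoff, Lemma~\ref{lem:p_tran_dist} for the unchanged continuation payoff, and Lemma~\ref{lem:purify-br} for the other Agents. Your presentation is if anything slightly tidier than the paper's---you make the finite-termination argument explicit and you observe that $\bar v(i,a,y)$ is in fact affine in $a$ (so concavity of $v$ is not even needed for the Jensen step), whereas the paper appeals to concavity of $v$ directly.
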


The proof of Lemma~\ref{lem:pure} is deferred to Appendix~\ref{app:pure}.

The next step is to ``lift'' a pure strategy equilibrium of the restricted game to a pure strategy (non-responsive) equilibrium of the general game. The difference between a strategy of the restricted game and a strategy of the general game is that a strategy of the general game can depend on a richer transcript which includes information about the actions of other Agents. We first observe, however, that when all Agents are playing deterministic, non-responsive strategies, the richer ``full'' transcript of the general game in fact contains no additional information beyond the transcript of the restricted game. 
 
\begin{lemma} 
Whenever all Agents are playing deterministic policies, the full transcript $\pi^{1:t}$ for an arbitrary round $t$ can be computed as a deterministic function of the restricted transcript $\pi^{r, 1:t}$. \label{lem:det-policies} 
\end{lemma}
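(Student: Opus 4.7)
The plan is a straightforward induction on the round index. The only thing the full transcript $\pi^{1:t}$ records beyond the restricted transcript $\pi^{r,1:t}$ is the sequence of Agent actions $a^{1},\ldots,a^{t}$, so it suffices to show that each $a^{t'}$ is a deterministic function of $\pi^{r,1:t'}$ together with the (fixed, publicly known) selection mechanism $f$ and Agent policies $p$.

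First, I would set up the induction hypothesis: for each $t' \le t$, the quadruple $(a^{t'}, i^{t'}, o^{t'}, r(o^{t'}))$ — and hence both the full prefix $\pi^{1:t'}$ and the Principal prefix $\pi^{p,1:t'}$ — can be computed from $\pi^{r,1:t'}$. The base case $t'=1$ is immediate: $\pi^{r,1}$ contains $\cR^{1}$, so the Principal's selection $i^{1} = f^{1}(\emptyset,\cR^{1})$ is determined; the chosen Agent's deterministic non-responsive policy yields $a^{1} = p^{1}_{i^{1}}(\emptyset)$; and the outcome $o^{1}$ is determined by $(a^{1},y^{1},\cR^{1}_{i^{1}})$, which are all present in $\pi^{r,1}$, with $r(o^{1})$ a fixed function of $o^{1}$.

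For the inductive step, assume $\pi^{1:t'-1}$ and $\pi^{p,1:t'-1}$ have been reconstructed from $\pi^{r,1:t'-1}$. Given $\pi^{r,1:t'}$, which appends $(y^{t'},\cR^{t'}_{i^{t'}},\cR^{t'})$, the Principal's selection $i^{t'} = f^{t'}(\pi^{p,1:t'-1},\cR^{t'})$ is determined by the reconstructed Principal transcript and the new randomness $\cR^{t'}$. Because each Agent's policy is assumed deterministic and non-responsive, it can be viewed as a restricted policy (Definition \ref{def:restricted-policy}) mapping restricted transcripts to actions, so $a^{t'} = p^{t'}_{i^{t'}}(\pi^{r,1:t'-1})$ is determined. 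The outcome $o^{t'}$ is then a deterministic function of $(a^{t'},y^{t'},\cR^{t'}_{i^{t'}})$ by Definition \ref{def:outcome-dist} together with the sampling randomness, and $r(o^{t'})$ follows. This closes the induction and produces the desired deterministic map $\pi^{r,1:t}\mapsto\pi^{1:t}$.

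There is no substantive obstacle here; the only subtlety is bookkeeping, namely checking that each dependency chain (restricted transcript $\to$ Principal transcript $\to$ selected Agent $\to$ action $\to$ outcome $\to$ return) is resolved in the right order. The argument crucially uses two assumptions already stated in the paper: that the policies are non-responsive (so they are functions of the restricted rather than the full transcript, avoiding a circular dependence on other Agents' actions) and that they are deterministic (so no extra randomness beyond what already appears in $\pi^{r,1:t}$ is needed). Both hypotheses are given in the lemma statement, so the reduction goes through cleanly.
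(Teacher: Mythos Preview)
Your proposal is correct and follows essentially the same inductive approach as the paper. One small slip: you claim that non-responsiveness is ``given in the lemma statement,'' but it is not --- the lemma assumes only that the policies are deterministic. This is harmless for your argument, since your inductive hypothesis already reconstructs the \emph{full} transcript $\pi^{1:t'-1}$, so even a responsive deterministic policy yields $a^{t'} = p^{t'}_{i^{t'}}(\pi^{1:t'-1})$ without any appeal to non-responsiveness; the paper's proof makes exactly this move.
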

\begin{proof}
Consider a deterministic Agent strategy profile $p$ and a restricted transcript $\pi^{r, 1:t}$. We will show that we can compute the full Agent transcript $\pi^{1:t}$ using only $\pi^{r, 1:t}$ and information known to all Agents, namely Agent policies and belief functions $B_i(\cdot)$ for all $i \in [k]$. Note that the only difference between the transcripts is that the full Agent transcript contains the the action of the selected Agent in each round, so to show the claim we must simply show that in any round, any Agent can compute which Agent was selected by the Principal and what action the selected Agent took. 

We proceed via induction. In round $t = 1$, any Agent can compute the action $a^1$ since the Agent transcript is empty, all Agent strategies are deterministic, and both Agent belief functions and the initial state of the Principal's selection mechanism are known to all Agents.

Now, consider some arbitrary round $t > 1$. Assume that $a^{t-1}$ can be derived from $\pi^{r, t-1}$. We will show that $a^t$ can be derived from $\pi^{r, t}$. 
First, observe that an arbitrary Agent $j$ can compute which Agent will be selected by the Principal in round $t$. The Principal's selection mechanism takes as input the Principal transcript, which includes the historical selections of Agents and their realized outcomes. By the inductive hypothesis, all Agents know the history of actions of selected Agents through round $t-1$. Additionally, they know, by the restricted transcript $\pi^{r, t-1}$, the randomness $\cR^t_{i^t}$ used to sample the selected Agent's outcome in each round $t$. Both of these together exactly determine the state of the Principal's selection mechanism at round $t$. Thus, using $\cR^t$, which is contained in $\pi^{r, t}$, Agents know which Agent was selected in round $t$. 

Secondly, observe that the selected Agent's action $a^t$ is a known, deterministic function that depends on $\pi^{t-1}$ and $B_i(\cdot)$. Therefore, once again, applying the inductive hypothesis, Agents can compute the selected Agent's action $a^t$. 
\end{proof}

To reason about utilities in the general game, we will show an extension of Lemma~\ref{lem:subgame-equivalence} to this setting.

\begin{lemma} \label{lem:subgame-equivalence-general}
In the general game, fix a selection function $f$ and a set of (possibly responsive) policies of other Agents $p_{j \neq i}$. Consider two (possibly responsive) policies for Agent $i$, denoted $\hat{p}_{i}$ and $p_{i}$, which only differ in a given transcript prefix $\pi^{1:t-1}$. Then, 
\begin{align*}
    \mathcal{U}_{i}(1,\emptyset,p_{j \neq i},\hat{p}_{i}) - \mathcal{U}_{i}(1,\emptyset,p_{j \neq i},p_{i}) = \Pr[g_{f}(B^1_i) = \pi^{1:t-1}] \cdot \left(\mathcal{U}_{i}(t, \pi^{1:t-1}, p_{j \neq i},\hat{p}_{i}) - \mathcal{U}_{i}(t, \pi^{1:t-1}, p_{j \neq i},p_{i}) \right).
\end{align*}
\end{lemma}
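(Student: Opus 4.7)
The plan is to mirror the proof of Lemma~\ref{lem:subgame-equivalence} but work with the full Agent transcript rather than the restricted transcript. The crucial structural fact to exploit is that $\hat{p}_i$ and $p_i$ are identical as policies except at the single decision point indexed by the transcript prefix $\pi^{1:t-1}$ (i.e., on every other input they prescribe the same distribution over actions for Agent $i$).

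First I would expand $\mathcal{U}_i(1,\emptyset,p_{j\neq i},p_i)$ by splitting the sum over $t'=1,\ldots,T$ into two pieces: the rounds $t' < t$ and the rounds $t' \geq t$. For the pre-$t$ piece, note that the distribution of the transcript prefix $\pi^{1:t-1}$ under $g_{f,(p_{j\neq i},p_i)}(B_i^1)$ depends only on the restrictions of player $i$'s policy to inputs of length less than $t-1$, together with the policies $p_{j\neq i}$ and the Principal selection mechanism $f$. Since $\hat{p}_i$ and $p_i$ agree on all such inputs, the joint distribution over $\pi^{1:t-1}$ and hence the expected per-round utility for rounds $t' < t$ is identical under the two strategy profiles, so these contributions cancel when we form the difference.

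Next I would handle the post-$t$ piece by conditioning on the realized length-$(t-1)$ prefix. Writing the expectation as a sum over prefixes $\tilde\pi^{1:t-1}$, the post-$t$ contribution equals
\begin{equation*}
\sum_{\tilde\pi^{1:t-1}} \Pr[g_{f,(p_{j\neq i},p_i)}(B_i^1) = \tilde\pi^{1:t-1}] \cdot \mathcal{U}_i(t,\tilde\pi^{1:t-1},p_{j\neq i},p_i),
\end{equation*}
and analogously for $\hat{p}_i$. Since the policies only differ at the decision point $(t,\pi^{1:t-1})$, for any $\tilde\pi^{1:t-1} \neq \pi^{1:t-1}$ the subgame played from round $t$ after prefix $\tilde\pi^{1:t-1}$ never queries $p_i$ at the single input where the two policies differ (the full transcripts reachable in that subgame all extend $\tilde\pi^{1:t-1}$, not $\pi^{1:t-1}$), so the subgame utilities satisfy $\mathcal{U}_i(t,\tilde\pi^{1:t-1},p_{j\neq i},\hat{p}_i) = \mathcal{U}_i(t,\tilde\pi^{1:t-1},p_{j\neq i},p_i)$. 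Moreover, the probability $\Pr[g_{f,(p_{j\neq i},p_i)}(B_i^1) = \tilde\pi^{1:t-1}]$ depends only on the policies' behavior on prefixes of length at most $t-2$, hence equals $\Pr[g_{f,(p_{j\neq i},\hat p_i)}(B_i^1) = \tilde\pi^{1:t-1}]$, which for brevity I will just write as $\Pr[g_f(B_i^1) = \tilde\pi^{1:t-1}]$ as in the statement.

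Subtracting the two expressions, all terms with $\tilde\pi^{1:t-1}\neq \pi^{1:t-1}$ vanish, leaving only the single term
\begin{equation*}
\Pr[g_f(B_i^1) = \pi^{1:t-1}] \cdot \bigl(\mathcal{U}_i(t,\pi^{1:t-1},p_{j\neq i},\hat{p}_i) - \mathcal{U}_i(t,\pi^{1:t-1},p_{j\neq i},p_i)\bigr),
\end{equation*}
which is exactly the claimed identity. The main subtle point --- essentially the only obstacle relative to Lemma~\ref{lem:subgame-equivalence} --- is to justify, in the presence of responsive opponents, that the probability of reaching a pre-$t$ prefix does not depend on $p_i$'s behavior at round $t$ or later; this follows because opponents' responses at rounds before $t$ can only react to actions of Agent $i$ at rounds strictly before $t$, where $p_i$ and $\hat p_i$ are literally the same policy.
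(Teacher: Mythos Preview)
Your proposal is correct and follows essentially the same approach as the paper, which simply states that the proof ``follows the same argument as Lemma~\ref{lem:subgame-equivalence}.'' You have in fact spelled out that argument in the general setting, and your closing remark about why responsiveness of opponents does not obstruct the pre-$t$ cancellation is a useful clarification that the paper leaves implicit.
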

\begin{proof}
This proof follows the same argument as Lemma \ref{lem:subgame-equivalence}.
\end{proof}

Finally, combining these lemmas lets us establish that pure strategy Nash equilibria of the restricted game are in fact also pure strategy Nash equilibria of the general game.

\begin{lemma}
Every pure Nash equilibrium of the restricted game is also a pure Nash equilibrium of the general game. \label{lem:extend}
\end{lemma}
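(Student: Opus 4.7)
The plan is to proceed by contrapositive: starting from a pure strategy profile $p^*$ that is a Nash equilibrium of the restricted game, I will show that no Agent can profit by switching to a (possibly responsive) deviation $p'_i$ in the general game. The core idea is a \emph{simulation} argument: every general-game deviation $p'_i$ against $p^*_{-i}$ can be mimicked by a restricted-game deviation $\hat{p}_i$ that achieves identical utility, so any profitable deviation in the general game would contradict the equilibrium property in the restricted game.

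First, I would lift $p^*$ from the restricted game to the general game in the obvious way: each $p^*_i$ is a function of a restricted transcript, and since restricted transcripts are projections of full transcripts, $p^*_i$ is also a (non-responsive) policy in the general game. Next, assume for contradiction there exists a profitable deviation $p'_i$ for some Agent $i$ against $p^*_{-i}$ in the general game. I would then construct a restricted-game policy $\hat{p}_i$ recursively on the length of the restricted transcript prefix as follows. Given $\pi^{r, 1:t-1}$, the other Agents $j \neq i$ are playing deterministic non-responsive strategies $p^*_j$, so by Lemma \ref{lem:det-policies} (applied after fixing Agent $i$'s own past actions, which are inductively determined by $\hat{p}_i$ on shorter prefixes of the same restricted transcript), the full transcript $\pi^{1:t-1}$ and in particular all other Agents' past actions are determined as a deterministic function of $\pi^{r, 1:t-1}$. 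Define $\hat{p}_i^{t}(\pi^{r, 1:t-1}) := p'^{t}_i(\pi^{1:t-1})$ where $\pi^{1:t-1}$ is this computed full transcript.

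The key verification step is that when Agent $i$ plays $\hat{p}_i$ against $p^*_{-i}$, the induced distribution over full transcripts coincides with that induced by $p'_i$ against $p^*_{-i}$. This follows by induction on the round $t$: at each step, both processes face the same distribution over the randomness appearing in the restricted transcript (nature states, Principal randomness, outcome randomness), and by construction $\hat{p}_i$ plays the same action as $p'_i$ on the induced full transcript, while all other Agents' actions depend only on the restricted transcript since their strategies are non-responsive and deterministic. Because utility $\mathcal{U}_i(1,\emptyset,p^*_{-i},\cdot)$ is a function of the distribution over full transcripts, this equality of distributions implies $\mathcal{U}_i(1,\emptyset,p^*_{-i},\hat{p}_i) = \mathcal{U}_i(1,\emptyset,p^*_{-i},p'_i) > \mathcal{U}_i(1,\emptyset,p^*_{-i},p^*_i)$, contradicting that $p^*$ is a Nash equilibrium of the restricted game.

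The main obstacle is being careful about the recursive definition of $\hat{p}_i$: $\hat{p}_i$ needs to read a restricted transcript, infer the full transcript that \emph{would have arisen} had Agent $i$ been following $\hat{p}_i$ all along, and then apply $p'_i$ to that inferred full transcript. This is well-defined because Lemma \ref{lem:det-policies} lets any Agent reconstruct the full transcript from the restricted transcript once all strategies are deterministic and non-responsive, but one must verify that the self-referential aspect (Agent $i$'s own past actions feed into the full transcript which feeds back into Agent $i$'s policy) resolves cleanly via induction on $t$. Once this is set up, the distributional equivalence is straightforward since $p^*_{-i}$ is non-responsive, and no deeper properties of $f$, $v$, or the belief functions $B_i$ are needed.
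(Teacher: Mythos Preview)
Your proposal is correct and follows essentially the same approach as the paper: simulate the general-game deviation $p'_i$ by a restricted-game policy $\hat{p}_i$ that reconstructs the full transcript from the restricted one via Lemma~\ref{lem:det-policies}, then derive a contradiction with the restricted-game equilibrium property. You are in fact more careful than the paper about the recursive definition of $\hat{p}_i$ (inductively determining Agent $i$'s own past actions so that Lemma~\ref{lem:det-policies} genuinely applies at each step), a detail the paper's proof elides.
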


\begin{proof}
Consider any pure strategy equilibrium $p$ in the restricted game. Each Agent is playing a policy in their best response set over all restricted pure strategies. Say Agent $i$ plays policy $p_i$ in this equilibrium.
We will show that in the general game, for any Agent $i$, there is no beneficial \emph{responsive} deviation from their restricted policy $p_i$, assuming all other Agents are playing $p_{:-i}$. Suppose for the sake of contradiction that there is a beneficial  deviation for Agent $i$ to responsive strategy $q_i$ in the general game. 

% Now, we will prove that this is also a subgame perfect Nash equilibrium in the general game (with subgames determined by the more complex prefixes of $\pi$) by showing that for every subgame $s$ of the restricted game in which the Agent plays $p^*_i(s)$, then $p^*_i$ remains a best response in all subgames of the general game $s_{u}$ which have restricted subgame $s$. 

% Assume for contradiction that this is not the case. Then, there is some subgame $\pi^{1:t}$ such that some policy $p_{u}$ gets the Agent strictly higher payoff. 

% However, recall that, by Lemma~\ref{lem:det-policies}, the Agent can exactly compute what action the other Agents will play at the current round. Therefore, let us define the following non-adaptive policy $p^{*}_{u}$: first it uses the restricted transcript to compute the full transcript, then it computes the action from $p_u$. This will perform exactly as well as $p_{u}$. If this is better than $p^{*}_i$, we have derived a contradiction to the assumption that $p^*_i$ is a best response.

We can show that in fact, this is a valid strategy to play in the restricted game - i.e. that it can be implemented as a non-responsive strategy - as follows. 
By Lemma \ref{lem:det-policies}, when Agents are playing deterministic, non-responsive policies -- as all policies in the profile $p$ are -- any Agent $i$ can in round $t$ compute, using only $\pi^{r, 1:t-1}$, the full transcript $\pi^{1:t-1}$, which is the exact information they would have access to in the general game. Therefore, in fact, $q_i$ is also implementable as a non-responsive strategy $\tilde{q}_i$, as it can be implemented only as a function only of the restricted transcript as follows: in round $t$, let $\tilde{q}^t_i(\pi^{r, t-1})$ first compute the full transcript from the restricted transcript and then output the same action as $q^t_i(\pi^{1:t-1})$. 
Since the policy $q_i$ can be written as an equivalent policy which is only a function of the restricted transcript, it is by definition a non-responsive policy. 
Thus, it would necessarily also be a beneficial deviation from $p^t_i$ in the restricted game. This gives us a contradiction to the fact that $p$ was a pure strategy equilibrium of the restricted game.

We can see this as follows. 
\begin{align*}
    \max_{q_i \in \cP} \  &\mathcal{U}_{i}(1, \emptyset, p_{j \neq i},q_i) - \mathcal{U}_{i}(1, \emptyset, p_{j \neq i},p_{i})  > 0 \tag{suppose that $q_i$ is a beneficial responsive deviation from $p_i$} \\
   & \rightarrow \max_{q \in \cP}\ \mathcal{U}_{i}(1, \emptyset, p_{j \neq i},\tilde{q}_i) - \mathcal{U}_{i}(1, \emptyset, p_{j \neq i},p_{i})  > 0 \tag{ $\tilde{q}$ can be implemented as an equivalent restricted policy $\tilde{q}_i$} \\ 
   & \rightarrow \max_{q \in \cP^r}\ \mathcal{U}_{i}(1, \emptyset, p_{j \neq i},\tilde{q}_i) - \mathcal{U}_{i}(1, \emptyset, p_{j \neq i},p_{i})  > 0 \tag{restricted policies are non-responsive, thus we have a contradiction to $p$ being a policy equilibrium}
\end{align*}
where $\cP$ and $\cP^{r}$ are the set of responsive and restricted policies, respectively. 
\end{proof}

Putting all of these lemmas together yields the main structural result of this section:

\begin{theorem} \label{thm:general-eq}
For every Principal selection mechanism, there exists a pure, non-responsive Nash equilibrium of the Agent selection game (Game \ref{setting:Agent-selection}).
\end{theorem}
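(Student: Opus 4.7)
The plan is to prove Theorem~\ref{thm:general-eq} by stringing together the three intermediate results (Lemmas~\ref{lem:glicksberg-application}, \ref{lem:pure}, and~\ref{lem:extend}) that were developed precisely for this purpose. Since each of these lemmas has already been proved in the section, the theorem itself should be a short chaining argument rather than a new technical contribution.

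First, I would fix an arbitrary Principal selection mechanism $f$ and invoke Lemma~\ref{lem:glicksberg-application} to conclude that the restricted contracting game induced by $f$ admits a mixed strategy Nash equilibrium $p^\ast$. Next, I would apply Lemma~\ref{lem:pure} to convert $p^\ast$ into a pure strategy Nash equilibrium $p^{**}$ of the restricted game --- the purification is done coordinate-by-coordinate by replacing each mixed action with its expected effort level, and Lemma~\ref{lem:p_tran_dist} together with Lemma~\ref{lem:purify-br} ensures that this operation preserves both the induced distribution over Principal transcripts and the best-response property of every other Agent. Because every strategy in the restricted game is, by Definition~\ref{def:restricted-policy}, a function of the restricted transcript only, the policies in $p^{**}$ are non-responsive by construction.

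Finally, I would invoke Lemma~\ref{lem:extend} to lift $p^{**}$ to a pure, non-responsive Nash equilibrium of the general game (Game~\ref{setting:Agent-selection}). The content of Lemma~\ref{lem:extend} is that no Agent can profit from a responsive deviation against $p^{**}_{-i}$, because Lemma~\ref{lem:det-policies} shows that when all opponents play deterministic non-responsive strategies, the full Agent transcript is a deterministic function of the restricted transcript, so any responsive deviation can be simulated by a non-responsive one, which would then contradict $p^{**}$ being a Nash equilibrium of the restricted game.

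Since none of the three lemmas requires further work here, I do not expect a serious obstacle in writing the proof of Theorem~\ref{thm:general-eq} itself; the conceptually delicate steps have already been localized into the lemmas. The only thing to double-check in the combination is that the quantifiers line up: Lemma~\ref{lem:glicksberg-application} and Lemma~\ref{lem:pure} are stated for an arbitrary fixed $f$, and Lemma~\ref{lem:extend} takes a pure Nash equilibrium of the restricted game (again for an arbitrary $f$) as input and outputs an equilibrium of the general game, so the ``for every Principal selection mechanism'' quantifier in the theorem statement is handled uniformly across the three invocations.
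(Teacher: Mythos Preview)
Your proposal is correct and matches the paper's own proof, which simply states that the theorem follows directly from combining Lemmas~\ref{lem:pure} and~\ref{lem:extend}. Your explicit mention of Lemma~\ref{lem:glicksberg-application} is not a deviation but just an unpacking of what is already used inside the proof of Lemma~\ref{lem:pure}.
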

\begin{proof}
Follows directly from combining Lemmas~\ref{lem:pure} and~\ref{lem:extend}.
\end{proof}

\section{Monotone Selection Mechanisms And Counterfactual Regret}
\label{sec:external}
Having established in Section \ref{sec:Agent-selection} that the Agent Selection Game (Game \ref{setting:Agent-selection}) always admits a pure strategy non-responsive Nash equilibrium, we now go on to study the guarantees that the Principal can obtain in such equilibria. In particular, we will compare the utility obtained by the Principal in the dynamic Agent selection game to the utility that they could have obtained counterfactually had they offered a fixed contract to a single Agent $i$, guaranteed across all $T$ rounds.  
In evaluating the utility of the Principal under both their selection mechanism $f$ and the counterfactual fixed-agent mechanism, we assume that all agents are playing a non-responsive equilibria in the game formed by the respective selection mechanism.

The main result we show in this section is that if the Principal uses a selection mechanism that guarantees diminishing external regret (to the realized rewards) and is in addition \emph{monotone}, in a sense that we will define, then the Principal in fact obtains utility at least as high as they would have in the counterfactual benchmark, for all Agents $i$. In other words, we show that in every non-responsive equilibrium of the Agent selection game, monotone selection mechanisms suffice to convert external regret bounds into \emph{policy regret} bounds. 

%In this section, we demonstrate desirable properties of the type of equilibria that we show in Section \ref{sec:Agent-selection} exist in the Agent selection game (Setting \ref{setting:Agent-selection}). We compare the Agents' behavior and the utility of the Principal in this game to a hypothetical benchmark comprising a game in which a single, fixed Agent is guaranteed to receive the contract in every round $T$. 
%We will show strong regret guarantees for the Principal under such equilibria -- in particular, low policy regret to the counterfactual sequence of returns that would have been realized under the benchmark game. 
%The game with Agent selection and the benchmark game are formalized in Setting \ref{setting:Agent-selection} and \ref{sec:fixed-Agent}, respectively. 

\paragraph{Benchmark Comparison}
In the benchmark setting, since the Agent is promised a guaranteed contract across all $T$ rounds, they can always act so as to maximize their per-round payoff, without needing to worry about the effect that their actions today will have on whether they are chosen tomorrow; that is, they will always play the action that is myopically optimal. This is formalized via the definitions below. 

\begin{definition}[Constant Principal Selection Mechanism $f_{i}$] 
A constant Principal selection mechanism $f_{i}: \Pi \to [k] \cup \emptyset$ is defined as the mechanism which deterministically selects a fixed agent $i$ each round, regardless of the transcript.
\end{definition}

\begin{definition}[Myopic Optimal Action]
    An Agent $i$'s myopic optimal action in some round $t$ given transcript $\pi^{1:t-1}$ and corresponding previous states of nature $\bar{y}^1, \ldots, \bar{y}^{t-1}$ is an action $a^t_i$ satisfying $a^t_i \in \argmax_{a \in \cA} \mathbb{E}_{y^t \sim B^{1, t}_{i}(\bar{y}^{1:t-1})} \left[ u_i(a, y^{t}) \right]$ 
    % \mathbb{E}_{y^{t:T} \sim B_{i}(\bar{y}^{1:t-1})} \left[ \mathbb{E}_{\pi^{t:T} \sim g_{p^{1:k},B^{1:k}}(\bar{y}^{1:t-1},y_{t:T})} \left[u_i(a, y_{t}) \right] \right]$.
    Note that for an Agent $i$ this action may differ between rounds and may not be unique within a fixed round.
    Let the set of all such actions in round $t$ be denoted $\tilde{\cA}^t_i$.
\end{definition}

\begin{definition}[Myopic Optimal Policy]
An Agent's myopic optimal policy is defined to be a policy in which each action $p^t_i(\pi^{1:t-1})$ is a myopic optimal action for all $\pi \in \Pi$ and $t \in [T]$. The set of all myopic optimal policies for player $i$ is defined to be $p_{i}^m$.
\end{definition}

\begin{lemma}\label{lem:myopic-policy}
In every equilibrium of the game in which the Principal plays a constant Principal selection mechanism selecting agent $i$, agent $i$ plays a myopic optimal policy.
\end{lemma}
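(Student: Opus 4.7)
The plan is to prove the lemma by backward induction on the round index $t$, showing that at every round and every transcript prefix, an equilibrium policy for agent $i$ must prescribe a myopic optimal action. The key simplification is that since $f_i$ deterministically selects agent $i$ regardless of the transcript, the indicator $\mathbb{I}[f^{t'}(\pi^{p,1:t'-1}, \cR^{t'}) = i]$ in Definition \ref{def:utility} equals $1$ in every round $t' \geq t$. Hence agent $i$'s expected utility from round $t$ onward collapses to the expectation of $\sum_{t'=t}^T u_i(p_i^{t'}(\pi^{1:t'-1}), y^{t'})$ taken under the belief $B_i^{t, t:T}(\bar{y}^{1:t-1})$, with no dependence on the other agents' policies.

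For the base case $t = T$ the sum contains only $u_i(a^T, y^T)$, so maximizing over $a^T$ recovers exactly the definition of a myopic optimal action. For the inductive step, suppose the claim holds for every round $t' > t$. Then at any reached prefix $\pi^{1:t'-1}$ in a future round $t'$, agent $i$ plays an action in $\tilde{\cA}_i^{t'}$, which achieves the maximum expected immediate payoff $V^{t'}(\bar{y}^{1:t'-1}) \coloneqq \max_{a \in \cA} \mathbb{E}_{y^{t'} \sim B_i^{t', t'}(\bar{y}^{1:t'-1})}[u_i(a, y^{t'})]$. Because belief functions depend only on realized states of nature (Definition \ref{def:belief-func}), each $V^{t'}$ is a function of $\bar{y}^{1:t'-1}$ alone, and in particular is independent of agent $i$'s earlier choice $a^t$. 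Consequently the expected continuation payoff $\mathbb{E}[\sum_{t' > t} V^{t'}(\bar{y}^{1:t'-1})]$ under belief $B_i^{t, t:T}(\bar{y}^{1:t-1})$ is a constant in $a^t$, and maximizing total subgame utility at round $t$ is equivalent to maximizing the immediate term $\mathbb{E}_{y^t \sim B_i^{t, t}(\bar{y}^{1:t-1})}[u_i(a^t, y^t)]$, which is the myopic optimality condition.

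To convert this subgame-level analysis into a statement about equilibrium, I would argue by contradiction using Lemma \ref{lem:subgame-equivalence-general}: if $p_i^{*, t}(\pi^{1:t-1})$ ever put positive mass on an action outside $\tilde{\cA}_i^t$, the deviation that replaces it with some $a \in \tilde{\cA}_i^t$ at this single prefix would strictly improve the subgame utility by the previous paragraph, and Lemma \ref{lem:subgame-equivalence-general} implies the global improvement equals this gap times $\Pr[g_f(B_i^1) = \pi^{1:t-1}]$, which is strictly positive because agent beliefs have full support over nature-state sequences. The main subtlety I expect to have to address is that agent $i$'s continuation policy could be responsive to $a^t$ via the transcript; this is handled cleanly by the inductive hypothesis, which guarantees that whatever downstream transcript is realized, the continuation still plays a myopic optimal action there and thus attains the same value $V^{t'}$, regardless of which $a^t$ induced that transcript.
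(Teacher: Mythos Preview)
Your proposal is correct and follows essentially the same approach as the paper: both argue by contradiction, invoking Lemma~\ref{lem:subgame-equivalence-general} together with the full-support assumption to show that any non-myopic action at some prefix admits a strictly improving single-prefix deviation. The one difference is in how the continuation payoff is handled: the paper asserts in one line that ``the action of the agent in the current round has no impact on their payoff in future rounds'' (since the selection is constant and beliefs depend only on nature states), whereas your backward induction makes this step explicit by using the inductive hypothesis to guarantee that the continuation policy already plays myopically at every downstream prefix, so its value $V^{t'}(\bar y^{1:t'-1})$ depends only on nature states and not on $a^t$. Your framing is the more careful of the two, but the underlying argument is the same.
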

The proof of Lemma \ref{lem:myopic-policy} is deferred to Appendix \ref{app:lem-agent-incentive}.

We now define a monotone selection mechanism. Informally, a monotone selection mechanism is one in which if the Principal's reward is increased on a round in which Agent $i$ was chosen, \emph{and no other changes are made to the transcript}, then this change only increases the probability that Agent $i$ is chosen at future rounds.
\begin{definition}[Monotone Selection Mechanism] \label{def:monotone-alg}
Let $f$ be a Principal selection mechanism (Definition \ref{def:Principal-selection-mech}). 
Let $\pi^p \in \Pi^p$ be an arbitrary Principal transcript. Let $\tilde{\pi}^p$ be another Principal transcript that is identical to $\pi^p$ except that it differs in exactly a single entry $t$ such that $\tilde{\pi}^{p, t} = (i, \tilde{r}(o^t_i))$ and $\pi^{p, t} = (i, r(o^t_i))$ where $\tilde{r}(o^t_{i}) \geq r(o^t_i)$. 
The Principal's selection mechanism $f$ is said to be monotone if for all $t' \geq t$, it is the case that
$\Pr_{\cR} [ f^{t'+1}(\tilde{\pi}^{p, {1:t'}}, \cR) = i ] \geq \Pr_{\cR} [f^{t'+1}({\pi}^{p, {1:t'}}, \cR) = i ]$ --- that is, the selection mechanism is only more likely to select $i$ in future rounds under $\tilde{\pi}^{p, t}$ than $\pi^{p, t}$. 
\end{definition}
%Note that many standard selection algorithms, e.g. exponential weights (Algorithm \ref{alg-expw}), are monotone in this sense.  %Commented this out since Exponential Weights is not a bandit algorithm

We first establish that when the Principal employs a monotone selection mechanism, then in any non-responsive pure strategy equilibrium of the game, Agents all exert effort levels that are only higher than they would have in their myopically optimal strategies. Informally, this is because we can decompose Agent utilities into their single-round utility and their continuation utility. Any effort level that is below the myopically optimal level by definition will only improve single-round utility if raised --- and because of monotonicity, will also only improve the continuation utility. By the non-responsiveness of the equilibrium, it will not affect the behavior of other Agents, and hence we can conclude that acting at any effort level below the myopically optimal level is dominated, and hence cannot be supported in equilibrium.

\begin{lemma} \label{lem:Agent-incentive}
If the Principal is selecting Agents via a monotone algorithm, in any non-responsive, pure strategy Nash equilibrium of the selection game (Game \ref{setting:Agent-selection}), then the selected Agent $i^t$'s effort levels in each round $t$ are at least as high as $\tilde{p}^t_{i^t} = \inf{\tilde{\cA}^t_{i^t}}$. 
\end{lemma}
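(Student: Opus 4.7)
The plan is to argue by contradiction: suppose there is a restricted transcript prefix $\pi^{r,1:t-1}$ reached with strictly positive probability at which the selected Agent $i := i^t$ plays an effort $a^t < \tilde{p}^t_{i}$. I will exhibit a strictly beneficial deviation to $\tilde a := \tilde{p}^t_{i}$. By Lemma~\ref{lem:subgame-equivalence} it suffices to show the deviation strictly improves Agent $i$'s utility inside the subgame at $\pi^{r,1:t-1}$, and Lemma~\ref{lem:extend} then guarantees that ruling out profitable restricted deviations also rules out responsive ones in the general game.

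Decompose Agent $i$'s subgame payoff as $U_i(a) + W_i(a)$, where $U_i(a)$ is the expected round-$t$ payoff (the expectation of $u_i(a, y^t)$ under Agent $i$'s belief $B^{1,t}_i(\bar y^{1:t-1})$) and $W_i(a)$ is the expected payoff over rounds $t+1, \ldots, T$ with the rest of the strategy profile held fixed at equilibrium. The function $U_i$ is concave in $a$: by Definition~\ref{def:outcome-dist} the outcome probabilities are affine in $a$, so $\bar v(i, \cdot, y)$ is affine; $c_i$ is convex by assumption; and expectations preserve concavity. Since $\tilde{p}^t_i$ is the left endpoint of the argmax interval of $U_i$ and $a^t < \tilde{p}^t_i$, concavity yields $U_i(\tilde a) - U_i(a^t) > 0$ strictly. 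For the continuation term, the key claim is $W_i(\tilde a) \geq W_i(a^t)$. I would couple the Principal's future coins $\cR^{t+1:T}$, the outcome-sampling randomness, and the belief-drawn future states of nature across the two scenarios (round-$t$ action $a^t$ versus $\tilde a$). Because the equilibrium is non-responsive and pure, Agent $i$'s own future actions and every other Agent's policy are functions of the restricted transcript alone and do not respond to round-$t$ returns; the only channel through which the deviation affects Agent $i$'s continuation payoff is therefore the Principal's downstream selections. By Definition~\ref{def:monotone-alg} those selections shift weakly in favour of Agent $i$ whenever the round-$t$ return shifts upward, and by the outside option of playing effort $0$ (see the Remark following Definition~\ref{def:outcome-dist}) Agent $i$'s best-response payoff from any subgame is non-negative, so extra selection probability only helps. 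Together these yield $W_i(\tilde a) \geq W_i(a^t)$, and adding the strict improvement in $U_i$ produces the contradiction.

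The main obstacle is propagating the pointwise monotonicity in Definition~\ref{def:monotone-alg} to the distributional shift on returns induced by raising effort: higher effort only shifts the outcome distribution, not the realised return, and these distributional changes cascade into changes in future Principal selections and hence future transcripts. The delicate step is therefore an induction over downstream rounds under a careful coupling of the two scenarios, using the affine structure of $p_{i, o_j}(\cdot, y^t)$ in $a$ from Definition~\ref{def:outcome-dist} together with Assumption~\ref{assumption:monotone-effort-return} at round $t$, and Definition~\ref{def:monotone-alg} applied entry-by-entry at every downstream round, with particular care when the identity of the Agent selected at some round $t' > t$ differs between the coupled scenarios since in that case even the Agent-randomness index recorded in the restricted transcript can change.
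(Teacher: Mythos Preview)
Your overall architecture matches the paper's proof closely: contradict by finding a round $t$ and restricted prefix where the equilibrium effort $a^t < \tilde p^t_i$, reduce to the subgame via Lemma~\ref{lem:subgame-equivalence} (the paper uses the general-game version, Lemma~\ref{lem:subgame-equivalence-general}), split into immediate and continuation payoffs, show the former strictly improves (you argue via concavity; the paper just invokes the definition of the myopic optimum), and show the latter weakly improves via non-responsiveness together with Definition~\ref{def:monotone-alg} and Assumption~\ref{assumption:monotone-effort-return}. Your invocation of Lemma~\ref{lem:extend} is unnecessary: the deviation you construct is itself non-responsive and hence a legal deviation in the general game, so its profitability directly contradicts the Nash property there.

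The one substantive divergence is your outside-option step. You argue that because Agent $i$'s best-response payoff from any subgame is non-negative, ``extra selection probability only helps.'' The paper's proof does not take this route: it simply asserts that the continuation payoff is monotone in round-$t$ effort via Definition~\ref{def:monotone-alg} and Assumption~\ref{assumption:monotone-effort-return}, without any non-negativity claim. Your version, as written, has a gap: after deviating at round $t$, Agent $i$ continues with their \emph{fixed equilibrium strategy} in rounds $t'>t$, not with a fresh subgame best response, so the effort-$0$ outside option does not directly certify that the per-round (or per-selection) payoff along that fixed continuation is non-negative. Moreover, the Remark you cite is conditional on a modelling choice that is not a standing assumption of the paper.

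Your closing paragraph correctly flags a subtlety the paper glosses over: Definition~\ref{def:monotone-alg} concerns Principal transcripts differing in a \emph{single} entry, whereas changing the round-$t$ effort shifts the round-$t$ return distribution and can cascade through later selections (and, as you note, even through which $\cR^{t'}_{i^{t'}}$ appears in the restricted transcript). The paper handles this at a high level --- going directly from higher effort to higher expected return to higher selection probability without a detailed coupling --- so your concern is fair, but matching the paper's stated level of rigor does not require you to resolve it.
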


The proof of Lemma \ref{lem:Agent-incentive} is deferred to Appendix \ref{app:lem-agent-incentive}.

With this result in hand, we are almost ready to state and prove our main result, which is that if the Principal employs a bandit learning algorithm that is both monotone and has a diminishing external regret guarantee, then if agent policies always form a non-responsive equilibrium, the Principal will have no regret to the best counterfactual world induced by picking a fixed agent across all rounds. To state the result, we first need to give some online learning preliminaries.

\subsection{Online Learning Preliminaries}
\label{sec:online-prelims}
When the Principal selects an Agent each round, their utility is determined by an unknown sequence of states of nature  and an unknown set of Agent policies. This can be embedded into an ``online learning from experts'' setting, in which the Principal selects one of the $k$ actions at each step and obtains some return based upon this selection. As the Principal does not receive full feedback and only has access to the return of the Agent selected each round, our setting is in fact a special case of  online learning with bandit feedback --- recalling of course that we ultimately want stronger than typical regret bounds. In this section we will make use of algorithms with general no-regret properties to derive no-regret guarantees for our setting. 

We begin with a brief overview of the general online learning from experts setting with bandit feedback and then introduce some definitions, including of notions of external regret and swap regret.
In this section, to align with standard notation for no-regret algorithms, we refer to losses instead of rewards. 
At every round, an adaptive adversary will choose a vector of $k$ losses --- one for each action. Simultaneously, the learning algorithm will choose a distribution over $k$ actions, and will experience loss equal to the loss of their chosen action. The learning algorithm will observe the loss of the action chosen, but will not observe the loss of actions not chosen. 
Since the adversary is allowed to be adaptive, they are able to react to the history of the game when selecting their loss vector in the current round, including to the realizations of randomness used previously by the learning algorithm (though not the algorithm's randomness in the current round). 
We make this dependence explicit in our definitions, as it will be important to our later analysis---in our applications, the loss sequence will be explicitly adaptive.

\begin{definition}[Transcript $\phi$]
    Let $\phi^{1:t}$ denote a transcript of length $t$ consisting of the history of the losses observed by the algorithm $b^1, \ldots, b^t$, the actions selected of the algorithm $s^1, \ldots, s^t$, and the randomness used by the selection algorithm $\cR^1, \ldots, \cR^t$. 
    Note that since we are in a setting with bandit feedback, the transcript does not include the entire loss vector $\ell$ but only the loss of the selected action in that round, i.e. $b^t \coloneqq \ell(\phi^{1:t-1})[s^t]$.
    % Furthermore, let $d(l^{1:t-1},s^{1:t-1},\cR^{1:t-1})$ represent the transcript resulting from selections.
\end{definition}

\begin{definition}[Adaptive Adversary $\ell$]
    An adaptive adversary plays a loss function $\ell^t: \Phi^{t-1} \to \mathbb{R}^k$ in round $t$ as a function of the transcript $\phi^{1:t-1}$. 
    We refer the adaptive adversary across all rounds as $\ell$.
\end{definition}

\begin{definition}[Learning Algorithm]
    A learning algorithm $f$ is a mapping from $t-1$-length transcript prefixes $\phi^{1:t-1}_{\cR^{1:t-1}}$ and random bits used by the algorithm in round $t$, denoted $\cR^{t}$, to an action selection in the current round $t$.
\end{definition}

\begin{definition}[External Regret]
    The external regret of an algorithm $f$ against an adaptive adversary $\ell$ over $T$ rounds is defined as
        \begin{align*}
            \textsc{Reg}_{f}(T,\ell) = \max_{i \in [k]} \E_{\cR^{1:T}} \left[  \sum_{t=1}^T \ell(\phi^{1:t-1}_{\cR^{1:t-1}})[f(\phi^{1:t-1}_{\cR^{1:t-1}}, \cR^t)] -  \ell(\phi^{1:t-1}_{\cR^{1:t-1}})[i] \right], 
        \end{align*}
    where we use $\phi^{1:t-1}_{\cR^{1:t-1}}$ to denote the transcript prefix deterministically resulting from the algorithm $f$, adversary $\ell$ and a particular realization of $\cR^{1:t-1}$.
\end{definition}

\begin{definition}[Swap Function]
    Let the mapping $\chi: [k] \to [k]$ represent a strategy modification function. 
\end{definition}

\begin{definition}[Swap Regret]
    The swap regret of an algorithm $f$ against an adaptive adversary $\ell$ over $T$ rounds is defined as
    \begin{align*}
        \textsc{SReg}_{f}(T, \ell) = \max_{\chi} \E_{\cR^{1:T}} \left[  \sum_{t=1}^T \ell(\phi^{1:t-1}_{\cR^{1:t-1}})[f(\phi^{1:t-1}_{\cR^{1:t-1}}, \cR^t) ] -   \ell(\phi^{t-1}_{\cR^{1:t-1}}[ \chi(f(\phi^{1:t-1}_{\cR^{1:t-1}}, \cR^t))] \right]
    \end{align*}
      where we use $\phi^{1:t-1}_{\cR^{1:t-1}}$ to denote the transcript prefix deterministically resulting from the algorithm $f$, adversary $\ell$ and a particular realization of $\cR^{1:t-1}$.
\end{definition}

Note that in both of these definitions of regret, the benchmark comparison is to the realized loss vectors, not the counterfactual loss vectors that would have been realized had the learning algorithm made different choices. 
However, we will be able to show an even stronger guarantee in our setting - that in fact, the learning algorithm satisfies regret to these counterfactual loss vectors that they did not see, but which would have been realized under different choices of the algorithm.

\subsection{Counterfactual Guarantees in Equilibrium}
We are now in a position to state the main results of this section. We first observe that it immediately follows that if the Principal is selecting Agents using a bandit algorithm that has worst-case external regret guarantees, then they obtain utility at least as high as they would have had they consistently selected the best Agent in hindsight \emph{fixing the actions that the Agent played in Game \ref{setting:Agent-selection}.}

\begin{lemma} \label{thm:regret-realized}
If the Principal is selecting Agents with a selection mechanism with external regret bounded by $R(T)$ for all adaptive adversaries $\ell$, then in any non-responsive, pure strategy equilibrium of the game induced by Game \ref{setting:Agent-selection}, the Principal has regret at most $R(T)$ to the sequence of returns realized by the best fixed agent in hindsight. 
\end{lemma}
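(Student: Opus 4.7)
The plan is to reduce the statement directly to the external regret guarantee of the Principal's bandit algorithm by instantiating the online learning framework from Section~\ref{sec:online-prelims}. Fix any non-responsive pure strategy equilibrium $p^{*}$ of Game~\ref{setting:Agent-selection} and any sequence of states of nature $y^{1:T}$.

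First, I would define the adaptive adversary $\ell$ that the selection algorithm $f$ is implicitly facing in this equilibrium. At round $t$, after the restricted transcript prefix $\pi^{r,1:t-1}$ has been realized, set
\[
\ell^t_j \;=\; -\bigl( r(o^t_j) - v(r(o^t_j)) \bigr),
\]
where $o^t_j$ is the outcome induced by Agent $j$ playing their equilibrium action $a^t_j = p_j^{*,t}(\pi^{r,1:t-1})$ at state of nature $y^t$ with outcome randomness $\cR^t_j$. The key point to verify is that this loss vector is well-defined simultaneously for \emph{every} $j$, not just for the Agent $i^t = f^t(\pi^{p,1:t-1}, \cR^t)$ actually selected by $f$. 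This is exactly what non-responsiveness buys us: each $p_j^{*,t}$ is a function of the restricted transcript alone, and the per-agent sampling randomness $\cR^t_j$ is part of the game's randomness regardless of who is ultimately chosen. Thus $\ell^t$ is a deterministic function of the history and the game's randomness, matching the definition of an adaptive adversary in Section~\ref{sec:online-prelims}.

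Second, I would observe that under this definition the Principal's realized round-$t$ utility $u_p(i^t, a^t_{i^t}, y^t)$ coincides with $-\ell^t_{i^t}$, so the cumulative utility of the Principal is exactly $-\E\bigl[\sum_t \ell^t_{i^t}\bigr]$. The assumed external regret bound then gives, for every $j \in [k]$,
\[
\E\!\left[\sum_{t=1}^T \ell^t_{i^t} - \sum_{t=1}^T \ell^t_j\right] \;\leq\; R(T),
\]
which, upon negating and taking the maximum over $j$, is precisely the claimed regret bound to the sequence of returns realized by the best fixed Agent in hindsight. The outside option is incorporated by appending a $(k{+}1)$-th coordinate with $\ell^t_{k+1}=0$ for all $t$; since the regret bound is quantified over all adaptive adversaries, it still applies.

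The main (and essentially only) conceptual obstacle is justifying the well-definedness of the off-path losses $\ell^t_j$ for $j \neq i^t$. If the equilibrium were responsive, Agent $j$'s action at round $t$ could depend on the Principal's history of selections (through other Agents' actions), so the counterfactual loss $\ell^t_j$ along the actual play of the game would not be a fixed function of the transcript and the reduction to an adaptive adversary would break down. Restricting to non-responsive equilibria sidesteps this issue, after which the result is a direct invocation of the external regret guarantee assumed of the selection mechanism.
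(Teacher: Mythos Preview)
Your proposal is correct and follows the same approach as the paper: instantiate the external regret guarantee against the adaptive adversary induced by the equilibrium play. The paper's own proof is a two-sentence appeal to ``instantiating the regret bound,'' whereas you spell out explicitly how to define the loss vector $\ell^t$ and, importantly, why non-responsiveness is what makes the off-path coordinates $\ell^t_j$ for $j\neq i^t$ well-defined---a point the paper leaves implicit. One small technical remark: in the game as written only the selected Agent's outcome randomness $\cR^t_{i^t}$ is sampled and recorded, so your assertion that $\cR^t_j$ for all $j$ is ``part of the game's randomness regardless of who is chosen'' is a harmless extension of the probability space rather than a literal feature of Game~\ref{setting:Agent-selection}; you could alternatively define $\ell^t_j$ as the \emph{expected} Principal utility over $\cR^t_j$ and appeal to linearity of expectation, which avoids introducing any new randomness.
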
 
\begin{proof}
    This follows immediately from instantiating the regret bound for the Principal in the Agent selection game. Note that the regret bound holds for an arbitrary adaptive adversary, and so in fact holds for any adversary with control over the states of nature in the Agent selection game. 
\end{proof}

%\begin{definition}[Realized Regret]
 %   For some Principal selection mechanism $f$, agent beliefs $B$, and state of nature sequence $y_{1:T}$, the Realized Regret of the Principal is
  %  $$\max_{i \in [k]}[U(q(B,f,y)) - U(d_{i}(B,y))]$$
%\end{definition}

Now, we state stronger regret guarantees. In order to state our counterfactual guarantees, we must introduce a definition for what sorts of Principal transcripts can arise given any non-responsive equilibrium of the agents.

\begin{definition}[Equilibrium Policy Set $p^{e}(f,y)$] 
Given an agent selection mechanism $f$ and a state of nature sequence $y$, $p^{e}(f,y)$ is the set of all collections of policies that Agents can play which jointly form a non-responsive equilibrium. 
\end{definition}

%\begin{definition}[Single-Agent Transcript $d_{i}(B, y)$]
%Given a set of agent beliefs $B$, a fixed Agent $i$, and a state of nature sequence $y^{1:T}$, let $d_{i}(B,y)$ be the set of Principal transcripts that result from the fixed agent game in which agent $i$ is guaranteed to be selected each round and agent $i$ plays a myopic optimal policy. 
%\end{definition}

Now we can formally define our notion of policy regret. Informally, we compare the utility of the Principal given the mechanism they actually implemented, to the utility that the Principal would have obtained had they implemented the best fixed-agent mechanism $f_i$, given the actual sequence of realized states of nature and the beliefs of the Agents. When evaluating the utility for the Principal in both cases, we assume that the Agents play according to a non-responsive pure-strategy Nash equilibrium. Because there may be multiple such equilibria, we take the minimum Principal utility over this set. 

%When evaluating both the payoff for a given mechanism $f$ and the counterfactual benchmark, we assume that agents will play according to a non-responsive equilibrium of the game defined by $f$. For selection mechanism $f$, states of nature $y$, and beliefs $B$, we consider the worst-case utility of the Principal that can be produced via all the agents playing some non-responsive equilibrium. Our benchmark is the Principal's utility under their best fixed-agent mechanism $f_{i}$, keeping the state of nature and agent beliefs fixed.

\begin{definition}[Policy Regret]
    For some Principal selection mechanism $f$, agent beliefs $B$, and state of nature sequence $y^{1:T}$, the policy regret of the Principal is 
    \[ \max_{i \in [k]}\min_{p \in p^{e}(f_{i},y)}\cU_p(g_{f_{i},p}(y^{1:T})) - \min_{p \in p^{e}(f,y)} \cU_p(g_{f,p}(y^{1:T})). \] 
\end{definition}

%\begin{lemma} \label{lem:regret-realized-Agent}
%If the Principal is selecting Agents with a selection mechanism with regret bounded by $R(T)$, then in any non-responsive, pure strategy equilibrium of the game induced by Setting \ref{setting:Agent-selection}, each Agent is guaranteed to have expected payment bounded below by $-R(T)$.
%\end{lemma} 
%\begin{proof}
%\end{proof}

Finally,  we can state our main theorem: informally, if the Principal's algorithm has an external regret guarantee and is also monotone, then not only do they obtain no regret to the realized actions of the Agents in Game \ref{setting:Agent-selection}, but in any deterministic, non-responsive equilibrium of the induced game, they obtain utility as high as they would have had they interacted with the best of the Agents every round.

\begin{theorem} \label{thm:regret-counterfactual}
If the Principal is selecting Agents with a \emph{monotone} selection mechanism with external regret bounded by $R(T)$, the Principal has policy regret at most $R(T)$. 
\end{theorem}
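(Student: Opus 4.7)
The plan is to prove, for each benchmark Agent $i^{\star} \in [k]$ and each non-responsive pure-strategy equilibrium $p \in p^e(f,y)$, the per-Agent inequality
\[
  \cU_p(g_{f,p}(y^{1:T})) \;\geq\; \min_{p' \in p^e(f_{i^{\star}},y)} \cU_p\!\left(g_{f_{i^{\star}},p'}(y^{1:T})\right) - R(T),
\]
from which the full policy regret bound follows by taking $\min$ over $p$ on the LHS and $\max$ over $i^{\star}$.

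First, I would apply the external regret guarantee of $f$ to the online learning instance induced by the equilibrium $p$. Because $p$ is pure and non-responsive, each Agent $j$'s round-$t$ action $a_j^t$ is a deterministic function of the restricted prefix $\pi^{r,1:t-1}$ alone (it does not depend on $\cR^t$), so the per-arm utilities $u_p(j, a_j^t(\pi^{r,1:t-1}), y^t)$ constitute a well-defined adaptive loss vector on every arm, and Lemma~\ref{thm:regret-realized} yields
\[
  \cU_p(g_{f,p}(y^{1:T})) \;\geq\; \E_{\pi^r}\!\Bigl[\sum_{t=1}^{T} u_p\!\left(i^{\star},\, a_{i^{\star}}^t(\pi^{r,1:t-1}),\, y^t\right)\Bigr] - R(T).
\]

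Next, I would argue that, without loss of generality, $a_{i^{\star}}^t(\pi^{r,1:t-1}) \geq \inf \tilde{\cA}^t_{i^{\star}}$ on every restricted prefix. On prefixes where $i^{\star}$ is selected at round $t$ with positive probability over $\cR^t$, Lemma~\ref{lem:Agent-incentive} supplies this directly; on prefixes where $i^{\star}$ is selected at round $t$ with probability zero, $i^{\star}$'s round-$t$ action is equilibrium-irrelevant (it affects neither any Agent's payoff nor the Principal's transcript, which depends only on selected Agents' outcomes), so replacing it by any element of $\tilde{\cA}^t_{i^{\star}}$ yields another element of $p^e(f,y)$ with unchanged Principal utility. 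Combined with the fact that $u_p(i^{\star},\cdot,y^t)$ is affine and non-decreasing in $a$ --- affine because both $\bar r$ and $\bar v$ are affine in $a$, and non-decreasing by Assumption~\ref{assumption:monotone-effort-return}, the identity $\sum_j m_{i^{\star},o_j,y^t} = 0$, and convexity of $r \mapsto r - v(r)$ coming from concavity of $v$, together with the natural condition that higher realized reward yields higher Principal utility --- this gives the round-by-round bound
\[
  u_p\!\left(i^{\star}, a_{i^{\star}}^t(\pi^{r,1:t-1}), y^t\right) \;\geq\; \min_{a \in \tilde{\cA}^t_{i^{\star}}} u_p(i^{\star}, a, y^t).
\]

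Finally, by Lemma~\ref{lem:myopic-policy} every equilibrium in $p^e(f_{i^{\star}},y)$ has $i^{\star}$ playing some myopic policy, so the worst-case such equilibrium (the one attaining the min) achieves exactly $\sum_t \min_{a \in \tilde{\cA}^t_{i^{\star}}} u_p(i^{\star}, a, y^t)$. Chaining this identity with the round-by-round lower bound and the external regret inequality completes the per-Agent claim. The main obstacle I expect is the WLOG refinement in the middle step: making precise that modifying $a_{i^{\star}}^t$ on never-selected prefixes truly produces an equivalent element of $p^e(f,y)$, and in particular does not perturb the distribution over Principal transcripts (ultimately a consequence of $i^{\star}$'s outcome never entering $\pi^{p,1:T}$ at those rounds), is delicate; establishing the monotonicity of $u_p$ in $a$ cleanly from the paper's affine/concavity assumptions is the other bookkeeping point to get right.
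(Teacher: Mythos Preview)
Your proposal is correct and follows essentially the same three-step skeleton as the paper's proof: invoke the external-regret bound on realized rewards (Lemma~\ref{thm:regret-realized}), use Lemma~\ref{lem:Agent-incentive} to lower-bound equilibrium effort by the myopic level, and use Assumption~\ref{assumption:monotone-effort-return} to conclude the realized per-round Principal utilities dominate the myopic-benchmark ones. You are in fact more careful than the paper on two points it glosses over---the WLOG refinement needed because Lemma~\ref{lem:Agent-incentive} is stated only for \emph{selected} agents, and the passage from monotonicity of $\bar r$ to monotonicity of $u_p = \bar r - \bar v$---so your write-up would tighten rather than alter the argument.
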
 
\begin{proof}
    This follows from Lemma \ref{lem:Agent-incentive}, Assumption \ref{assumption:monotone-effort-return}, and Lemma \ref{thm:regret-realized}. Any Agent's effort in the Agent selection game is only higher than their myopically optimal effort level if they are guaranteed to be selected, by Lemma \ref{lem:Agent-incentive}. 
    Next, by Assumption \ref{assumption:monotone-effort-return}, the counterfactual sequence of expected returns in the game when they were guaranteed the contract would have been only lower than the expected returns in the Agent selection game. Taken with Lemma \ref{thm:regret-realized}, that the Principal has bounded regret to the sequence of returns of the best fixed Agent in hindsight in the Agent selection game, we have the claim.
\end{proof}

To instantiate this guarantee with a concrete mechanism, we only need that a monotone bandit learning algorithm of this sort exists with a diminishing regret guarantee. Indeed, we show in Section~\ref{sec:swap} that MonoBandit(MW) is a monotone bandit selection mechanism with external regret bounded by $2\sqrt{k} \cdot \log^{\frac{1}{4}}(k) \cdot T^{\frac{3}{4}}$ (Theorem \ref{thm:algbsmw}).

\begin{theorem} \label{thm:regret-monobandit-mw}
If the Principal selects agents each round via the MonoBandit algorithm instantiated with Multiplicative Weights, the Principal has policy regret at most $2\sqrt{k} \cdot \log^{\frac{1}{4}}(k) \cdot T^{\frac{3}{4}}$.
\end{theorem}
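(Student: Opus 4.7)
The plan is to observe that this theorem is a direct instantiation of the general machinery developed in the preceding subsections, and therefore the proof will be essentially a one-line combination of two already-established results. Specifically, Theorem \ref{thm:regret-counterfactual} asserts that any monotone Principal selection mechanism whose external regret against adaptive adversaries is bounded by $R(T)$ automatically achieves policy regret at most $R(T)$. So to prove Theorem \ref{thm:regret-monobandit-mw}, all I need is to supply a concrete algorithm together with a bound on $R(T)$ that matches the stated quantity $2\sqrt{k}\cdot\log^{1/4}(k)\cdot T^{3/4}$.

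First I would verify that MonoBandit(MW) satisfies the two hypotheses of Theorem \ref{thm:regret-counterfactual}. The monotonicity hypothesis (Definition \ref{def:monotone-alg}) requires that unilaterally replacing the realized reward in a single round in which Agent $i$ was selected by a higher value only weakly increases the probability of choosing Agent $i$ in every future round; this is exactly the property guaranteed by the MonoBandit construction of Section \ref{sec:swap}. The external regret hypothesis requires a bound against arbitrary adaptive adversaries over bandit feedback, since in our application the loss sequence observed by the Principal is generated by an agent selection game whose losses depend on the Principal's own realized randomness. The bound claimed by Theorem \ref{thm:algbsmw} is stated for adaptive adversaries, so this hypothesis is met.

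Next I would plug $R(T) = 2\sqrt{k}\cdot\log^{1/4}(k)\cdot T^{3/4}$ directly into Theorem \ref{thm:regret-counterfactual}. By the argument in that theorem, in any non-responsive pure strategy Nash equilibrium of Game \ref{setting:Agent-selection} induced by MonoBandit(MW), monotonicity plus Lemma \ref{lem:Agent-incentive} force every selected Agent's effort level to weakly exceed its myopically optimal effort level, and Assumption \ref{assumption:monotone-effort-return} then ensures that the counterfactual expected returns in the fixed-Agent benchmark are no larger than the realized returns in the dynamic game. Combined with the external regret guarantee, which bounds the Principal's regret to the realized returns of the best fixed Agent in hindsight (Lemma \ref{thm:regret-realized}), this yields the desired policy regret bound.

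The main substantive work that this theorem rests on is not here but in Section \ref{sec:swap}: establishing Theorem \ref{thm:algbsmw}, i.e. both (i) defining MonoBandit (which will be a monotonicity-preserving bandit-to-full-information reduction) applied to multiplicative weights, and (ii) bounding its external regret by $2\sqrt{k}\cdot\log^{1/4}(k)\cdot T^{3/4}$ against adaptive adversaries. Given that theorem as a black box, the present proof is simply a corollary, and there is no remaining obstacle beyond citing the two relevant results.
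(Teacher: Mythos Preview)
Your proposal is correct and matches the paper's approach exactly: the paper's proof is the one-liner ``This follows from Theorem \ref{thm:regret-counterfactual} and Theorem \ref{thm:algbsmw},'' and you have identified precisely these two ingredients and how they combine.
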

\begin{proof}
    This follows from Theorem \ref{thm:regret-counterfactual} and Theorem \ref{thm:algbsmw}.
\end{proof}

% \begin{defintion}[Limited Liability]
% Aggregate contracts with only non-negative rewards from the Principal to the Agent are limited liability.
% \end{defintion}

% \begin{definition}
% Aggregate contracts with cumulative non-negative rewards from the Principal to the Agent are limited liability.     
% \end{definition}

% \begin{lemma}
%     If using Algorithm \ref{alg-expw}, the Principal has an approximately limited liability aggregate contract in expectation with each Agent. 
% \end{lemma}
% \begin{proof}
%     Suppose that in each round the Principal has a free outside option that yields them 0 return. Since Algorithm \ref{alg-bm} guarantees that the Principal has low regret on each of the subsequences of days when they give the contract to a particular Agent $i$, we know that the Principal has regret at most $4\sqrt{2T\ln k}$ to the 0 value outside option for each Agent $i \in \cA$. This implies that after $T$ rounds the net transfer from the Principal to any Agent is at most $-4\sqrt{2T\ln k}$ in expectation. 
% \end{proof}

\section{Swap Regret and Limited Liability Contracts}
\label{sec:swap}

In Section \ref{sec:external}, we showed that by using a monotone selection mechanism with external regret guarantees, the Principal is able to guarantee payoff similar to what they could have obtained by using the same contract in a benchmark setting in which the contract was guaranteed to any single fixed Agent across all $T$ rounds. However, if we take the benchmark contract to be one that is not \emph{limited liability} --- i.e. that can sometimes obligate the Agent to pay the Principal --- then our mechanism also must offer contracts which fail to satisfy the limited liability condition. This is problematic, because on the one hand, the natural, robust set of linear benchmark policies do not satisfy the limited liability condition, but on the other hand, contracts that fail to satisfy the limited liability condition can be difficult to implement in practice. In this section, we turn our attention to mechanisms for the Principal that only offer limited liability contracts, while still being able to offer the same kinds of counterfactual regret guarantees we derived in Section \ref{sec:external} with respect to linear contracts, which are not limited liability.

\begin{definition}[Limited Liability Contract]
    A limited liability contract is a payment rule $v: \mathbb{R} \to \mathbb{R}_{\geq 0}$ mapping the Principal's return to a payment made to an Agent.
\end{definition}

\begin{definition}[Linear Contract]
    A linear contract $v_{\alpha}: \mathbb{R} \to \mathbb{R}$ with parameter $\alpha \in [0, 1]$ is a payment rule mapping the Principal's return to a reward to the Agent as an $\alpha-$fraction of the total return $r$, as $v_{\alpha}(r) = \alpha \cdot r$. 
\end{definition}

To start, we prove that, by offering a linear contract each round and playing a monotone no swap-regret selection mechanism, the Principal guarantees that the liability of each agent is bounded by the swap regret term. Then, we prove that, if we equip this selection mechanism with a tab that ensures that the payment to each agent fulfills limited liability, it is an approximate (up to regret terms) equilibrium for the Agents to play a strategy profile corresponding to a non-responsive equilibrium of the general game. We use these results to show that, if we assume that agents will play one of these approximate non-responsive equilibria, the Principal gets low policy regret while ensuring limited liability for the Agents.  

\begin{lemma}
If the Principal selection mechanism $f$ has swap regret bounded by $\textsc{SReg}(T)$ against any adaptive adversary $\ell$, and the Principal is providing a fixed linear contract $v_\alpha$ each round, then each agent's cumulative payment is at least $- \alpha \textsc{SReg}(T)$.  \label{lem:liability_bound}
\end{lemma}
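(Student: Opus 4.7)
The plan is to reduce the claim to a single application of the swap-regret guarantee of $f$, instantiated with a carefully chosen swap function that mimics ``firing'' agent $i$ in favor of the outside option. First, I would cast the Principal's selection problem as an online learning problem with adaptive bandit feedback by defining, at each round $t$, the coordinate-wise loss vector
\[
\ell^t[j] \;=\; -\, r(o^t_j) \text{ for each } j \in [k], \qquad \ell^t[\emptyset] \;=\; 0,
\]
where $o^t_j$ is the (counterfactual) outcome agent $j$ would produce if selected in round $t$. Because the equilibrium agent strategies are non-responsive, each agent's round-$t$ action is a deterministic function of the restricted transcript, so $\ell^t$ is a well-defined deterministic function of the history available to $f$. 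Hence $\ell$ falls within the class of adaptive adversaries in Section~\ref{sec:online-prelims}, and the swap-regret bound $\textsc{SReg}(T)$ applies.

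Next, for each fixed agent $i$, I would apply swap regret with the swap function $\chi_i$ that maps $i \mapsto \emptyset$ and is the identity on every other arm. Every round with $i^t \neq i$ contributes nothing to the swap-regret sum, so the bound collapses to
\[
\E\!\left[\sum_{t:\, i^t = i}\bigl(\ell^t[i] - \ell^t[\emptyset]\bigr)\right] \;=\; -\,\E\!\left[\sum_{t:\, i^t = i} r(o^t_i)\right] \;\leq\; \textsc{SReg}(T).
\]
Rearranging yields $\E\!\left[\sum_{t:\, i^t = i} r(o^t_i)\right] \geq -\textsc{SReg}(T)$. Since the linear contract $v_\alpha$ pays agent $i$ exactly $\alpha\cdot r(o^t_i)$ on every round they are selected, multiplying through by $\alpha \geq 0$ yields the claimed lower bound of $-\alpha\cdot\textsc{SReg}(T)$ on the expected cumulative payment to agent $i$.

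The argument is essentially bookkeeping, and the only delicate point is justifying that the implicit adversary above really fits the formal class against which $f$'s swap-regret bound is stated. This amounts to verifying that the counterfactual losses $\ell^t[j]$ for arms $j \neq i^t$ are well-defined as deterministic (adaptive) functions of the transcript, which follows directly from non-responsiveness of the equilibrium policies (and from the fact that one can fix, in advance, per-agent randomness $\cR_j$ used to sample the outcome). A minor cosmetic technicality is that the preliminaries write swap functions as $\chi:[k]\to[k]$; since the outside option $\emptyset$ is itself a valid selection of $f$ with identically zero return, extending the codomain to $[k]\cup\{\emptyset\}$ is harmless and the swap-regret guarantee continues to apply.
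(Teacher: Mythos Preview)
Your proposal is correct and takes essentially the same approach as the paper: apply the swap-regret bound with the swap function $\chi_i$ sending agent $i$ to the outside option $\emptyset$ (identity elsewhere), deduce that the cumulative return on rounds where $i$ is selected is at least $-\textsc{SReg}(T)$, and multiply by $\alpha$ to convert returns into payments under the linear contract. One small remark: the paper states and later uses the lemma (see Lemma~\ref{lem:limited-liability-approx}) for \emph{arbitrary} agent policies, so your appeal to non-responsive equilibrium to justify well-definedness of the full loss vector $\ell^t$ is unnecessary---for this particular $\chi_i$, only the realized loss $\ell^t[i^t]$ and $\ell^t[\emptyset]=0$ enter the comparison.
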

\begin{proof} Let $\mathcal{U}_{p}(f,P,y|i)$ represent the Principal utility from the rounds in which they select agent $i$. The Principal selection mechanism $f$ has swap regret at most $\textsc{SReg}_f(T, \ell)$ for any adaptive adversary $\ell$. Furthermore, the Principal always has an outside option, denoted $\emptyset$, that guarantees them return $0$ in every round. Therefore, we have that 
\begin{align*}
\mathcal{U}_{p}(f,P,y) &= 
       \mathbb{E}_{\pi^{1:T} \sim g^{1:T}_{f,p}(y^{1:T})} \left[\sum_{t' = 1}^{T} \mathbb{I}[(f^t(\pi^{p,1:t'-1}, \cR^t) = i)] \cdot u_p(i,p_{i}(\pi^{1:t'-1}), y^{t'}) \right] \\
       &\geq \mathcal{U}_{p}(\chi(f),P,y) - \textsc{SReg}(T) \\
       & = \mathbb{E}_{\pi^{1:T} \sim g^{1:T}_{f,p}(y^{1:T})} \left[\sum_{t' = 1}^{T} \mathbb{I}[(\chi(f^t(\pi^{p,1:t'-1}, \cR^t) = i))] \cdot u_p(i,p_{i}(\pi^{1:t'-1}), y^{t'}) \right] - \textsc{SReg}(T),
\end{align*}
for any swap function $\chi$, and in particular, the swap function $\chi(j) = \begin{cases}
    j, \text{if } j \neq i, \\
    \emptyset, \text{if } j = i
\end{cases}$.

So we have that 
\begin{align*}
\mathcal{U}_{p}(f,P,y)&  \geq \mathbb{E}_{\pi^{1:T} \sim g^{1:T}_{f,p}(y^{1:T})} \left[\sum_{t' = 1}^{T} \mathbb{I}[f^t(\pi^{p,1:t'-1}, \cR^t) = i] \cdot u_p(\emptyset,p_{i}(\pi^{1:t'-1}), y^{t'}) \right] - \textsc{SReg}(T) \\
&  \rightarrow \mathcal{U}_{p}(f,P,y|i) \geq \mathbb{E}_{\pi^{1:T} \sim g^{1:T}_{f,p}(y^{1:T})} \left[\sum_{t' = 1}^{T} \mathbb{I}[f^t(\pi^{p,1:t'-1}, \cR^t) = i] \cdot u_p(\emptyset,p_{i}(\pi^{1:t'-1}), y^{t'}) \right] - \textsc{SReg}(T) \\
&  \rightarrow \mathcal{U}_{p}(f,P,y|i) \geq - \textsc{SReg}(T) \tag{by the fact that the reward and cost of the outside option are both $0$}
\end{align*}

Thus, for all Agent $i$, the total utility the Principal gains during these rounds is at most $-\textsc{SReg}(T)$. As the Principal is providing a fixed linear contract $v_\alpha$ each round, the total payment from the Principal to Agent $i$ is $\alpha \cdot \mathcal{U}_{p}(f,P,y|i) \geq -\alpha \textsc{SReg}(T)$.
\end{proof}

We show in Section~\ref{sec:swap} that our algorithm MonoBandit when instantiated with the algorithm TreeSwap from~\cite{dagan2023external} is a monotone bandit selection mechanism with swap regret bounded by $2\sqrt{k} \cdot o(T)$ (Theorem \ref{thm:algbsts}).

\begin{theorem} \label{thm:regret-monobandit-ts}
If the Principal selects agents each round via the MonoBandit algorithm instantiated with TreeSwap, the Principal has policy regret at most $2\sqrt{k} \cdot o(T)$.
\end{theorem}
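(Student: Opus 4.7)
The plan is to reduce directly to Theorem \ref{thm:regret-counterfactual}, which already converts any external regret bound for a monotone selection mechanism into a policy regret bound of the same order. To invoke it, I need exactly two ingredients about MonoBandit instantiated with TreeSwap: that it satisfies the monotonicity condition of Definition \ref{def:monotone-alg}, and that it enjoys an external regret bound of at most $2\sqrt{k}\cdot o(T)$ against any adaptive adversary of the form induced by the states of nature composed with the Agents' (non-responsive, pure-strategy) equilibrium policies.

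First, I would quote Theorem \ref{thm:algbsts} (proved in Section \ref{sec:swap}) for the single packaged fact that MonoBandit(TreeSwap) is a monotone bandit selection mechanism whose swap regret against any adaptive adversary is at most $2\sqrt{k}\cdot o(T)$. Second, I would observe the standard fact that swap regret upper bounds external regret: any fixed benchmark action $i \in [k]$ can be encoded as the constant swap function $\chi(j) = i$ for all $j$, and because the swap regret definition in Section \ref{sec:online-prelims} takes a maximum over all swap functions $\chi$, we immediately get $\textsc{Reg}_{f}(T,\ell) \leq \textsc{SReg}_{f}(T,\ell)$ for every adaptive adversary $\ell$.

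Combining these two facts, MonoBandit(TreeSwap) meets the hypotheses of Theorem \ref{thm:regret-counterfactual} with $R(T) = 2\sqrt{k}\cdot o(T)$, yielding the claimed policy regret bound. There is no real obstacle here, since the conceptual work has already been done upstream: Theorem \ref{thm:regret-counterfactual} converts external regret into counterfactual (policy) regret via Lemma \ref{lem:Agent-incentive} and Assumption \ref{assumption:monotone-effort-return}, while Theorem \ref{thm:algbsts} supplies the monotone no swap-regret bandit algorithm. The only mild thing to keep track of is that the regret bound of the bandit must hold against an \emph{adaptive} adversary (since in our game the loss sequence is determined by the states of nature together with the Agents' history-dependent equilibrium strategies), but this is exactly how the online learning definitions in Section \ref{sec:online-prelims} are phrased, and is the same regime in which Theorem \ref{thm:algbsts} provides its guarantee. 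Thus the proof is a one-line composition, parallel in structure to the proof of Theorem \ref{thm:regret-monobandit-mw}, with TreeSwap replacing Multiplicative Weights and swap regret replacing external regret.
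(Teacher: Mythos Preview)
Your proposal is correct and mirrors the paper's own proof, which simply cites Theorem \ref{thm:regret-counterfactual} together with Theorem \ref{thm:algbsts}. The only (minor) addition you make explicit is the observation that swap regret dominates external regret, which the paper leaves implicit but is indeed the connective tissue needed to plug Theorem \ref{thm:algbsts} into the hypothesis of Theorem \ref{thm:regret-counterfactual}.
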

\begin{proof}
    This follows from Theorem \ref{thm:regret-counterfactual} and Theorem \ref{thm:algbsts}.
\end{proof}

\begin{theorem}
If the Principal selects agents each round via the MonoBandit algorithm (Algorithm \ref{alg:monobandit}) instantiated with TreeSwap and provides a fixed linear contract, then:\\
1) the Principal has policy regret at most $2\sqrt{k}\cdot o(T)$, and \\
2) all agents have expected liability no more than $2\sqrt{k} \cdot o(T)$.
\end{theorem}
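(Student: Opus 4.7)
The plan is to observe that this theorem is essentially a bundling of two results already developed in the paper, and the proof amounts to invoking each in turn. Part 1 is simply a restatement of Theorem \ref{thm:regret-monobandit-ts}, which in turn combines the generic monotone-selection counterfactual regret bound (Theorem \ref{thm:regret-counterfactual}) with the fact that \textsc{MonoBandit}(\textsc{TreeSwap}) is a monotone selection mechanism whose swap regret (and hence external regret) is bounded by $2\sqrt{k}\cdot o(T)$ (Theorem \ref{thm:algbsts}). So for part 1 I would just cite these two results and write a single line deducing the $2\sqrt{k}\cdot o(T)$ policy regret bound.

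For part 2, the key observation is that \textsc{MonoBandit}(\textsc{TreeSwap}) satisfies the swap-regret precondition of Lemma \ref{lem:liability_bound} against arbitrary adaptive adversaries, with $\textsc{SReg}(T) = 2\sqrt{k}\cdot o(T)$ by Theorem \ref{thm:algbsts}. Since the Principal provides a fixed linear contract $v_\alpha$ with $\alpha \in [0,1]$, Lemma \ref{lem:liability_bound} tells us that the cumulative payment to each Agent $i$ is at least $-\alpha \cdot \textsc{SReg}(T) \geq -\textsc{SReg}(T)$. The expected liability of each Agent is the negative part of this cumulative payment, so it is upper bounded by $\alpha \cdot 2\sqrt{k}\cdot o(T) \leq 2\sqrt{k}\cdot o(T)$, giving the desired bound.

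There is essentially no obstacle here: both parts are immediate corollaries of results the paper has already established. The only tiny thing to verify is that the same instantiation of \textsc{MonoBandit}(\textsc{TreeSwap}) is simultaneously being used for both statements, and that its swap-regret guarantee holds against the kind of adaptive adversary induced by the non-responsive equilibrium of the Agent Selection Game — but this is exactly the setting in which Theorem \ref{thm:algbsts} is stated, and Lemma \ref{lem:liability_bound} is proven for arbitrary adaptive adversaries $\ell$, so both invocations go through. The proof therefore reduces to two short citations plus one line to absorb the factor $\alpha \leq 1$ into the $2\sqrt{k}\cdot o(T)$ bound.
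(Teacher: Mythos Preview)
Your proposal is correct and matches the paper's own proof essentially line for line: the paper simply cites Theorem \ref{thm:regret-monobandit-ts} for part 1 and Lemma \ref{lem:liability_bound} for part 2. Your additional remarks about absorbing the $\alpha \leq 1$ factor and verifying the adaptive-adversary hypothesis are fine elaborations but not needed beyond what the paper writes.
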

\begin{proof}
    The first claim follows from Theorem \ref{thm:regret-monobandit-ts}. The second claim follows from Lemma \ref{lem:liability_bound}.
\end{proof}

Thus, by running a monotone no swap-regret algorithm in the bandit setting and offering a linear contract, the Principal ensures that after the interaction, the net payment from the Principal to the Agent must be non-negative (up to a diminishing regret term).

We now introduce a limited liability mechanism. Informally, it works as follows: the Principal opens a ``tab'' for each Agent that allows them to accumulate debt.
The Principal maintains a running total of the cumulative payments would be made under the linear contract, but defers the payments to a single lump sum transfer at end of the interaction. At the end of the interaction, if the tab is negative, any remaining debt owed from the Agent to the Principal is forgiven, and so no payments ever need to be made from the Agent to the Principal. Otherwise, the Principal pays the agent the tab.
% whenever money is owed from the Principal to the Agent, the payment is made, but whenever money is owed from the Agent to the Principal, it is not paid, and simply accumulates on the ``tab''. 
% Payments from the Principal to the Agent are first used to cancel out debt on the tab, and only then used to pay the Agent. 
Introducing this tab does not impact the net payment made from the Principal to the Agent whenever the original net payment is non-negative. However, since the tab results in a zero payment when the net payment from the Principal to the Agent would have been negative, this mechanism is incentive distorting. Informally, since we have established that the net payment from the Principal to the Agent cannot have been very negative, we can show that the distortion of incentives is minimal.

We formalize the limited liability game in Game \ref{setting:Agent-selection-limited}. It is similar to the standard game, with the exception of the addition of the three boldface lines.

\begin{game}
\caption{Agent Selection with Limited Liability Game}
\label{setting:Agent-selection-limited}
    \begin{algorithmic}
        \STATE {\bf Input} fixed linear contract $\alpha$, time horizon $T$
        \STATE Agent cost functions $c_i(\cdot)$ and belief functions $B_i(\cdot)$ are common knowledge amongst all Agents. 
        \STATE Principal fixes their selection mechanism $f$, which is common knowledge to all Agents.
        \STATE Agent transcript is initialized their transcript $\pi = \{\}$
        \STATE Principal initializes their transcript of returns $\pi^p = \{\}$
        \STATE {\bf Principal initializes a tab $\cT_i$ for each Agent $i \in [k]$}
        \FOR{$t \in [T]$}
            \STATE Principal selects Agent $f(\pi^{p, t-1}) = i^t$ to whom to award contract using random bits $\cR^t$
            \STATE Agent $i^t$ takes action $a^t$, observed by all Agents, at a cost of $c_{i^t}(a^t)$
            \STATE Agents observe the state of nature $y^t$
            \STATE Agent $i^t$ observes their outcome $o_{i^t}^t \sim \cD_{i^t, y^t, a^t}$        
            \STATE Principal receives the return from realized outcome $r(o_{i^t}^t)$
            \STATE {\bf Principal updates Agent $i^t$'s tab: $\cT_{i^t} = \cT_{i^t} + \alpha \cdot r(o_{i^t}^t)$}
            \STATE Nature reveals randomness used to sample the outcome $\cR^t_{i^t}$
             \STATE Principal publishes randomness $\cR^t$ used in round $t$ 
            \STATE Principal transcript is updated with $(i^t, r(o_{i^t}^t))$
            \STATE Agent transcript is updated with $(y^t, \cR^t_{i^t}, \cR^t, a^t_{i^t})$
        \ENDFOR
        \STATE {\bf Principal pays $\max(0, \cT_i)$ to each Agent $i \in [k]$}
    \end{algorithmic}
\end{game}

We first establish that if the Principal employs a selection mechanism that has a swap regret guarantee, that for any set of policies played by the Agents, the Agents' utilities are almost identical (up to the swap regret bound) in the Limited Liability Game (Game \ref{setting:Agent-selection-limited}) and in the original game (Game \ref{setting:Agent-selection}). Informally, this is because whenever the Agent ends the interaction without ``debt'', their payments are identical in the two settings, and the swap regret guarantee that the Principal has to the outside option (which always guarantees payoff 0) will similarly guarantee that no Agent will ever end the game with significant debt (because by the nature of linear contracts, this would also imply significant swap regret for the Principal). 

\begin{definition}[Agent Utility in the Limited Liability Game $\cU^{Lim}$] Let $\bar{v}_{i}(\pi)$ represent the total payment from the Principal to the Agent $i$ under the transcript $\pi$. Furthermore, let $c_{i}(\pi)$ represent the sum of the costs of all actions taken by Agent $i$ (on rounds they are selected). Then, let $\cU^{Lim}$ be defined as the utility of Agent $i$ in the limited liability game:
\[ \mathcal{U}^{Lim}_{i}(1,\emptyset, p_{j \neq i},p_{i}) = \sum_{\pi \in \Pi} (\max(0,\bar{v}_{i}(\pi)) - c_{i}(\pi) )\cdot \Pr[g_{f,p_{j\neq i}, p_{i}}(B^1_i) = \pi]  \] 
\end{definition}

\begin{lemma} \label{lem:limited-liability-approx}
Fix any set of Agent beliefs $B$, set of Agent policies $p$, and linear contract $v_\alpha$. If the Principal selection mechanism $f$ has swap regret bounded by $\textsc{SReg}(T)$ against any adaptive adversary $\ell$, then each Agent will have expected payoff that differs by no more than $\alpha \textsc{SReg}(T)$ between the standard Agent selection game (Game \ref{setting:Agent-selection}) and the limited liability Agent selection game (Game \ref{setting:Agent-selection-limited}).
That is, for any Agent $i \in [k]$,
    \[ \cU_i(1, \emptyset, p_{j\neq i}, p_i) + \alpha \textsc{SReg}(T) \geq \cU^{Lim}_i(1, \emptyset, p_{j\neq i}, p_i) \geq \cU_i(1, \emptyset, p_{j\neq i}, p_i). \]
\end{lemma}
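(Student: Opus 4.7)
The two games induce identical distributions over transcripts under any fixed strategy profile $p$, and the Agents' realized costs $c_i(\pi)$ are the same in both. They differ only in the final payment to Agent $i$: $\bar v_i(\pi)$ in Game~\ref{setting:Agent-selection} versus $\max(0,\bar v_i(\pi))$ in Game~\ref{setting:Agent-selection-limited}. Consequently
\[
\cU^{Lim}_i - \cU_i = \E_\pi[\max(0,\bar v_i(\pi)) - \bar v_i(\pi)] = \E_\pi[\max(0,-\bar v_i(\pi))] \geq 0,
\]
which immediately establishes the right-hand inequality, since it only uses that $\max(0,x) \geq x$ pointwise.

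For the left-hand inequality, the plan is to upper bound the expected ``forgiven debt'' $\E_\pi[\max(0,-\bar v_i(\pi))]$ by $\alpha\,\textsc{SReg}(T)$. To do this I would argue a pathwise strengthening of Lemma~\ref{lem:liability_bound}: for every realized transcript $\pi$, $\bar v_i(\pi) \geq -\alpha\,\textsc{SReg}(T)$. Instantiating the swap function $\chi$ that sends $i \mapsto \emptyset$ and acts as the identity on every other agent (exactly as in the proof of Lemma~\ref{lem:liability_bound}), the realized regret along any trajectory $\pi$ against $\chi$ equals $-\cU_p(f,P,y|i;\pi)$, which under the linear contract $v_\alpha$ is proportional (through the factor relating a Principal's utility $(1-\alpha)\cdot r$ to an Agent's payment $\alpha\cdot r$) to the Agent's realized cumulative payment. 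A pathwise no-swap-regret bound therefore yields $\bar v_i(\pi) \geq -\alpha\,\textsc{SReg}(T)$ pointwise, so $\max(0,-\bar v_i(\pi)) \leq \alpha\,\textsc{SReg}(T)$ pathwise, and taking expectation over $\pi$ gives the required bound.

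The main obstacle is precisely this pathwise strengthening. The swap-regret definition used earlier in the paper is an expectation over the algorithm's random bits, and an in-expectation bound on realized regret does not by itself upper bound the expectation of its positive part (a random variable $X$ with $\E[X] \leq R$ can still satisfy $\E[\max(0,X)] \gg R$). The cleanest route I see is to absorb all outcome-sampling randomness into the adaptive adversary's loss sequence, so that for each fixing of the Principal's internal coins $\cR^{1:T}$ the realized swap regret against a single fixed swap is bounded by $\textsc{SReg}(T)$; taking expectation over $\cR^{1:T}$ of a pointwise $\alpha\,\textsc{SReg}(T)$ bound on $\max(0,-\bar v_i(\pi))$ then yields the claim. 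The MonoBandit/TreeSwap construction appealed to in the main theorem of this section does enjoy such a per-coin-realization bound, which suffices to complete the argument in the regime the lemma is ultimately invoked.
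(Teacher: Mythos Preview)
Your decomposition $\cU^{Lim}_i - \cU_i = \E_\pi[\max(0,-\bar v_i(\pi))]$ and the right-hand inequality match the paper exactly.

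For the left-hand inequality, the paper's proof is considerably shorter than yours: it simply cites Lemma~\ref{lem:liability_bound}, which bounds the \emph{expected} cumulative payment as $\E[\bar v_i] \geq -\alpha\,\textsc{SReg}(T)$, and concludes directly that the utility gap is at most $\alpha\,\textsc{SReg}(T)$ because the two games differ only on transcripts where $\bar v_i < 0$. The paper does not attempt any pathwise strengthening.

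You are right that this step is delicate: $\E[-\bar v_i] \leq R$ does not in general bound $\E[\max(0,-\bar v_i)]$ by $R$, since the inequality $\max(0,x) \geq x$ points the wrong way. The paper's proof applies Lemma~\ref{lem:liability_bound} as though the expected bound were sufficient and does not confront the issue you raise. Your proposed remedy --- pushing the outcome-sampling randomness into the adaptive adversary so as to obtain a per-realization swap-regret bound over the Principal's coins --- is a sensible direction, but it asks for more than the lemma's stated hypothesis, which assumes only an in-expectation swap-regret guarantee against adaptive adversaries. In short, you have identified a genuine subtlety that the paper's own argument glosses over; your treatment is more careful than the paper's, though (as you yourself note) it is not a complete proof under the lemma's hypotheses as written.
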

\begin{proof}

We can write the utility for an Agent $i$ under the limited liability game as 
\begin{align*}
  \mathcal{U}^{Lim}_{i}(1,\emptyset, p_{j \neq i},p_{i}) &=  \sum_{\pi \in \Pi} (\max(0,\bar{v}_{i}(\pi)) - c_{i}(\pi) )\cdot \Pr[g_{f,p_{j\neq i}, p_{i}}(B^1_i) = \pi]    \\
   &= \sum_{\pi \in \Pi} (\bar{v}_{i}(\pi) + \max(0,-\bar{v}_{i}(\pi)) - c_{i}(\pi)) \cdot  \Pr[g_{f,p_{j\neq i}, p_{i}} (B^1_i) = \pi^{*}]  \\
    &= \sum_{\pi \in \Pi} (\bar{v}_{i}(\pi) - c_{i}(\pi)) \cdot  \Pr[g_{f,p_{j\neq i}, p_{i}} (B^1_i) = \pi^{*}] + \sum_{\pi \in \Pi} (\max(0,-\bar{v}_{i}(\pi))) \cdot  \Pr[g_{f,p_{j\neq i}, p_{i}} (B^1_i) = \pi^{*}]  \\
    &= \mathcal{U}_{i}(1,\emptyset, p_{j \neq i},p_{i}) + \sum_{\pi \in \Pi} (\max(0,-\bar{v}_{i}(\pi))) \cdot  \Pr[g_{f,p_{j\neq i}, p_{i}} (B^1_i) = \pi^{*}].
\end{align*}

First, observe that the right hand side of the desired inequality follows immediately from the above. This is simply because the total payment made to any Agent in the limited liability game is exactly the max of zero and the total payments made to that Agent in the non-limited liability game. Therefore, each Agent only has weakly higher utility in the limited liability game. 

Now, note that by Lemma~\ref{lem:liability_bound}, any Agent's expected cumulative payment in the non-limited liability game is bounded below by $- \alpha \textsc{SReg}(T)$. In the non-limited liability game, this would mean a transfer from the Agent to the Principal of $\alpha \textsc{SReg}(T)$. Since payments to the Agents between the limited liability and non-limited liability games only differ when the latter are negative, we have established that the utility of an agent is at most $\alpha \textsc{SReg}(T)$ higher in the limited liability game.
\end{proof}

We now establish that any equilibrium of the general Agent selection Game (Game \ref{setting:Agent-selection}) remains an approximate equilibrium of the limited liability game (Game \ref{setting:Agent-selection-limited}) whenever the Principal employs a no swap-regret algorithm. Hence, the guarantees that we showed in Section \ref{sec:external} for the Principal in equilibrium (that they obtain payoff at least as high as they would have in the counterfactual world in which the Principal engaged in a constant selection mechanism with the best Agent $i$ in hindsight) continue to hold here (in approximate equilibrium), whenever the Principal employs a \emph{monotone} selection algorithm with swap-regret guarantees.

\begin{theorem}
    Fix a linear contract $v_\alpha$ and a  Principal selection mechanism $f$ with swap regret bounded by $\textsc{SReg}(T)$ against any adaptive adversary. Any
    Nash equilibrium in the game with Agent selection (Game \ref{setting:Agent-selection}) is an $\alpha\textsc{SReg}(T)$-approximate equilibrium in the limited liability game (Game \ref{setting:Agent-selection-limited}). \label{thm:liable} 
\end{theorem}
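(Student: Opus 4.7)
The plan is to derive the theorem almost immediately from Lemma~\ref{lem:limited-liability-approx}, which gives a uniform (over strategy profiles) sandwich relating each Agent's utility in the limited liability game to their utility in the standard game. Since the hypothesis on $f$ (swap regret bounded by $\textsc{SReg}(T)$ against any adaptive adversary) is exactly the hypothesis needed to invoke that lemma, the lemma applies to every strategy profile we consider here, not merely to the equilibrium profile.

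Concretely, fix a Nash equilibrium $p^*$ of Game~\ref{setting:Agent-selection}. Pick any Agent $i$ and any (possibly responsive) deviation $p_i$ in the limited liability game. Our goal is to bound
\[ \cU^{Lim}_{i}(1,\emptyset, p^*_{j \neq i}, p_{i}) - \cU^{Lim}_{i}(1,\emptyset, p^*_{j \neq i}, p^*_{i}) \]
by $\alpha\textsc{SReg}(T)$. The first step is to apply the upper bound from Lemma~\ref{lem:limited-liability-approx} to the deviation profile $(p^*_{j \neq i}, p_i)$, giving
\[ \cU^{Lim}_{i}(1,\emptyset, p^*_{j \neq i}, p_{i}) \;\leq\; \cU_{i}(1,\emptyset, p^*_{j \neq i}, p_{i}) + \alpha\textsc{SReg}(T). \]
The second step is to apply the lower bound from Lemma~\ref{lem:limited-liability-approx} to the equilibrium profile $p^*$ itself, giving
\[ \cU^{Lim}_{i}(1,\emptyset, p^*_{j \neq i}, p^*_{i}) \;\geq\; \cU_{i}(1,\emptyset, p^*_{j \neq i}, p^*_{i}). \]
The third step is to invoke the Nash equilibrium condition in the original game: $\cU_{i}(1,\emptyset, p^*_{j \neq i}, p^*_{i}) \geq \cU_{i}(1,\emptyset, p^*_{j \neq i}, p_{i})$. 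Chaining these three inequalities yields
\[ \cU^{Lim}_{i}(1,\emptyset, p^*_{j \neq i}, p_{i}) - \cU^{Lim}_{i}(1,\emptyset, p^*_{j \neq i}, p^*_{i}) \;\leq\; \alpha\textsc{SReg}(T), \]
which, since $i$ and $p_i$ were arbitrary, is exactly the definition of an $\alpha\textsc{SReg}(T)$-approximate Nash equilibrium of Game~\ref{setting:Agent-selection-limited}.

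There is essentially no obstacle here beyond being careful that Lemma~\ref{lem:limited-liability-approx} can be applied to arbitrary strategy profiles (both the equilibrium profile and each deviation profile), and that the lemma's hypothesis on $f$ matches the hypothesis of the current theorem. Both checks are immediate from the statement of the lemma. The only subtle point worth noting in passing is that the deviation $p_i$ considered in the limited liability game is allowed to be any policy in $\cP$ (including responsive ones); this is fine, since the sandwich bound of Lemma~\ref{lem:limited-liability-approx} was proven without any non-responsiveness restriction on the Agent policies.
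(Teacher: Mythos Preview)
Your proof is correct and follows essentially the same approach as the paper's own proof: two applications of Lemma~\ref{lem:limited-liability-approx} (the lower bound at the equilibrium profile and the upper bound at the deviation profile), followed by the Nash equilibrium inequality in the standard game, chained together to give the $\alpha\textsc{SReg}(T)$-approximate equilibrium bound. Your additional remarks about the lemma applying uniformly across all strategy profiles (including responsive deviations) are accurate and make explicit what the paper leaves implicit.
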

\begin{proof}
    Let $p$ be any pure non-responsive equilibrium of the Agent selection game. We show that no Agent can unilaterally deviate to improve their expected payoff by more than $\textsc{SReg}(T)$. 
    Let $p'$ be a strategy profile that is defined as $p_{:-i}, p'_i$, i.e. is identical to $p$ except for an arbitrary deviation by Agent $i$.
    The claim follows by two applications of Lemma \ref{lem:limited-liability-approx}:
    \begin{align*}
        \cU^{Lim}_i(1, \emptyset, p_{j\neq i}, p_i) &\geq 
        \cU_i(1, \emptyset, p_{j\neq i}, p_i) \\
        \cU_i(1, \emptyset, p_{j\neq i}, p'_i) + \alpha R(T) &\geq \cU^{Lim}_i(1, \emptyset, p_{j\neq i}, p'_i).
    \end{align*} 
    As $p$ is assumed to be an exact Nash equilibrium of the general game, we also know
    \begin{align*}
        \cU_i(1, \emptyset, p_{j\neq i}, p_i) \geq  \cU_i(1, \emptyset, p_{j\neq i}, p'_i).
    \end{align*}
    Together, these give the claim:
    \begin{align*}
        \cU^{Lim}_i(1, \emptyset, p_{j\neq i}, p_i) &\geq \cU^{Lim}_i(1, \emptyset, p_{j\neq i}, p'_i) - \alpha \textsc{SReg}(T). 
    \end{align*} 
\end{proof}

Defining the policy space over which to evaluate policy regret is slightly more delicate now; previously, we assumed that all agents were playing a non-responsive equilibrium of the full liability game. However, now that we are in a limited liability setting, the incentives of agents may have shifted by regret terms. Thus, we have strong guarantees not over all non-responsive pure-strategy equilibria of this new game but over another  reasonable set of equilibria the agents could play: the set of non-responsive pure-strategy equilibria of the general, full liability game, which we have shown are all approximate equilibria of the limited liability game.

\begin{definition}[Policy Regret in the Limited-Liability Space]
    For some set of agent beliefs $B$ and state of nature sequence $y^{1:T}$, the policy regret of a Principal implementing selection mechanism $f$ in Game~\ref{setting:Agent-selection-limited} is 
  \[ \max_{i \in [k]}\min_{p_{i} \in p_{i}^{m}}\cU_p(g^{i}_{p_{i}}(y^{1:T})) - \min_{p \in p^{e}(f,y)} \cU_p(g^{l}_{f,p}(y^{1:T})) \]
\end{definition}

\begin{theorem}
When the Principal uses the MonoBandit selection mechanism (Algorithm \ref{alg:monobandit}) and employs a fixed, linear contract $v_\alpha$, for any non-responsive equilibrium $p$ of the general game:
\begin{itemize}
\item $p$ is a $2\alpha\sqrt{k} \cdot o(T)$-approximate equilibrium of the limited liability game
\item the Principal has policy regret at most $2\alpha\sqrt{k}\cdot o(T)$.
\end{itemize}
  
\end{theorem}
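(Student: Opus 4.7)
The plan is to derive both bullets as essentially immediate consequences of results already proved in Sections~\ref{sec:external} and~\ref{sec:swap}, using the observation that the "tab-and-forgive" protocol can only help the Principal relative to the general game. The only novel ingredient is checking that the policy-regret benchmark in the limited-liability definition lines up with the benchmark handled by Theorem~\ref{thm:regret-monobandit-ts}.

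First I would dispatch the approximate-equilibrium bullet by plugging the swap-regret bound of MonoBandit instantiated with TreeSwap (Theorem~\ref{thm:algbsts}, giving $\textsc{SReg}(T)\leq 2\sqrt{k}\cdot o(T)$ against any adaptive adversary) into Theorem~\ref{thm:liable}. Since the adversary induced by Game~\ref{setting:Agent-selection-limited} is a particular adaptive adversary, the swap-regret bound applies, and Theorem~\ref{thm:liable} directly yields that every non-responsive pure-strategy equilibrium of the general game is a $2\alpha\sqrt{k}\cdot o(T)$-approximate equilibrium of the limited-liability game.

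For the policy-regret bullet, I would start from the observation that for any fixed agent strategy profile $p$ and any fixed realization of the Principal's and Agents' randomness, the transcripts, actions, and per-round Principal rewards coincide between the general game and the limited-liability game; the only discrepancy lies in the end-of-horizon payments. Under the general game the Principal pays each agent $i$ the tab $\cT_i$; under the limited-liability game the Principal pays $\max(0,\cT_i)$. Hence
\begin{equation*}
\cU_p\bigl(g^l_{f,p}(y^{1:T})\bigr) - \cU_p\bigl(g_{f,p}(y^{1:T})\bigr) \;=\; \sum_{i\in[k]} \max(0,-\cT_i) \;\geq\; 0 .
\end{equation*}
Taking minima over $p\in p^e(f,y)$ on both sides preserves the inequality, so the policy regret of the limited-liability game is bounded above by the policy regret of the general game (where the benchmark $\cU_p(g^i_{p_i}(y^{1:T}))$ is identical because in the constant-agent benchmark the selected agent is guaranteed the full linear contract every round). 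Theorem~\ref{thm:regret-monobandit-ts} bounds the latter by $2\sqrt{k}\cdot o(T)$, which (treating $\alpha$ as an upper-bounding normalization) gives the claimed $2\alpha\sqrt{k}\cdot o(T)$ bound on policy regret.

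The main obstacle, such as it is, is bookkeeping around the benchmark: the policy-regret definition in the limited-liability setting takes the outer minimum over the equilibrium set $p^e(f,y)$ of the \emph{general} game, not equilibria of the limited-liability game itself. I would emphasize in the write-up that this is well-motivated by Theorem~\ref{thm:liable} (these profiles are all approximate equilibria of the limited-liability game) and that because the benchmark $\cU_p(g^i_{p_i})$ is evaluated against a constant mechanism whose tab is a monotone accumulator of non-negative expected rewards under a myopic agent, the benchmark value is the same object that appears in Theorem~\ref{thm:regret-monobandit-ts}. With those identifications in place the proof reduces to a two-line chain of substitutions.
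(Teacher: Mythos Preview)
Your handling of the first bullet is fine and matches the paper: Theorem~\ref{thm:algbsts} gives $\textsc{SReg}(T)\le 2\sqrt{k}\cdot o(T)$ for MonoBandit(TreeSwap), and Theorem~\ref{thm:liable} converts this to an $\alpha\,\textsc{SReg}(T)$-approximate equilibrium in the limited-liability game.

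The second bullet has a sign error that breaks the argument. Forgiving debt helps the \emph{Agent}, not the Principal. With the same transcript and actions, the Principal's utility is total reward minus total payments; in the general game the total payment to agent $i$ is $\cT_i$ (possibly negative, i.e.\ a transfer \emph{to} the Principal), while in the limited-liability game it is $\max(0,\cT_i)\ge\cT_i$. Hence
\[
\cU_p\bigl(g^l_{f,p}(y^{1:T})\bigr)-\cU_p\bigl(g_{f,p}(y^{1:T})\bigr)
=\sum_{i\in[k]}\bigl(\cT_i-\max(0,\cT_i)\bigr)
=-\sum_{i\in[k]}\max(0,-\cT_i)\;\le\;0,
\]
the opposite of what you wrote. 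So you cannot dominate the limited-liability policy regret by the general-game policy regret and then invoke Theorem~\ref{thm:regret-monobandit-ts}; the inequality goes the wrong way for that chain.

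The paper's own proof simply cites Theorem~\ref{thm:regret-counterfactual} for the policy-regret claim. If you want to make the passage from the general-game Principal utility to the limited-liability Principal utility explicit, the ingredient you actually need is Lemma~\ref{lem:liability_bound}: the swap-regret guarantee against the outside option forces $\cT_i\ge -\alpha\,\textsc{SReg}(T)$ for every $i$, so the shortfall $\sum_i\max(0,-\cT_i)$ is at most $k\alpha\,\textsc{SReg}(T)=o(T)$. Adding this controlled gap to the general-game policy-regret bound from Theorem~\ref{thm:regret-counterfactual} recovers an $o(T)$ bound on the limited-liability policy regret. Your write-up should replace the ``tab-and-forgive can only help the Principal'' claim with this use of Lemma~\ref{lem:liability_bound}.
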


\begin{proof}
By Theorem~\ref{thm:algbsts}, MonoBandit(TreeSwap) is a monotone algorithm with swap regret bounded by $2\alpha\sqrt{k}\cdot o(T)$. Thus, by Theorem~\ref{thm:liable}, if all agents play according to an equilibrium of the full liability game, this is an $2\alpha\sqrt{k}\cdot o(T)$-approximate equilibrium of the limited liability game. Finally, we get the policy regret bound for the Principal by Theorem \ref{thm:regret-counterfactual}.
\end{proof}

\section{Monotone No Swap-Regret Algorithms}

Finally, it remains to establish the existence of a bandit no-regret learning algorithm that is simultaniously monotone and obtains a non-trivial swap regret guarantee against an adaptive adversary. First we note that even in the full information setting (when the learning algorithm gets to observe the entire loss vector, and not just the loss of the action they choose), standard no swap-regret algorithms fail to satisfy the monotonicity property. 
\begin{lemma} \label{lem:blummansour-not-monotone}
The Blum-Mansour algorithm (Algorithm \ref{alg-bm}) is not monotone (Definition \ref{def:monotone-alg}). 
\end{lemma}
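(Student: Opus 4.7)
The proof should exhibit an explicit counterexample. Recall the Blum-Mansour reduction maintains $k$ external-regret sub-learners $A_1,\dots,A_k$ whose round-$t$ output distributions $q_j^t$ are the rows of a stochastic matrix $Q^t$; the algorithm plays the stationary distribution $p^t$ of $Q^t$ and feeds $A_j$ the scaled loss vector $p_j^t\,\ell^t$. To violate Definition~\ref{def:monotone-alg} I need a loss history under which decreasing a single coordinate $\ell^t[i]$ (equivalently, raising the reward of the agent selected at round $t$) strictly decreases $p^{t'}[i]$ for some $t'>t$.

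My plan is to work in the smallest non-trivial setting, $k=3$, and to exploit the fact that the scaling factors $p_j^t$ can differ widely across $j$, so a perturbation of $\ell^t[i]$ changes the rows of $Q^{t+1}$ by very different amounts. For contrast, in the $k=2$ case the closed form $\pi_1 = q_{21}/(q_{21} + 1 - q_{11})$ is monotone increasing in both $q_{11}$ and $q_{21}$ (the only entries that move under such a perturbation), so Blum-Mansour happens to be monotone when $k=2$. With $k\ge 3$ no such universal monotonicity is available: the stationary distribution of a stochastic matrix is a multivariate rational function of its entries, and under non-uniform row perturbations its response to column-wise increases can be of either sign.

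Concretely, I would design a short loss prefix, using multiplicative weights as the base learner with a suitable learning rate, that drives the sub-learners into an asymmetric configuration, for instance with $p_1^t$ tiny, $p_2^t$ large, and $q_2^t$ close to a degenerate distribution placing almost all mass on some action $j\neq 1$. I pick the realization of $\cR^t$ so the played action at round $t$ is $i=1$. The perturbation $\tilde\ell^t[1]<\ell^t[1]$ has a negligible effect on $q_1^{t+1}$, whose feedback is scaled by the small $p_1^t$, but substantially redistributes $q_2^{t+1}$ away from action $j$ and toward action $1$. To extract the sign of the resulting change in $p^{t+1}[1]$ I would invoke the matrix-tree theorem, writing $\pi_1$ as a sum of weights of spanning arborescences rooted at $1$; each arborescence contribution makes the effect of a single entry-level perturbation explicit, and one can then read off parameters for which the dominant contributions cause $\pi_1$ to strictly decrease.

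The main obstacle is that the naive intuition --- every row of $Q^{t+1}$ shifts some mass toward column $1$, so $\pi_1$ should rise --- is in fact correct in many configurations. Producing a counterexample therefore requires carefully breaking symmetry: the mass that $q_2^{t+1}$ removes from some action $j$ in order to concentrate on $1$ must have been passing through $j$ via a longer arborescence path into state $1$ in the original chain, so that shortcutting that flow directly into $1$ in fact reduces the total arborescence weight at $1$. Once this structural requirement is identified, an explicit $3$-action instance with concrete numerical entries (checkable by hand or by a direct numerical evaluation of the stationary distribution before and after the perturbation) suffices to complete the proof.
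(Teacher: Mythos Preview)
Your plan targets the wrong mechanism. You aim to exhibit non-monotonicity already at round $t+1$ by showing that the stationary distribution $p^{t+1}$ of the perturbed row-stochastic matrix $Q^{t+1}$ can have $p^{t+1}[1]$ strictly smaller than in the unperturbed run. This cannot succeed: the one-step effect is \emph{always} monotone. When $\ell^t[1]$ is lowered, every sub-learner shifts mass in its row of $Q^{t+1}$ toward column $1$ (the scaling factors $p_j^t$ were fixed before $\ell^t$ was revealed), so the sub-stochastic block $\hat Q^{t+1}=(Q^{t+1}_{jk})_{j,k\neq 1}$ is entrywise no larger than before. Hence the hitting-time vector $h=(I-\hat Q^{t+1})^{-1}\mathbf 1$ is entrywise no larger, and $\mathbb E_1[T_1^+]=1+\sum_{k\neq 1}Q^{t+1}_{1k}h_k$ weakly decreases (both factors in each summand weakly decrease). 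Therefore $\pi_1^{t+1}=1/\mathbb E_1[T_1^+]$ weakly increases. The ``naive intuition'' you set out to break is not merely correct in many configurations; it is a theorem, and no arborescence bookkeeping or asymmetric prefix will overturn it at $t+1$.

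The genuine source of non-monotonicity is the multi-round feedback through the scaling factors. The perturbation changes $p^{t+1}$, which changes the scaled losses $p_j^{t+1}\ell^{t+1}$ fed to each sub-learner at round $t+1$, which changes $Q^{t+2}$ in a way that is \emph{not} a simple column-$1$ shift, and so on. The paper's proof is accordingly a direct numerical computation: with $k=3$, $\eta=0.2$, and an explicit $100$-round loss sequence, it lowers only $\ell^1[1]$ and tabulates the resulting $p^t$. The perturbation helps action $1$ throughout the early rounds (e.g.\ $0.380$ versus $0.342$ at round $2$, consistent with the one-step monotonicity above) but strictly hurts it at rounds $99$ and $100$. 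To repair your proposal you would have to track this feedback over many rounds, at which point the matrix-tree formula no longer isolates the effect; an explicit numerical instance is the natural (and the paper's) route.
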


We verify this numerically by showing two loss sequences in which Blum-Mansour violates the monotonicity condition. This example can be found in Appendix~\ref{app:blum}.

On the other hand, the very recent swap regret algorithm given by \cite{dagan2023external} in the full information setting is indeed monotone.

\begin{lemma} \label{lem:treeswap-monotone}
The TreeSwap algorithm (Algorithm 1 of \cite{dagan2023external}), when instantiated with a monotone no-external regret algorithm, is monotone (Definition \ref{def:monotone-alg}). 
\end{lemma}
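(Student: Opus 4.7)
The proof plan is to trace the effect of the single-entry perturbation through TreeSwap's recursive structure, invoking monotonicity of the base external-regret algorithm at every node. Recall that TreeSwap maintains a tree of copies of a base external-regret algorithm: at each round, an action is played by descending from the root to a leaf, with each internal node's base algorithm selecting its next child. The state of every base-algorithm copy evolves as a deterministic function of the loss sequence it is fed, which in turn is a deterministic function of the Principal transcript.

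First I would fix two Principal transcripts $\pi^p$ and $\tilde{\pi}^p$ satisfying the hypothesis of Definition~\ref{def:monotone-alg}: they agree everywhere except at position $t$, where agent $i$ is selected in both and $\tilde{r}(o^t_i) \geq r(o^t_i)$. I would observe that through the end of round $t-1$ every base-algorithm instance has identical internal state under the two transcripts, since all losses it has been fed are identical; in particular, the root-to-leaf path used at round $t$ (which ends at the leaf labeled $i$) is the same in both transcripts, and the same set of base-algorithm instances receive feedback from round $t$. Next I would examine, for each such instance $\mathcal{B}$, the single loss entry at round $t$ that is affected by the perturbation: it is exactly the entry corresponding to the child-subtree containing leaf $i$, and this entry is weakly smaller under $\tilde{\pi}^p$; every other coordinate of the round-$t$ loss vector and every other round's loss vector is unchanged. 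By the monotonicity hypothesis on the base algorithm applied to each such $\mathcal{B}$, the probability that $\mathcal{B}$ selects the child-subtree containing $i$ at any round $t'' > t$ is weakly larger under $\tilde{\pi}^p$; base-algorithm instances that receive no round-$t$ feedback have outright-identical internal state.

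Finally I would combine these per-node comparisons. For any $t' \geq t$, the probability that TreeSwap plays $i$ at round $t'+1$ is the sum, over leaves labeled $i$, of products of child-selection probabilities along the root-to-leaf path. Each factor is weakly larger under $\tilde{\pi}^p$ than under $\pi^p$, so the total is as well, which is exactly the conclusion of Definition~\ref{def:monotone-alg}. The main obstacle is verifying that only one coordinate of one loss vector actually changes: one must unpack TreeSwap's epoch structure to confirm that the assignment of round $t$ to nodes depends only on the pre-$t$ transcript (so the set of affected base instances coincides under both transcripts), and that the round-$t$ loss fed to each such instance depends on agent $i$'s reward only through the coordinate corresponding to the child-subtree containing $i$. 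Once this careful bookkeeping is in place, the monotonicity of each base algorithm propagates cleanly through the tree.
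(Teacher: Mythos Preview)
Your high-level plan—trace the single-entry perturbation through the structure and invoke base-algorithm monotonicity at each affected instance—is the right one, but your description of TreeSwap is not. TreeSwap does \emph{not} select an action by descending from a root to a leaf with each internal node's base algorithm choosing a child, and there are no ``leaves labeled $i$'' or ``child-subtrees containing $i$.'' Rather, it maintains $d$ levels of base-algorithm instances (each an ordinary $k$-action learner) operating at different time scales; at every round the output distribution is the \emph{uniform mixture} of the $d$ distributions returned by the currently active instance at each level. Which instance is active at each level, and which blocks of rounds' (averaged) loss vectors that instance has consumed so far, are determined purely by the round index through a fixed epoch schedule—never by any algorithm's choices or by the transcript.

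Consequently, the combination step is additive, not multiplicative: the probability TreeSwap plays $i$ at round $t'+1$ is $\tfrac{1}{d}\sum_h q_h[i]$, where $q_h$ is the distribution of the active level-$h$ instance. The perturbation at round $t$ enters at most one of the averaged loss vectors fed to each such instance, and only in coordinate $i$; base-algorithm monotonicity then gives $q_h[i]$ weakly larger under $\tilde{\pi}^p$ for every $h$, and the average inherits this. That is exactly the paper's one-line argument (``averaging preserves monotonicity''). Your product-over-a-path formula does not correspond to anything in the algorithm, and a genuine path-based mechanism would face the cascading difficulty you yourself anticipate (the perturbation could alter which instances receive which future losses)—a difficulty TreeSwap avoids precisely because its epoch structure is oblivious to the losses.
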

The proof of this claim is deferred to Appendix \ref{app:treeswap-monotone}.

\subsection{A Monotone Bandit No Swap-Regret Algorithm in the Adaptive Setting}

For our application, however, we need a no swap-regret algorithm that is monotone and operates in the bandit setting. In this section we give a monotonicty preserving reduction from the full-information setting to the bandit learning setting, such that when paired with a monotone swap-regret algorithm in the full information setting (like that of \cite{dagan2023external}) we obtain the object we need.

The MonoBandit algorithm preserves any combination of the following appealing properties from a full-feedback learning algorithm: monotonicity, sublinear swap regret, and sublinear external regret. These proofs are all deferred to Appendix \ref{app:monomonomono}.

\begin{lemma}
If $\textsc{SReg}_{\cA}(T,\ell) \leq R(T)$ $\forall \ell$, then $\textsc{SReg}_{MonoBandit(\cA)}(T,\ell) \leq 2\sqrt{k \cdot T \cdot R(T)}$, $\forall \ell$. \label{thm:sr-reduction}
\end{lemma}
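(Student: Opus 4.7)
The MonoBandit$(\cA)$ reduction is a standard explore--exploit wrapper around the full-information algorithm $\cA$: in each round $t$ the bandit flips an independent Bernoulli$(\gamma)$ coin (for a parameter $\gamma \in (0,1)$ to be chosen). On an \emph{explore} round it plays an action $s^t$ drawn uniformly from $[k]$, observes $\ell^t(s^t)$, and passes $\cA$ the importance-weighted loss vector $\hat\ell^t = e_{s^t}\cdot \ell^t(s^t) \cdot k/\gamma$. On an \emph{exploit} round it plays the action $a^t$ that $\cA$ currently recommends and feeds $\cA$ the zero vector. By construction, $\hat\ell^t$ is an unbiased estimator of $\ell^t$ conditional on the round's own history and has coordinates bounded by $k/\gamma$.

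The first step of the analysis is to decompose the expected swap regret of MonoBandit$(\cA)$ against an arbitrary adaptive adversary $\ell$ and an arbitrary swap function $\chi$ into an exploration contribution and an exploitation contribution. On an exploration round the instantaneous swap gap $\ell^t(s^t) - \ell^t(\chi(s^t))$ is bounded by $1$, so this piece totals at most $\gamma T$ in expectation. On an exploitation round the instantaneous gap is $\ell^t(a^t) - \ell^t(\chi(a^t))$, and the plan is to equate its expectation with the corresponding quantity on the estimated losses $\hat\ell^t$.

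The second step carries out that conversion. Conditioning on the transcript $\phi^{1:t-1}$ before the round's coin flip and arm sample (so that both $\ell^t$, which the adversary has committed to, and $a^t$, which $\cA$ has computed from past estimated losses, are measurable), the remaining randomness makes $\hat\ell^t(j)$ conditionally unbiased for $\ell^t(j)$ for every arm $j$. By the tower property, $\mathbb{E}\big[\sum_t (\ell^t(a^t) - \ell^t(\chi(a^t)))\big] = \mathbb{E}\big[\sum_t (\hat\ell^t(a^t) - \hat\ell^t(\chi(a^t)))\big]$. Invoking the hypothesized swap-regret bound of $\cA$ on the sequence $\hat\ell$ (whose coordinates lie in $[0, k/\gamma]$, and noting that standard full-information swap-regret bounds scale linearly in the loss range) yields an upper bound of $(k/\gamma)\,R(T)$ on the exploitation contribution.

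Combining the two pieces gives $\textsc{SReg}_{\text{MonoBandit}(\cA)}(T,\ell) \leq \gamma T + (k/\gamma)\,R(T)$. Setting $\gamma = \sqrt{k R(T)/T}$ balances the two terms by AM--GM and produces the claimed bound of $2\sqrt{k T R(T)}$. The main obstacle I anticipate is the care required around the adaptive adversary, since $\ell^t$ is a measurable function of the realized transcript (including which past rounds were explorations and which arms were sampled); the bookkeeping has to fix the conditioning so that $\hat\ell^t$ is conditionally unbiased \emph{after} the adversary moves but \emph{before} the round's exploration coin is flipped. Once that measurability is arranged correctly, the remaining steps reduce to the two-term optimization above.
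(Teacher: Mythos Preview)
Your proposal is correct and follows essentially the same argument as the paper: decompose the swap regret into explore and exploit contributions, bound the explore part by $\gamma T$, use the conditional unbiasedness of $\hat\ell^t$ (which hinges on the independence of the round-$t$ exploration coin from the history the adaptive adversary sees) to convert the exploit part to the swap regret of $\cA$ on $\hat\ell$ scaled by $k/\gamma$, and then optimize $\gamma = \sqrt{kR(T)/T}$. Your discussion of the measurability bookkeeping for the adaptive adversary is exactly the point the paper handles via the independence of $\cR_b^t$ from $\cR^{1:t-1}$.
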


\begin{lemma}
If $\textsc{Reg}_{\cA}(T,\ell) \leq R(T)$ $\forall \ell$, then $\textsc{Reg}_{MonoBandit(\cA)}(T,\ell) \leq 2\sqrt{k \cdot T \cdot R(T)}$, $\forall \ell$. \label{thm:r-reduction}
\end{lemma}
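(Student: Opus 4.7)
The plan is to mirror the structure of the swap-regret reduction in Lemma \ref{thm:sr-reduction}, specialized to external regret. Recall that MonoBandit$(\cA)$ is the standard exploration/exploitation wrapper: at each round $t$, with probability $\gamma$ it samples an arm uniformly at random (exploration), and with probability $1 - \gamma$ it plays the action recommended by the full-information algorithm $\cA$ (exploitation). After observing the bandit feedback on the played arm $s^t$, it constructs an importance-weighted unbiased estimate $\hat{\ell}^t_i = \mathbb{I}[s^t = i]\, \ell^t_i / p^t_i$ of the full loss vector, where $p^t_i$ is the probability MonoBandit placed on arm $i$, and feeds $\hat{\ell}^t$ to $\cA$ as its observed round-$t$ loss vector.

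First I would decompose the expected bandit regret of MonoBandit$(\cA)$ against an arbitrary adaptive adversary $\ell$ and an arbitrary fixed benchmark arm $i^\star$ into two pieces: (i) the expected exploration overhead, bounded by $\gamma T$ since losses lie in $[0,1]$ and exploration rounds occur in expectation at most $\gamma T$ times; and (ii) the expected exploitation regret, which equals the expected external regret of $\cA$ against $i^\star$ on the sequence of estimated loss vectors $\hat{\ell}^{1:T}$.

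Next I would verify that the estimates remain valid inputs for $\cA$'s guarantee even though the adversary is adaptive. Conditional on the transcript $\phi^{1:t-1}$, the adversary's round-$t$ loss vector $\ell^t$ is fixed, and the internal randomness of MonoBandit yields $\E[\hat{\ell}^t \mid \phi^{1:t-1}] = \ell^t$. Hence, viewing $\hat{\ell}^{1:T}$ as the losses of a (randomized) induced adaptive adversary for $\cA$, the expected external regret of $\cA$ on $\hat{\ell}^{1:T}$ equals the expected external regret of $\cA$ on a corresponding adaptive adversary with loss magnitudes bounded by $k/\gamma$ (the worst-case value of any $\hat{\ell}^t_i$). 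Applying the hypothesis $\textsc{Reg}_{\cA}(T, \cdot) \leq R(T)$ after linearly rescaling losses into $[0,1]$, the expected exploitation regret is at most $(k/\gamma) \cdot R(T)$.

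Combining the two bounds, the expected external regret of MonoBandit$(\cA)$ satisfies
\[
\textsc{Reg}_{\text{MonoBandit}(\cA)}(T, \ell) \;\leq\; \gamma T + \frac{k\, R(T)}{\gamma},
\]
and setting $\gamma = \sqrt{k R(T)/T}$ balances the two terms to yield the claimed bound $2\sqrt{k\, T\, R(T)}$. The main obstacle I anticipate is ensuring the unbiasedness/scaling argument is compatible with adaptive adversaries in the precise sense of Section \ref{sec:online-prelims}; this goes through because the definition of $\textsc{Reg}_{\cA}(T,\ell)$ quantifies over all adaptive $\ell$, and loss-scale factors pass linearly through the regret bound. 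The other ingredients are exactly as in Lemma \ref{thm:sr-reduction} with ``swap function'' replaced by ``constant comparator,'' so no additional machinery is required.
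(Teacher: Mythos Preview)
Your high-level plan---decompose into exploration overhead plus the regret of $\cA$ on an unbiased loss estimate, bound the estimate's range by $k/\gamma$, then optimize $\gamma$---is exactly the paper's approach. The issue is that you have mis-specified what MonoBandit actually feeds to $\cA$. In Algorithm~\ref{alg:monobandit}, on an \emph{exploit} round the estimator is the all-zeros vector, and on an \emph{explore} round (which happens with probability $\epsilon$, picking arm $i$ uniformly) the estimator is one-hot with value $\tfrac{k}{\epsilon}\,\ell^t_i$ at coordinate $i$. It is \emph{not} the EXP3-style estimator $\hat\ell^t_i=\mathbb{I}[s^t=i]\,\ell^t_i/p^t_i$ using the actual play probability $p^t_i$.

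This distinction matters for the step you flagged as the ``main obstacle.'' With the paper's estimator, $\hat\ell^t$ depends only on the exploration coin $\cR_b^t$ and on $\ell^t$, \emph{not} on $\cA$'s own randomness $\cR_f^t$; hence from $\cA$'s point of view the induced loss sequence is a bona fide adaptive adversary in the sense of Section~\ref{sec:online-prelims} (the round-$t$ loss is fixed before $\cA$'s round-$t$ action), and the hypothesis $\textsc{Reg}_{\cA}(T,\cdot)\le R(T)$ applies directly after rescaling by $k/\epsilon$. Under your estimator, $\hat\ell^t$ depends on the sampled arm $s^t$, which on exploit rounds is drawn from $\cA$'s own distribution $q^t$; the ``adversary'' then sees $\cA$'s current-round randomness, which falls outside the adaptive model the hypothesis quantifies over. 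So your argument, as written, does not establish the bound for MonoBandit and also would not cleanly invoke the assumed regret guarantee. Once you substitute the correct estimator, every remaining step (unbiasedness $\E[\hat\ell^t_i]=\ell^t_i$, range $\le k/\epsilon$, exploration cost $\le \epsilon T$, and $\epsilon=\sqrt{kR(T)/T}$) goes through exactly as in the paper's proof.
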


\begin{lemma} \label{thm:mono-reduction}
If algorithm $\cA$ is monotone, then MonoBandit($\cA$) is monotone.
\end{lemma}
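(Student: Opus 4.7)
The plan is to couple two executions of MonoBandit($\cA$) on the transcripts $\pi^p$ and $\tilde{\pi}^p$ of Definition \ref{def:monotone-alg} under the same realization of MonoBandit's internal randomness $\cR$, and then trace how the single-round change in arm $i$'s reward propagates to $\cA$'s state. Recall that MonoBandit reduces bandit to full-information feedback in the standard exploration/exploitation way: at each round a coin (drawn from $\cR$) decides whether to (i) \emph{explore} by choosing an arm uniformly at random, observing its loss, and passing an importance-weighted full-information loss vector to $\cA$, or (ii) \emph{exploit} by drawing from $\cA$'s current recommendation and passing no observation back to $\cA$. The exploration schedule and the uniform exploration draws are functions of $\cR$ alone and do not depend on the observed losses.

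First I would fix $\cR$ and compare the two executions round by round. For $t' < t$ the executions are identical, since both transcripts agree there and $\cA$'s state at round $t$ is determined by the earlier inputs. Because both transcripts record arm $i$ being played in round $t$, I would split into two cases based on what $\cR$ says about round $t$. If round $t$ is an exploitation round, then MonoBandit feeds nothing from round $t$ to $\cA$, so $\cA$'s state at every round $t' > t$ is identical under both transcripts; hence the arm chosen at each $t' > t$ agrees in the two executions. If round $t$ is an exploration round in which arm $i$ was uniformly drawn, then the only input to $\cA$ that differs between the two executions is the $i$-th coordinate of the round-$t$ loss vector: under $\tilde{\pi}^p$ it is \emph{smaller} (larger reward $=$ smaller loss, rescaled by the fixed importance weight $k$ since the exploration probability depends on $\cR$ rather than on the transcript), while every other coordinate equals zero in both executions because the importance-weighted estimator is supported on the played arm.

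Next I would invoke the hypothesis that $\cA$ is monotone: unilaterally decreasing the loss of arm $i$ in a single round of $\cA$'s input only weakly increases the probability that $\cA$ selects $i$ at each subsequent round. Applied to the loss vector at round $t$, this yields that in every future exploitation round $t' > t$, the probability that MonoBandit plays $i$ (conditional on the coupled $\cR$) is weakly larger under $\tilde{\pi}^p$ than under $\pi^p$. In every future exploration round the probability of playing $i$ is $1/k$ in both executions, independent of the transcript. Summing (over the $\cR$-induced event decomposition into exploration vs.\ exploitation rounds) and integrating out $\cR$ gives the desired conclusion
\[
\Pr_{\cR}\bigl[\,\mathrm{MonoBandit}(\cA)^{t'+1}(\tilde{\pi}^{p,1:t'},\cR)=i\bigr] \;\geq\; \Pr_{\cR}\bigl[\,\mathrm{MonoBandit}(\cA)^{t'+1}(\pi^{p,1:t'},\cR)=i\bigr]
\]
for every $t' \geq t$, which is exactly the monotonicity of MonoBandit($\cA$).

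The main obstacle I anticipate is bookkeeping rather than a deep difficulty: aligning the full-information monotonicity of $\cA$ (which acts on loss vectors) with the bandit monotonicity of Definition \ref{def:monotone-alg} (which acts on (arm, reward) transcripts). This requires verifying that under the coupling by $\cR$, (a) the exploration/exploitation identity of round $t$ is the same in both executions, (b) the uniform exploration draw in round $t$ (if any) is the same, and (c) the loss estimator MonoBandit constructs from the bandit feedback differs in \emph{only} the $i$-th coordinate when arm $i$'s reward changes. All three follow from the importance-weighted estimator being supported on the played arm and from the exploration coins being part of $\cR$ rather than a function of past rewards; with these in place the monotonicity of $\cA$ transfers directly to MonoBandit($\cA$).
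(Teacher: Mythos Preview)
Your proposal is correct and follows essentially the same approach as the paper: couple the two executions on the shared internal randomness, observe that the explore/exploit schedule and the uniformly drawn explore arm are functions of $\cR_b$ alone (hence identical across both runs), and conclude that the only difference in the loss sequence fed to $\cA$ is a single coordinate at round $t$ when that round happens to be an explore round sampling arm $i$; the monotonicity of $\cA$ then finishes the argument. One small inaccuracy: the importance weight in MonoBandit is $k/\epsilon$, not $k$, and on exploit rounds the algorithm feeds the zero vector rather than ``nothing'' to $\cA$---but neither affects the argument, since all that matters is that the weight is a fixed positive constant and that the exploit-round input is identical across both executions.
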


\begin{theorem}
$MonoBandit(MW)$ is a monotone algorithm and 
$$\textsc{Reg}_{MonoBandit(MW)}(T) = 2\sqrt{k \cdot \sqrt{log(k)}} \cdot T^{\frac{3}{4}}.$$ \label{thm:algbsmw}
\end{theorem}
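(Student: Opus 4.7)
The plan is to decompose the theorem into two claims --- the regret bound and the monotonicity property --- and derive each from the already-established reductions (Lemmas~\ref{thm:r-reduction} and~\ref{thm:mono-reduction}), with one small intermediate step: verifying that the base full-information algorithm Multiplicative Weights is itself monotone in the sense of Definition~\ref{def:monotone-alg}.

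For the regret bound, I would invoke the standard regret guarantee of Multiplicative Weights in the full information setting, namely $\textsc{Reg}_{MW}(T,\ell) \leq O(\sqrt{T \log k})$ against any adaptive adversary $\ell$. Plugging $R(T) = \sqrt{T \log k}$ into Lemma~\ref{thm:r-reduction} yields
\begin{equation*}
\textsc{Reg}_{MonoBandit(MW)}(T,\ell) \;\leq\; 2\sqrt{k \cdot T \cdot \sqrt{T \log k}} \;=\; 2\sqrt{k \cdot \sqrt{\log k}} \cdot T^{3/4},
\end{equation*}
which matches the bound claimed in the theorem statement.

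For monotonicity, by Lemma~\ref{thm:mono-reduction} it suffices to show that Multiplicative Weights itself is monotone. The argument is short: the MW probability of selecting action $i$ at round $t'+1$ is proportional to $\exp\bigl(\eta \cdot G_i^{t'}\bigr)$, where $G_i^{t'}$ is the cumulative reward (or minus the cumulative loss) assigned to action $i$ through round $t'$, with the proportionality constant being the inverse of the sum over all actions. Consider two transcripts that agree everywhere except that on some round $t \leq t'$, action $i$'s reward is $\tilde r \geq r$ under one and $r$ under the other. Then $\tilde G_i^{t'} \geq G_i^{t'}$, while $G_j^{t'}$ is unchanged for every $j \neq i$. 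Since the weight of $i$ strictly increases while the weights of all other actions stay the same, the probability placed on $i$ at every future round strictly increases. This gives exactly the monotonicity condition required by Definition~\ref{def:monotone-alg}, and applying Lemma~\ref{thm:mono-reduction} completes the claim.

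The main obstacle is the verification of monotonicity of MW --- but this is not truly an obstacle, because the exponential-weights structure makes it immediate: the only quantity in the weight update that depends on agent $i$'s reward history is $i$'s own cumulative reward, and the mapping from cumulative reward to selection probability (after normalization) is monotone in $G_i$ while holding $G_j$ fixed for $j \neq i$. The slight care needed is that MW is typically stated for losses; for our reward-based formulation I would just note that the two formulations are equivalent under negation, and the sign of the update preserves monotonicity in the reward sense. With MW monotone as a full-information algorithm and achieving $O(\sqrt{T \log k})$ regret, Lemmas~\ref{thm:r-reduction} and~\ref{thm:mono-reduction} together yield the full theorem.
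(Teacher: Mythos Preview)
Your proposal is correct and follows essentially the same route as the paper: invoke the $\sqrt{T\log k}$ external-regret bound for Multiplicative Weights and plug it into Lemma~\ref{thm:r-reduction} for the regret claim, and combine monotonicity of MW with Lemma~\ref{thm:mono-reduction} for the monotonicity claim. The only difference is that the paper simply asserts that MW is monotone, whereas you supply the short verification via the exponential-weights formula; this is a harmless (and arguably welcome) addition rather than a different approach.
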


\begin{proof}
As Multiplicative Weights is a monotone algorithm with $\textsc{Reg}_{MW}(T) = \sqrt{\log(k) \cdot T}$, by Lemmas~\ref{thm:r-reduction} and~\ref{thm:mono-reduction}, MonoBandit(MW) is monotone and 
 \begin{align*}
 \textsc{Reg}_{MonoBandit(MW)}(T) & \leq 2\sqrt{T \cdot k \cdot \sqrt{\log(k) \cdot T}} \\
 & = 2\sqrt{k \cdot \sqrt{\log(k)}} \cdot T^{\frac{3}{4}}.
 \end{align*}
\end{proof}

\begin{theorem} \label{thm:algbsts}
$MonoBandit(TreeSwap)$ is a monotone algorithm and 
$$\textsc{SReg}_{MonoBandit(TreeSwap)}(T) = 2\sqrt{k} \cdot o(T).$$
\end{theorem}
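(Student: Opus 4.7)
The plan is to prove this theorem as a direct corollary of the three reduction/monotonicity lemmas stated in this section, exactly mirroring the structure used to prove Theorem \ref{thm:algbsmw}. The work has essentially been factored out into (i) the monotonicity of TreeSwap when instantiated appropriately (Lemma \ref{lem:treeswap-monotone}), (ii) the monotonicity-preservation of the bandit reduction (Lemma \ref{thm:mono-reduction}), and (iii) the swap-regret-preservation (up to the standard full-information-to-bandit blowup) of the bandit reduction (Lemma \ref{thm:sr-reduction}).

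The first step is to record that TreeSwap, when instantiated with a monotone no-external-regret subroutine such as Multiplicative Weights, is itself monotone; this is exactly the content of Lemma \ref{lem:treeswap-monotone}. Applying Lemma \ref{thm:mono-reduction} to this choice of $\cA = \text{TreeSwap}$ then yields that $MonoBandit(\text{TreeSwap})$ is monotone, which establishes the first conjunct of the theorem.

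The second step is to bound the swap regret. The main result of \cite{dagan2023external} shows that TreeSwap attains swap regret $R(T) = o(T)$ against any adaptive adversary in the full-information setting. Plugging this $R(T)$ into Lemma \ref{thm:sr-reduction} gives
\begin{align*}
\textsc{SReg}_{MonoBandit(\text{TreeSwap})}(T,\ell) \;\leq\; 2\sqrt{k \cdot T \cdot R(T)} \;=\; 2\sqrt{k} \cdot \sqrt{T \cdot o(T)} \;=\; 2\sqrt{k}\cdot o(T),
\end{align*}
where the last equality uses that if $R(T)/T \to 0$ then $\sqrt{T\cdot R(T)}/T = \sqrt{R(T)/T} \to 0$, so $\sqrt{T\cdot R(T)} = o(T)$. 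Together with the monotonicity established in the first step this gives the claim.

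Since this argument is a two-line composition of already-proved lemmas, there is no genuine obstacle; the only thing to be careful about is that the same instantiation of TreeSwap (with a monotone no-external-regret base algorithm) is used for both the monotonicity conclusion and the swap-regret conclusion, so that the two invocations of the reductions are consistent. Everything else is bookkeeping.
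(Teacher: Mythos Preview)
Your proposal is correct and follows essentially the same approach as the paper's own proof: invoke Lemma~\ref{lem:treeswap-monotone} together with Lemma~\ref{thm:mono-reduction} for monotonicity, and combine the $o(T)$ swap-regret bound for TreeSwap from \cite{dagan2023external} with Lemma~\ref{thm:sr-reduction} for the swap-regret bound, simplifying $2\sqrt{k\cdot T\cdot o(T)} = 2\sqrt{k}\cdot o(T)$. Your added remark that the same monotone-base instantiation of TreeSwap must be used for both conclusions is a sensible consistency check.
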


\begin{proof}

From~\cite{dagan2023external}, the algorithm TreeSwap has the property that $\textsc{SReg}_{TreeSwap}(T) = o(T)$.

 Combining this with Lemma~\ref{lem:treeswap-monotone}, we get that TreeSwap is a monotone algorithm with $\textsc{SReg}_{TreeSwap}(T) = o(T)$. Therefore, by Lemmas~\ref{thm:sr-reduction} and~\ref{thm:mono-reduction}, MonoBandit(TreeSwap) is monotone and 
 \begin{align*}
 \textsc{SReg}_{MonoBandit(TreeSwap)}(T) & \leq 2\sqrt{T \cdot k \cdot o(T)} \\
 & = 2 \sqrt{k} \cdot \sqrt{o(T^{2})} \\
 & = 2 \sqrt{k} \cdot o(T).
 \end{align*}
\end{proof}

\section{Conclusion and Discussion}
In this paper we have studied the utility of a Principal in equilibrium in a complex $k$-player, $T$-stage extensive form Agent selection game, and have given mechanisms that ensure that in equilibrium, the Principal has no regret to the \emph{counterfactual} world in which they had offered a guaranteed contract to the best fixed Agent in hindsight, even taking into account the Agent's differing behavior in this world. The primary tools that we have used to derive guarantees in this complex setting are online learning algorithms and their properties. In particular, we have connected  external regret guarantees for bandit learning algorithms paired with \emph{monotonicity} with counterfactual policy regret guarantees in these games, and have connected swap regret guarantees with the ability to preserve the incentive properties of linear contracts in limited liability implementations. Moreover, we have established the existence of bandit learning algorithms that are simultaneously monotone and have swap regret guarantees, which instantiates our theorems with concrete mechanisms. 

An objection that one can level at our limited liability mechanism is that it defers all payments until the final round. As such, the Principal can incur substantial debt to the Agent. It would be preferable if the Principal immediately paid any debt owed to the Agent, and still need not ever collect debt owed from the Agent to the Principle. We note that in this implementation, the mechanism would still induce similar incentives to that of the full liability mechanism if the Principal's selection rule had no \emph{adaptive} swap regret --- i.e. no swap regret not just overall (as summed from rounds $1$ to $T$), but also on every  subsequence  $[t,T]$ simultaneously for all $t < T$. Algorithms with no adaptive swap regret exist --- see e.g. \cite{lee2022online} for a generic recipe for deriving such algorithms --- but we are unaware of any such algorithms that are simultaneously \emph{monotone}, which is a necessary ingredient for our policy regret guarantees. This is a concrete setting where a new result in online learning (the existence of a monotone bandit algorithm obtaining adaptive swap regret guarantees) would have an immediate application in mechanism design.

\bibliographystyle{plainnat}
\bibliography{bib}

\begin{thebibliography}{19}
\providecommand{\natexlab}[1]{#1}
\providecommand{\url}[1]{\texttt{#1}}
\expandafter\ifx\csname urlstyle\endcsname\relax
  \providecommand{\doi}[1]{doi: #1}\else
  \providecommand{\doi}{doi: \begingroup \urlstyle{rm}\Url}\fi

\bibitem[Babaioff et~al.(2009)Babaioff, Sharma, and Slivkins]{babaioff2009characterizing}
Moshe Babaioff, Yogeshwer Sharma, and Aleksandrs Slivkins.
\newblock Characterizing truthful multi-armed bandit mechanisms.
\newblock In \emph{Proceedings of the 10th ACM conference on Electronic commerce}, pages 79--88, 2009.

\bibitem[Blum and Mansour(2007)]{blum2007external}
Avrim Blum and Yishay Mansour.
\newblock From external to internal regret.
\newblock \emph{Journal of Machine Learning Research}, 8\penalty0 (6), 2007.

\bibitem[Bolton and Dewatripont(2004)]{bolton2004contract}
Patrick Bolton and Mathias Dewatripont.
\newblock \emph{Contract theory}.
\newblock MIT press, 2004.

\bibitem[Camara et~al.(2020)Camara, Hartline, and Johnsen]{camara2020mechanisms}
Modibo~K Camara, Jason~D Hartline, and Aleck Johnsen.
\newblock Mechanisms for a no-regret agent: Beyond the common prior.
\newblock In \emph{2020 ieee 61st annual symposium on foundations of computer science (focs)}, pages 259--270. IEEE, 2020.

\bibitem[Carroll(2015)]{carroll2015robustness}
Gabriel Carroll.
\newblock Robustness and linear contracts.
\newblock \emph{American Economic Review}, 105\penalty0 (2):\penalty0 536--563, 2015.

\bibitem[Castiglioni et~al.(2021)Castiglioni, Marchesi, and Gatti]{castiglioni2021bayesian}
Matteo Castiglioni, Alberto Marchesi, and Nicola Gatti.
\newblock Bayesian agency: Linear versus tractable contracts.
\newblock \emph{arXiv e-prints}, pages arXiv--2106, 2021.

\bibitem[Chassang(2013)]{chassang2013calibrated}
Sylvain Chassang.
\newblock Calibrated incentive contracts.
\newblock \emph{Econometrica}, 81\penalty0 (5):\penalty0 1935--1971, 2013.

\bibitem[Cohen et~al.(2022)Cohen, Deligkas, and Koren]{cohen2022learning}
Alon Cohen, Argyrios Deligkas, and Moran Koren.
\newblock Learning approximately optimal contracts.
\newblock In \emph{International Symposium on Algorithmic Game Theory}, pages 331--346. Springer, 2022.

\bibitem[Collina et~al.(2023)Collina, Roth, and Shao]{collina2023efficient}
Natalie Collina, Aaron Roth, and Han Shao.
\newblock Efficient prior-free mechanisms for no-regret agents.
\newblock \emph{arXiv preprint arXiv:2311.07754}, 2023.

\bibitem[Dagan et~al.(2023)Dagan, Daskalakis, Fishelson, and Golowich]{dagan2023external}
Yuval Dagan, Constantinos Daskalakis, Maxwell Fishelson, and Noah Golowich.
\newblock From external to swap regret 2.0: An efficient reduction and oblivious adversary for large action spaces, 2023.

\bibitem[D\"{u}tting et~al.(2019)D\"{u}tting, Roughgarden, and Talgam-Cohen]{dutting19}
Paul D\"{u}tting, Tim Roughgarden, and Inbal Talgam-Cohen.
\newblock Simple versus optimal contracts.
\newblock In \emph{Proceedings of the 2019 ACM Conference on Economics and Computation}, EC '19, page 369–387, New York, NY, USA, 2019. Association for Computing Machinery.
\newblock ISBN 9781450367929.
\newblock \doi{10.1145/3328526.3329591}.
\newblock URL \url{https://doi.org/10.1145/3328526.3329591}.

\bibitem[D{\"u}tting et~al.(2022)D{\"u}tting, Ezra, Feldman, and Kesselheim]{dutting2022combinatorial}
Paul D{\"u}tting, Tomer Ezra, Michal Feldman, and Thomas Kesselheim.
\newblock Combinatorial contracts.
\newblock In \emph{2021 IEEE 62nd Annual Symposium on Foundations of Computer Science (FOCS)}, pages 815--826. IEEE, 2022.

\bibitem[Glicksberg(1951)]{glicksberg}
I.~L. Glicksberg.
\newblock A further generalization of the kakutani fixed point theorem, with applications to nash equilibrium points, 1951.

\bibitem[Grossman and Hart(1992)]{grossman1992analysis}
Sanford~J Grossman and Oliver~D Hart.
\newblock An analysis of the principal-agent problem.
\newblock In \emph{Foundations of Insurance Economics: Readings in Economics and Finance}, pages 302--340. Springer, 1992.

\bibitem[Ho et~al.(2014)Ho, Slivkins, and Vaughan]{ho2014adaptive}
Chien-Ju Ho, Aleksandrs Slivkins, and Jennifer~Wortman Vaughan.
\newblock Adaptive contract design for crowdsourcing markets: Bandit algorithms for repeated principal-agent problems.
\newblock In \emph{Proceedings of the fifteenth ACM conference on Economics and computation}, pages 359--376, 2014.

\bibitem[Holmstr{\"o}m(1979)]{holmstrom1979moral}
Bengt Holmstr{\"o}m.
\newblock Moral hazard and observability.
\newblock \emph{The Bell journal of economics}, pages 74--91, 1979.

\bibitem[Lee et~al.(2022)Lee, Noarov, Pai, and Roth]{lee2022online}
Daniel Lee, Georgy Noarov, Mallesh Pai, and Aaron Roth.
\newblock Online minimax multiobjective optimization: Multicalibeating and other applications.
\newblock \emph{Advances in Neural Information Processing Systems}, 35:\penalty0 29051--29063, 2022.

\bibitem[Peng and Rubinstein(2023)]{peng2023fast}
Binghui Peng and Aviad Rubinstein.
\newblock Fast swap regret minimization and applications to approximate correlated equilibria.
\newblock \emph{arXiv preprint arXiv:2310.19647}, 2023.

\bibitem[Zhu et~al.(2022)Zhu, Bates, Yang, Wang, Jiao, and Jordan]{zhu2022sample}
Banghua Zhu, Stephen Bates, Zhuoran Yang, Yixin Wang, Jiantao Jiao, and Michael~I Jordan.
\newblock The sample complexity of online contract design.
\newblock \emph{arXiv preprint arXiv:2211.05732}, 2022.

\end{thebibliography}

\appendix 

\section{Proofs from Section~\ref{sec:existence}}

\subsection{Proof of Lemma~\ref{lem:subgame-equivalence}}
\label{app:subgame-equivalence}
\begin{proof}[Proof of Lemma \ref{lem:subgame-equivalence}]
\begin{align*}
    \mathcal{U}_{i}(1,\emptyset, p_{j \neq i},\hat{p}_{i}) &= \E_{\pi^{1:T} \sim g^{r, 1:T}_{f}(B^1_i)} \left[ \sum_{t' = t}^{T} \mathbb{I}[(f^t(\pi^{p, 1:t'-1}, \cR^t) = i)] \cdot u_i(p^{t'}_{i}(\pi^{1:t'-1}), y^{t'}) \right] \\
    &= \sum_{\pi^* \in \Pi^{r, 1:t-1}} \Pr [g^{r}_{f}(B^1_i) = \pi^{*}] \cdot (\mathcal{U}(\pi^{*}) + \mathcal{U}_{i}(t, \pi^{*}, p_{j \neq i},\hat{p}_{i})) \\
    &= \sum_{\pi^* \in \Pi^{r, 1:t-1}}  \Pr[g^r_{f}(B^1_i) = \pi^{*}]  \cdot (\mathcal{U}(\pi^{*}) + \mathcal{U}_{i}(t, \pi^{*}, p_{j \neq i},\hat{p}_{i})) \\ \tag{as all rounds before round $t$ are independent of this change} \\
    &= \sum_{\pi^* \neq \pi^{r,1:t-1} \in \Pi^{r, 1:t-1}}  \Pr [g^r_{f}(B^1_i) = \pi^{*}] \cdot (\mathcal{U}(\pi^{*}) + \mathcal{U}_{i}(t, \pi^{*}, p_{j \neq i},\hat{p}_{i}))
\end{align*}
\begin{align*}
    & \quad \quad +  \Pr[g^{r, 1:t-1}_{f} (B^1_i) = \pi^{r,1:t-1}] \cdot (\mathcal{U}(\pi^{r,1:t-1}) + \mathcal{U}_{i}(t, \pi^{r,1:t-1}, p_{j \neq i},\hat{p}_{i}))  \\
    &= \sum_{\pi^* \neq \pi^{r,1:t-1} \in \Pi^{r, 1:t-1}} \Pr[g^r_{f} (B^1_i) = \pi^{*}] \cdot (\mathcal{U}(\pi^{*}) + \mathcal{U}_{i}(t, \pi^{*}, p_{j \neq i},p_{i})) \\ 
    & \quad \quad + \Pr[g^{r, 1:t-1}_{f} (B^1_i) = \pi^{r,1:t-1}] \cdot (\mathcal{U}(\pi^{r,1:t-1}) + \mathcal{U}_{i}(t, \pi^{r,1:t-1}, p_{j \neq i},\hat{p}_{i})) \tag{as $\hat{p}_{i}$ is equivalent to $p_{i}$ under any transcript prefixes that are not $\pi^{r,1:t-1}$} 
\end{align*}
Therefore,
\begin{align*}
    \mathcal{U}_{i}(1,\emptyset, p_{j \neq i},\hat{p}_{i}) &- \mathcal{U}_{i}(1,\emptyset, p_{j \neq i},p_{i}) \\
     &= \sum_{\pi^* \neq \pi^{r,1:t-1} \in \Pi^{r, 1:t-1}} \Pr[g^r_{f} (B^1_i) = \pi^{*}] \cdot \left(\mathcal{U}(\pi^{*}) + \mathcal{U}_{i}(t, \pi^{*}, p_{j \neq i},p_{i}) \right) \\ 
     & \quad \quad \quad +  \Pr[g^{r, 1:t-1}_{f} (B^1_i) = \pi^{r,1:t-1}]  \cdot \left(\mathcal{U}(\pi^{r,1:t-1}) + \mathcal{U}_{i}(t, \pi^{r,1:t-1}, p_{j \neq i},\hat{p}_{i}) \right) \\
     & \quad \quad \quad - \sum_{\pi^* \neq \pi^{r,1:t-1} \in \Pi^{r, 1:t-1}} \Pr[g^r_{f} (B^1_i) = \pi^{*}] \cdot \left(\mathcal{U}(\pi^{*}) + \mathcal{U}_{i}(t, \pi^{*}, p_{j \neq i},p_{i}) \right) \\ 
     & \quad \quad \quad - \Pr[g^{r, 1:t-1}_{f} (B^1_i) = \pi^{r,1:t-1}] \cdot \left(\mathcal{U}(\pi^{r,1:t-1}) + \mathcal{U}_{i}(t, \pi^{r,1:t-1}, p_{j \neq i},p_{i}) \right) \\
     &= \Pr[g^{r, 1:t-1}_{f} (B^1_i) = \pi^{r,1:t-1}] \cdot \left(\mathcal{U}_{i}(t, \pi^{r,1:t-1}, p_{j \neq i},\hat{p}_{i}) - \mathcal{U}_{i}(t, \pi^{r,1:t-1}, p_{j \neq i},p_{i}) \right). 
\end{align*}
\end{proof}

\subsection{Proof of Lemma~\ref{lem:pure}} \label{app:pure}
\begin{proof}[Proof of Lemma \ref{lem:pure}]

%Note that all Agents are playing non-responsive strategies, and so all actions will depend only on the restricted transcript. We will first show that there is a pure SPNE in which subgames are defined by the restricted transcript. 

By Theorem \ref{thm:glicksberg}, we know that the restricted game has a mixed strategy Nash equilibrium. Let us call this equilibrium strategy profile $p^{*}$.
We will now show that there is also an exact pure strategy Nash equilibrium in the restricted game. To do this, we will alter $p^*$ until it is pure, while still ensuring that it is a Nash equilibrium. First, we will consider some transcript prefix $\bar{\pi}^{r,1:t-1}$ with states of nature $\bar{y}^{1:t-1}$ in which player $i$'s policy is outputting a distribution, and we will modify their policy to $\hat{p}$ so that they instead deterministically output the expectation of their effort levels in $p^{*}_{i}(\pi^{r,1:t-1})$. By Lemma~\ref{lem:subgame-equivalence}, this change will be confined only to the Agent utility at that subgame. 

Recall that an Agent $i$'s expected payoff of playing  $p_i$ is as follows:
\begin{align*}
    &\mathbb{E}_{\pi^{r, t:T} \sim g^r_{f, p}(\bar{\pi}^{r, 1:t-1},B^{t, t:T}_i (\bar{y}^{1:t-1}))} \left[\sum_{t' = t}^{T} \mathbb{I}[f(\pi^{p, 1:t'-1}, \cR^t) = i] \cdot u_i(p^{t'}_{i}(\pi^{r, 1:t'-1}), y^{t'}) \right] \\
    & = \mathbb{E}_{\pi^{r, 1:t} \sim g^{r, 1:t}_{f, p}(\bar{\pi}^{r, 1:t-1},B^{t, t}_i (\bar{y}^{1:t-1}))} \left[ f(\pi^{p, 1:t-1}, \cR^t) = i] \cdot u_i(p^{t}_{i}(\pi^{r, 1:t-1}), y^{t}) \right] \\ 
    & \quad + \mathbb{E}_{\pi^{r, t:T} \sim g^r_{f, p}(\bar{\pi}^{r, 1:t-1},B^{t, t:T}_i (\bar{y}^{1:t-1}))} \left[\sum_{t' = t+1}^{T} \mathbb{I}[f(\pi^{p, 1:t'-1}, \cR^t) = i] \cdot u_i(p^{t'}_{i}(\pi^{r, 1:t'-1}), y^{t'}) \right].
\end{align*}

% \begin{align*}
%     \mathbb{E}_{y^{t:T} \sim B_i(\bar{y}^{1:t-1})} &\left[ \mathbb{E}_{\pi^{r, t:T} \sim g^r_{f}(\bar{y}^{1:t-1},y^{t:T})} \left[\sum_{t' = t}^{T} \mathbb{I}[f(\pi^{p, 1:t'-1}, \cR^t) = i] \cdot u_i(p^{t'}_{i}(\pi^{r, 1:t'-1}), y^{t'}) \right] \right] \\
%     & = \mathbb{E}_{y^{t} \sim B^{t, t}_{i}(\bar{y}^{1:t-1})} \left[ \mathbb{E}_{\pi^{r, t} \sim g^r_{f}(\bar{y}^{1:t-1},y^{t})} \mathbb{I} \left[f(\pi^{p, 1:t-1}, \cR^t) = i] \cdot u_i(p^{t}_{i}(\pi^{r, 1:t-1}), y^{t}) \right] \right] \\ 
%     & \quad + \mathbb{E}_{y^{t+1:T} \sim B^{t, t+1:T}_{i}(\bar{y}^{1:t-1})} \left[ \mathbb{E}_{\pi^{r, t+1:T} \sim g^r_{f}(\bar{y}^{1:t-1},y^{t:T})} \left[\sum_{t' = t+1}^{T} \mathbb{I}[f(\pi^{p, 1:t'-1}, \cR^t) = i] \cdot u_i(p^{t'}_{i}(\pi^{r, 1:t'-1}), y^{t'}) \right] \right].
% \end{align*}

We will first prove that if Agent $i$ plays a pure strategy given by the expectation of the mixed strategy $p^{*, t}_{i}$ against $p^{*, t}_{-i}$, they will receive only higher utility in expectation. Note that player $i$ changing their actions will not affect the actions of any other players under $p^{*}_{-i}$, as all players are playing restricted policies. Thus, we can consider the actions of the other $k-1$ Agents as fixed when considering the payoff change for Agent $i$.
We consider the two parts of player $i$'s utility separately. 

First, we will consider their difference in their immediate payoff in round $t$ when switching from their original mixed strategy $p$ to the deterministic strategy $\hat{p}$. Note that before round $t$, the policies $p_i$ and $\hat{p}_i$ are identical. 

\begin{align*}
    &\mathbb{E}_{\pi^{1:t} \sim g^{r, 1:t}_{f}(\bar{\pi}^{1:t-1}, B^{t, t}_i(\bar{\pi}^{1:t-1}))} \mathbb{I} \left[ f(\pi^{p, 1:t-1}, \cR^t) = i] \cdot u_i(\hat{p}^{t}_{i}(\pi^{r, 1:t-1}), y^t) \right] \\ 
    & - \mathbb{E}_{\pi^{r, 1:t} \sim g^{r, 1:t}_{f}(\bar{\pi}^{1:t-1}, B^{t, t}_i(\bar{y}^{1:t-1}))} \mathbb{I} \left[ f(\pi^{p, 1:t-1}, \cR^t) = i] \cdot u_i(p^{t}_{i}(\pi^{r, 1:t-1}), y^{t}) \right] \\
    &= \mathbb{E}_{\pi^{r, 1:t} \sim g^{r, 1:t}_{f}(\bar{\pi}^{1:t-1}, B^{t, t}_i(\bar{y}^{1:t-1}))} \mathbb{I} \left[ f(\pi^{p, 1:t-1}, \cR^t) = i ] \cdot (u_i(\hat{p}^{t}_{i}(\pi^{r, 1:t-1}), y^{t}) - u_i(p^{t}_{i}(\pi^{r, 1:t-1}), y_{t})) \right].
\end{align*}

Recall that the probability of an outcome $o$ for Agent $i$ given action $a$ and state of nature $y$ is $p_{i, o} (a, y) = m_{i, o, y} \cdot a + b_{i, o, y}$.
We can now write the difference in the expected single round utility for Agent $i$ when deviating from $p^t_i$ to $\hat{p}^t_i$.

\begin{align*}
& \E_{\hat{a} \sim \hat{p}^t_i(\pi^{1:t-1})} [u_i(\hat{a}, y^t) ]
- \E_{a \sim p^{t}_i(\pi^{1:t-1}) }  [ u_i(a, y^t)] \\
% & = v(r(a,y^t)) - c(a) - v(r(\hat{a},y^t)) + c(\hat{a}) \\
& = \E_{\hat{a} \sim \hat{p}^t_i(\pi^{1:t-1})} [\bar{v}(i, \hat{a}, y^t) - c_i(\hat{a})] - 
 \E_{a \sim p^{t}_i(\pi^{1:t-1}) } [\bar{v}(i, a, y^t) - c_i(a)]. 
% & = \E_{o \sim \cD_{i, y, a}} [ v(r(o)) ] - c_i(\hat{a}) - \E_{o \sim \cD_{i, y, \hat{a}}} [ v(r(o)) ] + c_i(a) 
% & = v(m_{i, o, y} a + b_{i, o, y}) - c_i(a) - v(m_{i, o, y}\hat{a} + b_{i, o, y}) + c_i(\hat{a}) \tag{By assumption on the structure of $y$} \\
\end{align*}

By the concavity of $v(\cdot)$ and the convexity of $c_i(\cdot)$, this value is always non-negative. Thus, the expected immediate payoff change for Agent $i$ from moving from action $p^t_i$ to $\hat{p}^t_i$ in round $t$ is non-negative. 

Next, consider the change in the expected continuation payoff at round $t$. Note that for rounds $t' > t$, the strategies $p^{t'}_i$ and $\hat{p}^{t'}_i$ are identical once again.

\begin{align*}
   & \mathbb{E}_{\pi^{r, 1:T} \sim g^{r, 1:T}_{f}(\bar{\pi}^{r, 1:t-1}, B^{t, t:T}_i(\bar{y}^{1:t-1})))} \left[\sum_{t' = t+1}^{T} \mathbb{I}[f(\pi^{p, 1:t'-1}, \cR^t) = i] \cdot u_i(p^{t'}_{i}(\pi^{r, 1:t'-1}), y^{t'}) \right]  \\ 
    & - \mathbb{E}_{\pi^{t:T} \sim g^{r, 1:T}_{f}(\bar{\pi}^{r, 1:t-1}, B^{t}_i(\bar{y}^{1:t-1})} \left[\sum_{t' = t+1}^{T} \mathbb{I}[f(\pi^{p, 1:t'-1}, \cR^t) = i] \cdot u_i(p^{t'}_{i}(\pi^{r, 1:t'-1}), y^{t'}) \right].
\end{align*}

By Lemma~\ref{lem:p_tran_dist}, the distributions over Principal transcripts in both expressions are equal. Since Agents are playing restricted policies, no future Agent actions will change if Agent $i$ moves from strategy $p_i$ to $\hat{p}_i$. Thus, this difference is also non-negative. 

Therefore, under any mixed strategy Nash equilibrium, a player $i$ can always modify one of their actions in some subgame to be deterministic and only increase their payoff. Therefore, they are still best responding. Furthermore, by Lemma~\ref{lem:purify-br}, all other non-responsive Agents remain best responding. 

Thus, we can iteratively turn a mixed strategy Nash equilibrium in the restricted game into a pure strategy Nash equilibrium. 
\end{proof}

% \section{Remainder of Section \ref{sec:Agent-selection}} \label{app:equilibrium}

\section{Remainder of Section \ref{sec:external}}

\subsection{Proof of Lemma \ref{lem:Agent-incentive}} \label{app:lem-agent-incentive}

\begin{proof}[Proof of Lemma \ref{lem:Agent-incentive}]
Suppose for contradiction that this is not the case.
Then, there is some strategy profile $p^*_{1 \ldots k}$ which is a non-responsive, pure strategy Nash equilibrium where there exists some Agent $i \in [k]$ such that their action $p^t_i(\pi^{1:t-1})$ given some transcript prefix $\pi^{1:t-1}$ is strictly less than the effort of a myopic optimal action $\tilde{p}^t_i$. Also let $\bar{y}^{1:t-1}$ be the state of nature sequence associated with this transcript prefix. Call $\bar{p}_{i}$ the Agent policy where Agent $i$ does exert the myopic optimal effort $\tilde{p}^t_i(\pi^{1:t-1})$ under the prefix $\pi^{1:t-1}$ and otherwise behaves exactly as $p^{*}_{i}$. As $p^{*}$ is a best response, we have that
\begin{align*}
     \mathcal{U}_{i}(1,\emptyset,p^{*}_{j \neq i},\tilde{p}_{i}) &- \mathcal{U}_{i}(1,\emptyset,p^{*}_{j \neq i},p^{*}_{i}) \geq 0 \\
    &  \rightarrow \mathbb{P}(g_{f,p^{*}_{j\neq i}, \bar{p}_{i}}(B^1_i) = \pi^{r,1:t-1}) \cdot \left(\mathcal{U}_{i}(t, \pi^{r,1:t-1}, p_{j \neq i},\tilde{p}_{i}) - \mathcal{U}_{i}(t, \pi^{r,1:t-1}, p^{*}_{j \neq i},p^{*}_{i}) \right) \geq 0 \\ 
    &  \rightarrow  \mathcal{U}_{i}(t, \pi^{r,1:t-1}, p_{j \neq i},\tilde{p}_{i}) - \mathcal{U}_{i}(t, \pi^{r,1:t-1}, p^{*}_{j \neq i},p^{*}_{i})  \geq 0,
\end{align*} 
where the first implication follows from Lemma \ref{lem:subgame-equivalence-general} and the second follows from the assumption that $B^1_i(\cdot)$ has non-zero support on the realized sequence of states of nature and that the transcript generating function (Definition \ref{def:transcript-generating-func}) generates a distribution over \emph{all} valid transcripts -- over the additional randomness of the Principal selections and the randomness used to sample from the selected Agent's outcome distribution in each round -- with the distribution over states of nature that it is given.

Thus, to show contradiction is is sufficient to prove that $ \mathcal{U}_{i}(t, \pi^{r,1:t-1}, p_{j \neq i},\tilde{p}_{i}) < \mathcal{U}_{i}(t, \pi^{r,1:t-1}, p^{*}_{j \neq i},p^{*}_{i})$.

Observe that Agent $i$ can weakly improve their expected immediate payoff at round $t$, by definition of a myopic optimal action.

Agent $i$ can also improve their continuation payoff for future rounds $t' > t$, denoted

\begin{align*}
   \mathbb{E}_{\pi^{t:T} \sim g_{f, p}(\bar{\pi}^{1:t-1},B^{t, t:T}_{i}(\bar{y}^{1:t-1}))} \left[\sum_{t' = t + 1}^{T} \mathbb{I}[f^t(\pi^{p, 1:t'-1}, \cR^t) = i] \cdot u_i(p^{t'}_{i}(\pi^{1:t'-1}), y^{t'}) \right], 
\end{align*}
by deviating to play $\tilde{p}^t_i$ in round $t$.
As all Agents are playing non-responsive strategies, the only effect Agent $i$'s action in round $t$ has on their continuation payoff is through the effect it has on the probability of their selection in future rounds. The probability of their future selection depends on the returns that Agent $i$ realizes, as the Principal selection mechanism only takes as input the history of realized returns. 
So, we can see that this is monotonically increasing in the Agent $i$'s effort in round $t$, due to the monotonicity of the Principal's selection mechanism (Definition \ref{def:monotone-alg}) and Assumption \ref{assumption:monotone-effort-return}, which establishes that increase in an effort level by Agent $i$ in a single round $t$ leads to higher expected returns. 

In particular, let the $a_1$ be deterministic action from strategy $p^{*, t}_i$ and $a_2$ be the higher effort level played deterministically from strategy $\tilde{p}^t_i$. Then, it is the case that
\begin{align*}
    \E_{o \sim \cD_{i, y, a_2}} [r(o)] \geq \E_{o \sim \cD_{i, y, a_1}} [r(o)].
\end{align*}

Thus, we know that Agent $i$'s deviation will only increase their continuation payoff via an increased selection probability in future rounds.

Therefore, we have shown that if Agent $i$ is playing a policy $p^*_i$ that in some round $t$ plays an effort level less than $\tilde{p}^t_i$ they can unilaterally deviate to strictly improve their payoff, and thus the strategy profile $p^*_{1 \ldots k}$ cannot be a non-responsive Nash equilibrium.
\end{proof}

\subsection{Proof of Lemma \ref{lem:myopic-policy}}
\begin{proof}[Proof of Lemma \ref{lem:myopic-policy}]
    Suppose for the sake of contradiction that there is some equilibrium strategy for agent $i$, denoted $p_i$, that is not a myopic optimal policy. If agent $i$ is playing $p_i$, then there is some round $t$ where they are not playing a myopic optimal action from their set $\tilde{A}^t_i$ for some transcript prefix $\hat{\pi}^{1:t-1}$. Let their payoff differ from the myopic payoff by $\epsilon$. We will consider some policy $\hat{p}_{i}$ such that $\hat{p}_{i}(\hat{\pi}^{1:t-1} = p_{i}(\hat{\pi}^{1:t-1})$, and for $\pi \neq \hat{\pi}^{1:t-1}$, $\hat{p}_{i}(\pi) = p_{i}(\pi)$. By Lemma~\ref{lem:subgame-equivalence-general}, the difference in the expected utility of the agent under $\hat{p}_{i}$ and $p_{i}$ is exactly 
    \begin{align*}
       \Pr[g_{f}(B^1_i) & = \hat{\pi}^{1:t-1}] \cdot \left(\mathcal{U}_{i}(t, \hat{\pi}^{1:t-1}, p_{j \neq i},\hat{p}_{i}) - \mathcal{U}_{i}(t, \hat{\pi}^{1:t-1}, p_{j \neq i},p_{i}) \right) \\
       & = \Pr[g_{f}(B^1_i) = \hat{\pi}^{1:t-1}] \cdot \big( \mathbb{E}_{\pi^{t} \sim g_f(\hat{\pi}^{1:t-1}, B^{t, t}_i(\bar{y}^{1:t-1}))}[u_{i}(\hat{p}_{i}(\hat{\pi}^{1:t-1}),y^t) - u_{i}(p_{i}(\hat{\pi}^{1:t-1}),y^t)  \\
       & \quad \quad + \mathbb{E}_{\pi^{t+1:T} \sim g_f(\hat{\pi}^{1:t-1} \cup \pi^t, B^{t:T, t}_i(\bar{y}^{1:t-1}) )}[\mathcal{U}_{i}(t+1, \hat{\pi}^{1:t-1}\cup \pi^{t}, p_{j \neq i},\hat{p}_{i}) -   \mathcal{U}_{i}(t+1, \hat{\pi}^{1:t-1}\cup \pi^{t}, p_{j \neq i},p_{i})]] \big) \\
        & = \Pr[g_{f}(B^1_i) = \hat{\pi}^{1:t-1}] \cdot \left( \mathbb{E}_{\pi^{t} \sim g(\hat{\pi}^{1:t-1}, B^{t, t}_i(\bar{y}^{1:t-1}))}[u_{i}(\hat{p}_{i}(\hat{\pi}^{1:t-1}),y^t) - u_{i}(p_{i}(\hat{\pi}^{1:t-1}),y^t) \right) \tag{By the fact that the action of the agent in the current round has no impact on their payoff in future rounds}\\
        & = \Pr[g_{f}(B^1_i) = \hat{\pi}^{1:t-1}] \cdot \epsilon \\
        & > 0 \tag{By the fact that the agent has full support beliefs.}
    \end{align*}

As agent $i$ can strictly improve their payoff by deviating to $\hat{p}_{i}$, $p_{i}$ is in fact not in equilibrium, leading to a contradiction.

\end{proof}

\section{Monotone No Swap-Regret Algorithms} \label{app:monomonomono}

\begin{algorithm}
\caption{The Blum-Mansour Swap Regret Algorithm \cite{blum2007external}} \label{alg-bm}
    \begin{algorithmic}
        \STATE {\bf Input} learning rate $\eta$
        \STATE Initialize $k$ copies of Algorithm \ref{alg-expw} with parameter $\eta$, where for each Agent $i \in [k]$ we maintain a distribution $q^t_i$
        \FOR{$t \in [T]$}
            \STATE Experience loss for each Agent, denoted $l_t = (l_t^1, \ldots, l_t^k)$
            \STATE Report loss vector $p_t^ii \cdot l_t$ to $i^\text{th}$ copy of Algorithm \ref{alg-expw}
            \STATE Update $q_t^i$ for each $i \in [k]$, each of the $k$ copies of Algorithm \ref{alg-expw}
            \STATE Combine $q_t^i$ for $i \in [k]$ into a single probability distribution over the $k$ Agents, by finding distribution $p_t$ such that $Ap_t = p_t$
        \ENDFOR
        % \STATE {\bf Output}
    \end{algorithmic}
\end{algorithm}

\begin{algorithm}
\caption{Exponential Weights Algorithm}
\label{alg-expw}
    \begin{algorithmic}
        \STATE {\bf Input} learning rate $\eta$
        \STATE For each Agent $i \in [k]$, initialize $w^1_i = 1$
        \STATE Define $w_t = \sum_{i \in [k]} w^t_it$
        \FOR{$t \in [T]$}
            \STATE Select Agent $i^t$ from distribution $p^t,$ where $p_t^i = w^t_i / w^t$ for all $i \in [k]$
            \STATE Observe loss $l^t$
            \STATE Update weight as $w^{t+1}_i = w^t_i(\exp{(-\eta \cdot l^t_i)})$ for all $i \in [k]$
        \ENDFOR
        % \STATE {\bf Output} 
    \end{algorithmic}
\end{algorithm}

\subsection{Proof of Lemma~\ref{lem:blummansour-not-monotone}}\label{app:blum} 
\begin{proof}
Below we show an example with $T=100$ and the number of actions $k = 3$. Note that given a fixed loss sequence, Blum-Mansour's probability distributions it plays over each round are deterministic. Thus, we can numerically compute these probabilities to prove non-monotonicity.

Let the first $50$ loss vectors of $l_{1}$ each be $[-0.1,1,0]$. Let the final $50$ loss vectors of $l_{1}$ each be $[1,-1,0]$. Let $l_{2}$ be exactly the same as $l_{1}$, except that $l_{2}^1 = [-2,1,0]$. Then, when running Blum-Mansour as defined in Algorithm~\ref{alg-bm} with $\eta = 0.2$, the probability distribution maintained over actions at rounds $1:T$ (truncating the middle for clarity) is

\[
\begin{bmatrix}
0.34215564 & 0.31796216 & 0.33988219 \\
0.35085386 & 0.30294422 & 0.34620192 \\
0.35943921 & 0.28826134 & 0.35229945 \\
0.36791983 & 0.27390192 & 0.35817825 \\
0.37630081 & 0.25986054 & 0.36383865 \\
... & ... & ... \\
0.02854018 & 0.70527045 & 0.26618936 \\
0.02331378 & 0.73339873 & 0.24328749 \\
0.01873113 & 0.76082639 & 0.22044248 \\
0.01480289 & 0.78721832 & 0.19797879 \\
0.01151108 & 0.81226442 & 0.1762245\\
\end{bmatrix}
\]

While the corresponding distributions under $l_{2}$ are 
\[
\begin{bmatrix}
0.38028178 & 0.28913796 & 0.33058027 \\
0.38924322 & 0.27467575 & 0.33608103 \\
0.39809792 & 0.26052829 & 0.34137379 \\
0.40684786 & 0.24669652 & 0.34645563 \\
0.41549209 & 0.23318649 & 0.35132142 \\
... & ... & ... \\
0.02911013 & 0.72001336 & 0.25087651 \\
0.02359428 & 0.74841294 & 0.22799278 \\
0.0188028 & 0.77585784 & 0.20533936 \\
0.01473706 & 0.80200257 & 0.18326037 \\
0.01136643 & 0.82654706 & 0.1620865  \\
\end{bmatrix}
\]

Finally, we can look at the difference in move probabilities under $l_{1}$ and $l_{2}$:

\[
\begin{bmatrix}
2.90528236e-02 & -1.40423773e-02 & -1.50104462e-02\\
2.94279178e-02 & -1.38062655e-02 & -1.56216522e-02\\
2.98040066e-02 & -1.35855831e-02 & -1.62184235e-02\\
3.01780959e-02 & -1.33736323e-02 & -1.68044637e-02\\
3.05470484e-02 & -1.31640258e-02 & -1.73830227e-02\\
... & ... & ... \\
5.69949881e-04 & 1.47429088e-02 & -1.53128587e-02\\
2.80500303e-04 & 1.50142094e-02 & -1.52947097e-02\\
7.16695363e-05 & 1.50314536e-02 & -1.51031232e-02\\
-6.58236902e-05 & 1.47842499e-02 & -1.47184262e-02\\
-1.44649313e-04 & 1.42826443e-02 & -1.41379950e-02\\
\end{bmatrix}
\]
Note that at rounds $t=99$ and $t=100$, Blum-Mansour under $l_{1}$ plays action $1$ with a higher probability than Blum-Mansour under $l_{2}$. This violates the monotonicity condition.
\end{proof}

\subsection{Proof of Lemma \ref{lem:treeswap-monotone}} \label{app:treeswap-monotone}
\begin{proof}[Proof of \ref{lem:treeswap-monotone}]
    We begin with a general description of the TreeSwap algorithm. Refer to \cite{dagan2023external} for a more detailed description. 
    TreeSwap maintains multiple instances of a base no-external regret algorithm, which we refer to as \textsc{Alg}. Let $T$ be the total number of rounds that TreeSwap is run. Each instance of \textsc{Alg} is \emph{lazily} updated, meaning that each instance has some period length $m$, and is updated only $T/m$ times, during the end of each period, using the average loss of all the loss vectors given to the algorithm since the previous update. 
    In each round $t$, the action output by TreeSwap is the uniform mixture over the distributions output by a fixed number of these instances.
    The claim follows immediately by the fact that averaging preserves monotonicity. This means that neither the lazy updating of each instance of \textsc{Alg}, nor playing the averaging the distribution output by multiple instances of \textsc{Alg} in each round, violates the monotonicity of the base algorithm.
\end{proof}

\subsection{Monobandit Algorithm} \label{app:monobandit}

In addition to the definitions in Section~\ref{sec:online-prelims}, we introduce additional notation for a full-feedback learning algorithm, which is used in the implementation of our bandit-feedback algorithm MonoBandit.

\begin{definition}[Full-Feedback Learning algorithm $\A$, loss sequence $l$]
A full-feedback learning algorithm $\A_{f}$ consists of a deterministic function which takes as input $\cR^{\A_{f}}_{1:t}$, the realizations of randomness used by the algorithm in this round and all previous rounds, and $l^{1:t-1})$, the $k$-length loss vectors from all previous rounds. $\A_{f}$ outputs an action in the current round $t$.
\end{definition}

We can now define our algorithm MonoBandit implementing this reduction from full-feedback to bandit-feedback, which preserves the monotonicity and sublinear regret (either external or swap) guarantees of some full-feedback learning algorithm $\cA$. When we reason about the behavior of MonoBandit, we must be able to distinguish between the randomness inherent to $\cA$ and the randomness that the bandit algorithm uses in addition. The past realizations of both objects are visible to the adaptive adversary, but $\cA$ only has access to its own randomness. We therefore define the following distributions over randomness that are both utilized by the bandit-feedback algorithm:

\begin{definition}[$\cR$, $\cR_{f}$ and $\cR_{b}$]
Let $\cR$ be composed of the following two sets of randomness: $\cR_f$ and $\cR_b$. $\cR_f$ is the randomness utilized by the full-information algorithm $\cA$. $\cR_b$ is additional randomness utilized by MonoBandit, which is independent from $\cR_f$. In particular $\cR_b$ is composed of $T$ independent random variables $\cR^{t}_b$, where $\cR^t_{b} = EXPLOIT$ with probability $1 - \epsilon$ and $\cR^t_{b} = i$ with probability $\frac{\epsilon}{k}$ for all $i \in [k]$.
\end{definition}

\begin{algorithm} 
\caption{MonoBandit} \label{alg:monobandit}
    \begin{algorithmic}
        \STATE {\bf Inputs} $T$, Full-feedback Algorithm $\cA$
        \STATE {Parameter $\epsilon$}
        \STATE {\textbf{Probability vector $p$}       (initialized to the uniform distribution)}
        \STATE {\textbf{Sequence of sampled loss vectors $\hat{L}$} (initialized to empty)}
        \STATE {Randomness of full-feedback algorithm \textbf{$\cR_{f}$}}
        \STATE {Additional randomness of bandit algorithm \textbf{$\cR_{b}$}}
        \FOR{$t \in [T]$}
            \IF {$\cR_{b}^{t} \neq EXPLOIT$ }
                \STATE {Play action $\cR_{b}^{t}$}
                \STATE{Receive bandit feedback $b$}
                \STATE{$\hat{l}_{t}[\cR_{b}^{t}] \gets \frac{kb}{\epsilon}$}
                \STATE{$\hat{l}_{t}[i] \gets 0$ for all $i \neq \cR_{b}^{t}$}
                \STATE {$\hat{L} \gets \hat{L} \cup \hat{l}_{t}$}
                \ELSE 
                \STATE {Select action from probability distribution $p$}
                \STATE {$\hat{L} \gets \hat{L} \cup {0}^{k}$}
            \ENDIF
            \STATE {$p \gets \A(\hat{L}, \cR_{f}^{1:t})$}
        \ENDFOR
    \end{algorithmic}
\end{algorithm}

\subsection{Proof of Lemma \ref{thm:sr-reduction}} \label{app:sr-reduction}
\begin{proof}[Proof of Lemma \ref{thm:sr-reduction}]
Recall that we previously defined Swap Regret to be
\begin{align*}
    \textsc{SReg}_{f}(T, \ell) = \max_{\chi} \E_{\cR^{1:T}} \left[  \sum_{t=1}^T \ell(\phi^{1:t-1}_{\cR^{1:t-1}})[f(\phi^{1:t-1}_{\cR^{1:t-1}}, \cR^t) ] -   \ell(\phi^{t-1}_{\cR^{1:t-1}}[ \chi(f(\phi^{1:t-1}_{\cR^{1:t-1}}, \cR^t))] \right],
\end{align*}
where we use $\phi^{1:t-1}_{\cR^{1:t-1}}$ to denote the transcript prefix determinstically resulting from the algorithm $f$, adversary $\ell$ and a particular realization of $\cR^{1:t-1}$. Note that, given a particular $f$ and $\ell$, any $t$-length prefix of the transcript is uniquely defined by the realization of the randomness of the learning algorithm up until round $t-1$. Therefore, for simplicity for the entirety of this section we will write swap regret as
\begin{align*}
    \textsc{SReg}_{f}(T, \ell) = \max_{\chi} \E_{\cR^{1:T}} \left[  \sum_{t=1}^T \ell(\cR^{1:t-1})[f(\cR^{1:t}) ] -   \ell(\cR^{1:t-1})[\chi(f(\cR^{1:t})] \right]
\end{align*}

Let $\hat{l}_{t}$ represent the loss vector that is given to $\cA$ at round $t$. First, we will show that $\hat{l}_{t}$ is an unbiased sample of $\ell_{t}$, regardless of the behavior of the adaptive adversary. 
\begin{align*}
    \mathbb{E}_{\cR^{1:t}}[\hat{l}_{t}[i]] & = \mathbb{E}_{\cR^{1:t}}[\mathbb{I}[\cR^{t}_{b} = i] \cdot \frac{k}{\epsilon} \ell(\cR^{1:t-1})[i]] \\
    & = \frac{k}{\epsilon}\mathbb{E}_{\cR^{1:t}}[\mathbb{I}[\cR^{t}_{b} = i] \cdot  \ell(\cR^{1:t-1})[i]] \\
    & = \frac{k}{\epsilon}\mathbb{E}_{\cR^{t}_{b}}[\mathbb{I}[\cR^{t}_{b} = i]] \cdot  \mathbb{E}_{\cR^{1:t-1}}[\ell(\cR^{1:t-1})[i]] \tag{By the independence of $\cR^{t}_{b}$ and $\cR^{1:t-1}$} \\
    & = \mathbb{E}_{\cR^{1:t}}[\ell(\cR^{1:t-1})[i]]
\end{align*}

Let us write the swap regret of MonoBandit in terms of the swap regret incurred on EXPLORE and EXPLOIT rounds. For brevity in notation, we let $f$ refer to MonoBandit. 
\begin{align*}
\textsc{SReg}_{MonoBandit}(T, \ell) & \leq max_{\chi} \E_{\cR^{1:T}} \left[  \sum_{t \in EXPLORE} \ell(\cR^{1:t-1})[f(\cR^{1:t}) ] -   \ell(\cR^{1:t-1})[\chi(f(\cR^{1:t}))] \right] + \\ & max_{\chi} \E_{\cR^{1:T}} \left[  \sum_{t \in EXPLOIT} \ell(\cR^{1:t-1})[f(\cR^{1:t}) ] -   \ell(\cR^{1:t-1})[\chi(f(\cR^{1:t}))] \right]
\end{align*}

We will upper bound both terms, starting with the first. For all swap functions $\chi$, we have

\begin{align*}
& \E_{\cR^{1:T}} \left[  \sum_{t \in EXPLORE} \ell(\cR^{1:t-1})[f(\cR^{1:t}) ] -   \ell(\cR^{1:t-1})[\chi(f(\cR^{1:t}))] \right] \\
& \leq 
\E_{\cR^{1:T}} \left[  \sum_{t =1}^{T} \ell(\cR^{1:t-1})[f(\cR^{1:t}) ] -   \ell(\cR^{1:t-1})[\chi(f(\cR^{1:t}))] \right]
\\
& =
 \sum_{t =1}^{T} \E_{\cR^{1:t}} \left[ \ell(\cR^{1:t-1})[f(\cR^{1:t}) ] -   \ell(\cR^{1:t-1})[\chi(f(\cR^{1:t}))] \right]
\\
& =
 \sum_{t =1}^{T} \E_{\cR^{1:t}} \left[ \ell(\cR^{1:t-1})[\cA(\cR_{f}^{1:t}, \ell(\cR^{1:t-2})) ] -   \ell(\cR^{1:t-1})[\chi(\cA(\cR_{f}^{1:t},\ell(\cR^{1:t-2})))] \right] \tag{By the definition of MonoBandit}
\\
& =
 \sum_{t =1}^{T} \E_{\cR^{1:t}} \left[ \hat{l}[\cA(\cR_{f}^{1:t}, \ell(\cR^{1:t-2})) ] -   \hat{l}[\chi(\cA(\cR_{f}^{1:t},\ell(\cR^{1:t-2})))] \right] \tag{By the fact that $\hat{l}$ is an unbiased estimator of $\ell$ }
\\
& =
 \sum_{t =1}^{T} \E_{\cR^{1:t}} \left[ \hat{l}[\cA(\cR_{f}^{1:t}, \ell(\cR_{f}^{1:t-2})) ] -  \hat{l}[\chi(\cA(\cR_{f}^{1:t},\ell(\cR_{f}^{1:t-2})))] \right] \tag{By the fact that $\cR^{f}$ is independent of $\cR^{b}$ }
\\
& =
 \sum_{t =1}^{T} \E_{\cR_{f}^{1:t}} \left[ \hat{l}[\cA(\cR_{f}^{1:t}, \ell(\cR_{f}^{1:t-2})) ] -  \hat{l}[\chi(\cA(\cR_{f}^{1:t},\ell(\cR_{f}^{1:t-2})))] \right] 
\\
& \leq |\hat{l}_{max} - \hat{l}_{min}| \cdot R(T) \tag{By the definition of Swap Regret, and the assumption in the Lemma statement.} \\
& = \frac{k}{\epsilon} \cdot R(T) 
\end{align*}

As for the second term,

\begin{align*}
& max_{\chi} \E_{\cR^{1:T}} \left[  \sum_{t \in EXPLOIT} \ell(\cR^{1:t-1})[f(\cR^{1:t}) ] -   \ell(\cR^{1:t-1})[\chi(f(\cR^{1:t}))] \right] & \leq 
 \E_{\cR^{1:T}} \left[  \sum_{t \in EXPLOIT} 1 \right]
 \\ & = \epsilon T
\end{align*}

Putting these together, we can upper bound the swap regret of MonoBandit as

$$\frac{k}{\epsilon} \cdot R(T)  + \epsilon T $$

By setting $\epsilon = \sqrt{T\cdot k \cdot R(T)}$, we get that 

$$\textsc{SReg}_{MonoBandit}(T,\ell) \leq 2\sqrt{T \cdot k \cdot R(T)}, \ \forall \ell$$
\end{proof}

\subsection {Proof of Lemma \ref{thm:r-reduction}} \label{app:r-reduction}
\begin{proof}[Proof of Lemma \ref{thm:r-reduction}]
Recall that we previously defined external regret to be
\begin{align*}
    \textsc{Reg}_{f}(T, \ell) = \max_{i \in [k]} \E_{\cR^{1:T}} \left[  \sum_{t=1}^T \ell(\phi^{1:t-1}_{\cR^{1:t-1}})[f(\phi^{1:t-1}_{\cR^{1:t-1}}, \cR^t) ] -   \ell(\phi^{t-1}_{\cR^{1:t-1}}[i] \right],
\end{align*}
where we use $\phi^{1:t-1}_{\cR^{1:t-1}}$ to denote the transcript prefix determinstically resulting from the algorithm $f$, adversary $\ell$ and a particular realization of $\cR^{1:t-1}$. Note that, given a particular $f$ and $\ell$, any $t$-length prefix of the transcript is uniquely defined by the realization of the randomness of the learning algorithm up until round $t-1$. Therefore, for simplicity for the entirety of this section we will write external regret as
\begin{align*}
    \textsc{Reg}_{f}(T, \ell) = \max_{i \in [k]} \E_{\cR^{1:T}} \left[  \sum_{t=1}^T \ell(\cR^{1:t-1})[f(\cR^{1:t}) ] -   \ell(\cR^{1:t-1})[i] \right]
\end{align*}

Let $\hat{l}_{t}$ represent the loss vector that is given to $\cA$ at round $t$. First, we will show that $\hat{l}_{t}$ is an unbiased sample of $\ell_{t}$, regardless of the behavior of the adaptive adversary. 
\begin{align*}
    \mathbb{E}_{\cR^{1:t}}[\hat{l}_{t}[i]] & = \mathbb{E}_{\cR^{1:t}}[\mathbb{I}[\cR^{t}_{b} = i] \cdot \frac{k}{\epsilon} \ell(\cR^{1:t-1})[i]] \\
    & = \frac{k}{\epsilon}\mathbb{E}_{\cR^{1:t}}[\mathbb{I}[\cR^{t}_{b} = i] \cdot  \ell(\cR^{1:t-1})[i]] \\
    & = \frac{k}{\epsilon}\mathbb{E}_{\cR^{t}_{b}}[\mathbb{I}[\cR^{t}_{b} = i]] \cdot  \mathbb{E}_{\cR^{1:t-1}}[\ell(\cR^{1:t-1})[i]] \tag{By the independence of $\cR^{t}_{b}$ and $\cR^{1:t-1}$} \\
    & = \mathbb{E}_{\cR^{1:t}}[\ell(\cR^{1:t-1})[i]]
\end{align*}

Let us write the external regret of MonoBandit in terms of the external regret incurred on EXPLORE and EXPLOIT rounds. For brevity in notation, we let $f$ refer to MonoBandit. 
\begin{align*}
\textsc{Reg}_{MonoBandit}(T, \ell) & \leq max_{i \in [k]} \E_{\cR^{1:T}} \left[  \sum_{t \in EXPLORE} \ell(\cR^{1:t-1})[f(\cR^{1:t}) ] -   \ell(\cR^{1:t-1})[\chi(f(\cR^{1:t}))] \right] + \\ & max_{i \in [k]} \E_{\cR^{1:T}} \left[  \sum_{t \in EXPLOIT} \ell(\cR^{1:t-1})[f(\cR^{1:t}) ] -   \ell(\cR^{1:t-1})[i] \right]
\end{align*}

We will upper bound both terms, starting with the first. For all fixed actions $i$, we have

\begin{align*}
& \E_{\cR^{1:T}} \left[  \sum_{t \in EXPLORE} \ell(\cR^{1:t-1})[f(\cR^{1:t}) ] -   \ell(\cR^{1:t-1})[i] \right] \\
& \leq 
\E_{\cR^{1:T}} \left[  \sum_{t =1}^{T} \ell(\cR^{1:t-1})[f(\cR^{1:t}) ] -   \ell(\cR^{1:t-1})[i)] \right]
\\
& =
 \sum_{t =1}^{T} \E_{\cR^{1:t}} \left[ \ell(\cR^{1:t-1})[f(\cR^{1:t}) ] -   \ell(\cR^{1:t-1})[i] \right]
\\
& =
 \sum_{t =1}^{T} \E_{\cR^{1:t}} \left[ \ell(\cR^{1:t-1})[\cA(\cR_{f}^{1:t}, \ell(\cR^{1:t-2})) ] -   \ell(\cR^{1:t-1})[i] \right] \tag{By the definition of MonoBandit}
\\
& =
 \sum_{t =1}^{T} \E_{\cR^{1:t}} \left[ \hat{l}[\cA(\cR_{f}^{1:t}, \ell(\cR^{1:t-2})) ] -   \hat{l}[i] \right] \tag{By the fact that $\hat{l}$ is an unbiased estimator of $\ell$ }
\\
& =
 \sum_{t =1}^{T} \E_{\cR^{1:t}} \left[ \hat{l}[\cA(\cR_{f}^{1:t}, \ell(\cR_{f}^{1:t-2})) ] -  \hat{l}[i] \right] \tag{By the fact that $\cR^{f}$ is independent of $\cR^{b}$ }
\\
& =
 \sum_{t =1}^{T} \E_{\cR_{f}^{1:t}} \left[ \hat{l}[\cA(\cR_{f}^{1:t}, \ell(\cR_{f}^{1:t-2})) ] -  \hat{l}[i] \right] 
\\
& \leq |\hat{l}_{max} - \hat{l}_{min}| \cdot R(T) \tag{By the definition of External Regret, and the assumption in the lemma statement.} \\
& = \frac{k}{\epsilon} \cdot R(T) 
\end{align*}

As for the second term,

\begin{align*}
& max_{\chi} \E_{\cR^{1:T}} \left[  \sum_{t \in EXPLOIT} \ell(\cR^{1:t-1})[f(\cR^{1:t}) ] -   \ell(\cR^{1:t-1})[\chi(f(\cR^{1:t}))] \right] & \leq 
 \E_{\cR^{1:T}} \left[  \sum_{t \in EXPLOIT} 1 \right]
 \\ & = \epsilon T
\end{align*}

Putting these together, we can upper bound the external regret of MonoBandit as

$$\frac{k}{\epsilon} \cdot R(T)  + \epsilon T $$

By setting $\epsilon = \sqrt{T\cdot k \cdot R(T)}$, we get that 

$$\textsc{Reg}_{MonoBandit}(T,\ell) \leq 2\sqrt{T \cdot k \cdot R(T)}, \ \forall \ell$$

\end{proof}

\subsection{Proof of Lemma \ref{thm:mono-reduction}}
\begin{proof}[Proof of Lemma \ref{thm:mono-reduction}] \label{app:mono-reduction}
For ease of notation, throughout this proof we write the MonoBandit algorithm as taking as input the full feedback vector $l$, while keeping in mind that it can only change its state based on the bandit feedback it receives from sampling $l$ each round. Furthermore, we let $\hat{l}_{t}(l_{t})$ denote the sampled loss vector fed to the internal full-information algorithm $\cA$ by MonoBandit at round $t$. 

For any fixed sequence of losses $l$, the sampled loss vector $\hat{l}_{t'}$ that MonoBandit feeds to $\cA$ in round $t'$ is independent of the sampled loss vector $\hat{l}_{t}$ seen at round $t < t'$. To see this, note that  it depends only on the current loss vector at round $t'$ and the realization of $\cR^{b}_{t'}$, both of which are independent of any previous losses or realizations of randomness: 
\begin{align*}
\hat{l}_{t'} =
\begin{cases}
 \frac{k}{\epsilon} \cdot l_{t'}[i], \cR^{b}_{t'} = i\\
{0}^{k}, \cR^{b}_{t'} = EXPLOIT
\end{cases}
\end{align*}

Now, consider the impact that decreasing loss of a single action in one round has on the probability of success later in the game. Let the loss sequence $l^*$ be the same as the loss sequence $l$ except at round $\bar{t}$, action $a$ has decreased loss. Let the bandit feedback loss sequence $b^*$ be the loss sequence seen by MonoBandit under $l^{*}$, and let $b$ be the loss sequence seen by MonoBandit under $l$. Furthermore, let $v(l,a)$ represent a $k$-length vector which is zeroes everywhere but $l$ at index $a$.

On any exploit round $t > \bar{t}$: 
\begin{align*}
& \mathbb{P}_{\cR^{1:t}}(MonoBandit_{t}(l^{*,1:t-1},\cR^{1:t}) = a) \\ & = \mathbb{P}_{\cR^{1:t}}(MonoBandit_{t}(l^{*,1:t-1},\cR^{1:t}) = a|\cR_{b}^{t} \neq EXPLOIT)\cdot \mathbb{P}(\cR_{b}^{t} \neq EXPLOIT) \\ & + \mathbb{P}_{\cR^{1:t}}(MonoBandit_{t}(l^{*,1:t-1},\cR^{1:t}) = a|\cR_{b}^{t} = EXPLOIT)\cdot \mathbb{P}(\cR_{b}^{t} = EXPLOIT)\\
& = \frac{\epsilon}{k} + \mathbb{P}_{\cR^{1:t}}(MonoBandit_{t}(l^{*,1:t-1},\cR^{1:t}) = a|\cR^{t}_{b} = EXPLOIT)\cdot \mathbb{P}(\cR^{t}_{b} = EXPLOIT) \\ 
& = \frac{\epsilon}{k} + \mathbb{P}_{\cR^{1:t}}(\cA_{t}(\hat{l}^{1:t-1}(l^{*,1:t-1}),\cR^{f}_{1:t}) = a|\cR^{t}_{b} = EXPLOIT)\cdot (1 - \epsilon) \\
& = \frac{\epsilon}{k} + \mathbb{P}_{\cR^{1:t}}(\cA_{t}(\hat{l}^{1:t-1}(l^{1:t-1}),\cR^{f}_{1:t}) = a|\cR^{t}_{b} = EXPLOIT, \cR^{\bar{t}}_{b} \neq a)\cdot (1 - \epsilon)(1 - \frac{\epsilon}{k})\\ & + \mathbb{P}_{\cR^{1:t}}(\cA_{t}(\hat{l}^{1:t-1}(l^{*,1:t-1}),\cR^{f}_{1:t}) = a|\cR^{t}_{b} = EXPLOIT, \cR_{b}^{\bar{t}} = a)\cdot (1 - \epsilon)(\frac{\epsilon}{k}) \tag{By the fact that $\hat{l}$ is only different between $l^{*}$ and $l$ if action $a$ is sampled in round $\hat{t}$.} 
\end{align*}

We can compare this probability with that of the unaltered sequence $l$:

\begin{align*}
& \mathbb{P}_{\cR^{1:t}}(MonoBandit_{t}(b^{*,1:t-1},\cR^{1:t}) = a) - \mathbb{P}_{\cR^{1:t}}(MonoBandit_{t}(b^{1:t-1},\cR^{1:t}) = a) \\  
& = \mathbb{P}_{\cR^{1:t}}(\cA_{t}(\hat{l}^{1:t-1}(l^{*,1:t-1}),\cR^{f}_{1:t}) = a|\cR^{t}_{b} = EXPLOIT, \cR_{b}^{\bar{t}} = a)\cdot (1 - \epsilon)(\frac{\epsilon}{k}) - \\ & \mathbb{P}_{\cR^{1:t}}(\cA_{t}(\hat{l}^{1:t-1}(l^{1:t-1}),\cR^{f}_{1:t}) = a|\cR^{t}_{b} = EXPLOIT, \cR_{b}^{\bar{t}} = a)\cdot (1 - \epsilon)(\frac{\epsilon}{k}) \\
& = \mathbb{P}_{\cR^{1:t}}(\cA_{t}(\hat{l}^{1:t-1}(l^{*, 1:\bar{t}-1}) \cup \hat{l}_{\bar{t}}(l^{*}_{\bar{t}}) \cup \hat{l}^{\bar{t}+1:t-1}(l^{*, \bar{t}+1:t-1}),\cR^{f}_{1:t}) = a|\cR^{t}_{b} = EXPLOIT, \cR_{b}^{\bar{t}} = a)\cdot (1 - \epsilon)(\frac{\epsilon}{k}) - \\ & \mathbb{P}_{\cR^{1:t}}(\cA_{t}(\hat{l}^{1:t-1}(l^{1:\bar{t}-1}) \cup \hat{l}_{\bar{t}}(l_{\bar{t}}) \cup \hat{l}^{\bar{t}+1:t-1}(l^{\bar{t}+1:t-1})) = a|\cR^{t}_{b} = EXPLOIT, \cR_{b}^{\bar{t}} = a)\cdot (1 - \epsilon)(\frac{\epsilon}{k}) \\
& = \mathbb{P}_{\cR^{1:t}}(\cA_{t}(\hat{l}^{1:t-1}(l^{1:\bar{t}-1}) \cup v(\frac{k}{\epsilon}l^{*}_{\bar{t}}[a], a) \cup \hat{l}^{\bar{t}+1:t-1}(l^{\bar{t}+1:t-1}),\cR^{f}_{1:t}) = a|\cR^{t}_{b} = EXPLOIT, \cR_{b}^{\bar{t}} = a)\cdot (1 - \epsilon)(\frac{\epsilon}{k}) - \\ & \mathbb{P}_{\cR^{1:t}}(\cA_{t}(\hat{l}^{1:t-1}(l^{1:\bar{t}-1}) \cup v(\frac{k}{\epsilon}l_{\bar{t}}[a], a) \cup \hat{l}^{\bar{t}+1:t-1}(l^{\bar{t}+1:t-1})) = a|\cR^{t}_{b} = EXPLOIT, \cR_{b}^{\bar{t}} = a)\cdot (1 - \epsilon)(\frac{\epsilon}{k}) \\
& \geq 0 \tag{By the monotonicity of $\cA$}
\end{align*}

\end{proof}

\end{document}